%% file: main.tex
\documentclass[11pt]{article}
\input{preamble}

\newcommand{\deficit}{\Delta}

\newcommand{\MR}{\textsc{Metric-Repair}}
\newcommand{\IOMR}{\textsc{IO-Metric-Repair}}

\newcommand{\IS}{\textsc{IndSet}}
\newcommand{\pMR}{\textsc{Planar-Metric-Repair}}
\newcommand{\gridMR}{\textsc{Grid-Metric-Repair}}
\newcommand{\pOrientation}{\textsc{Planar-3Deg-Nonsink-Orientation}}
\newcommand{\Partition}{\textsc{Partition}}

\newcommand{\Dnew}{D_{\mathrm{new}}}
\newcommand{\Dtemp}{D_{\mathrm{temp}}}

\newcommand{\Lnew}{\Lambda_{\mathrm{new}}}

\DeclareMathOperator{\cost}{cost}
\DeclareMathOperator{\DP}{\mathbf{DP}}
\DeclareMathOperator{\OPT}{\mathbf{OPT}}
\newcommand{\EdgeTable}{\mathrm{EdgeTable}}

\newcommand{\psbk}[1]{\left[ #1 \right]_+}
\author{Asaf Etgar \and Anna C.~Gilbert \and Jamie Tucker-Foltz}
\date{\today}

\begin{document}

\begin{titlepage}
\thispagestyle{empty}
\title{Structural Tractability Frontiers for Metric Repair}


\maketitle

\begin{abstract}
    Given a graph $G$ labeled with positive distances on each edge, what is the fewest number of edge distances that must be modified for $G$ to become a metric? It is known that this \emph{metric repair} problem is $\NPh$ on general graphs, with prior work focusing on approximations and fixed-parameter tractability with respect to properties of the input distance function. In this paper, we ask what structural properties of the graph itself make metric repair tractable.
    
    On the positive side, we give pseudo-polynomial time algorithms for series-parallel graphs, and by generalization, graphs of bounded treewidth. An immediate consequence of this result is a new algorithm for the \emph{length-bounded multicut} problem, with a parameterized runtime bound in terms of the treewidth of a modestly augmented graph.
    
    Surprisingly, pseudo-polynomial time turns out to be the best one can hope for: We complement our algorithm with a proof that metric repair is weakly $\NPh$ even on graphs of pathwidth at most six. We also prove that planarity does not help either, as the problem remains strongly $\NPh$ even on grid graphs.
\end{abstract}
\end{titlepage}

\maketitle

\section{Introduction}
\input{introduction}

\section{Series-Parallel graphs}\label{sec:sp}
\input{series-parallel-updated}

\section{FPT on Graphs of Bounded Treewidth}\label{sec:tw}
\input{bdd_tw_fpt}

\section{Hardness on Graphs of Bounded Treewidth}
\input{bdd-tw-is-hard}

\section{Hardness on Planar Graphs}\label{sec:planar}
\input{planar-is-hard}

\section{Hardness on Grids}\label{sec:grid}
\input{grid-is-hard}

\input{conclusion}

\bibliographystyle{alpha}
\bibliography{bib}
\end{document}

%% file: preamble.tex
\PassOptionsToPackage{full}{textcomp}
\usepackage[margin=1in]{geometry}
\usepackage{subcaption}
\usepackage{MathematicA} 
\usepackage{newtxtext}
\usepackage{newtxmath}
\newtheorem{theorem}{Theorem}[section]
\newtheorem{lemma}[theorem]{Lemma}
\newtheorem{claim}[theorem]{Claim}

\newtheorem{corollary}[theorem]{Corollary}

\newtheorem{observation}[theorem]{Observation}

\newtheorem{problem}[theorem]{Problem}

\theoremstyle{definition}
\newtheorem{definition}[theorem]{Definition}

\let\emph\relax 
\DeclareTextFontCommand{\emph}{\itshape}

\usepackage{hyperref}
\usepackage{cleveref}
\crefname{observation}{Observation}{Observations}
\Crefname{observation}{Observation}{Observations}
\crefname{statement}{Statement}{Statements}
\Crefname{statement}{Statement}{Statements}
\usepackage{xargs} 
\usepackage[x11names]{xcolor}	
\usepackage[colorinlistoftodos,prependcaption,textsize=tiny,linecolor=DeepSkyBlue1,backgroundcolor=DeepSkyBlue1!25,bordercolor=DeepSkyBlue1]{todonotes}
\usepackage[shortlabels]{enumitem} 

\newcommandx{\Todo}[2][1=]{\todo[linecolor=DeepSkyBlue1,backgroundcolor=DeepSkyBlue1!25,bordercolor=DeepSkyBlue1,#1]{#2}}

\newcommand{\rr}{\mathbb R}

\newcommand{\tth}{^\text{th}}

\newcommand{\ipncm}[3]{\begin{figure}[H]\begin{center}\includegraphics[scale = {#1}]{#2.pdf}\caption{#3}\end{center}\end{figure}}

%% file: introduction.tex
Distances are a fundamental building block of algorithm design and data analysis.
Clustering, nearest-neighbor search, low-distortion embedding, and
visualization all take as input a matrix of pairwise distances, and many
assume, implicitly or explicitly, that the matrix is a
\emph{metric}. That is, that the distances satisfy the triangle inequality. Real
distance data rarely cooperate. Measurements are noisy, distances are estimated
from incomplete observations or are stitched together from heterogeneous sources,
and the resulting data routinely \emph{violate} the triangle inequality.
This is more than a cosmetic defect. Much of the data analysis toolbox leans on the
triangle inequality for its correctness: nearest-neighbor indices use it to
prune the search space and can silently return the wrong neighbor when it fails,
low-distortion embeddings are only meaningful relative to a metric, and the
approximation guarantees of metric clustering algorithms such as $k$-center and
$k$-median simply cease to hold.

Faced with such data, a natural question is how to perturb it as
little as possible so that it becomes a genuine metric, one that is suitable for input to standard algorithms. This is the
\emph{metric repair} problem, introduced independently by Gilbert and
Jain~\cite{gilbert2017sparse} and by Fan, Raichel, and
Van~Buskirk~\cite{fan2018metric} under the name \emph{metric violation
distance}: given distances that violate the triangle inequality, change as few
of them as possible to repair the metric. The structure of the problem was
subsequently clarified by Fan, Gilbert, Raichel, Sonthalia, and
Van~Buskirk~\cite{fan_et_al:LIPIcs.SWAT.2020.25}, whose characterization
underlies our work.

We study metric repair in its graph formulation. The input is a graph
$G=(V,E,w)$ with positive edge weights $w$ (or distances) that are \emph{metric} when
no edge is strictly longer than some other path between its endpoints;
equivalently (\Cref{thm:fan-characterization}), when the graph contains no
\emph{broken} cycle---a cycle with one edge heavier than all of its other edges
combined. The \MR{} problem seeks the smallest set of edges whose weights
can be changed so as to repair every broken cycle. There are also natural variants of this problem where repaired weights can only be decreased or repaired weights can only be increased. The
\emph{increase-only} variant \IOMR{} is particularly natural for some applications, modeling the common setting in which measured distances are
underestimates. We focus on this variant in our algorithms, but remark that both our algorithms and hardness results extend to the general problem as well.

The combinatorial core of the metric repair problem is revealed by a structural theorem of
Fan~et~al.\ (\Cref{thm:fan-characterization}): A set of edges $S$ admits a
metric repair supported on $S$ if and only if $S$ is a \emph{hitting set} for
the broken cycles, containing at least one edge from each broken cycle. Moreover, in the decrease-only variant, we must hit the heavy edges, and in the increase-only variant, we must hit the light edges. This
observation reveals multiple aspects of this problem at once. If weights may only be
decreased, the problem is solvable in polynomial time, as each set we must intersect has cardinality one. As soon as
increases are permitted, however, metric repair is a minimum hitting
set problem over the (possibly exponentially large) family of broken cycles. Perhaps unsurprisingly, both \MR{} and \IOMR{} are $\NP$-hard on general
graphs~\cite{fan_et_al:LIPIcs.SWAT.2020.25}.

Beyond data cleaning, metric repair has a purely combinatorial motivation: it
generalizes the \emph{length-bounded multicut} problem. In length-bounded multicut one is
given a graph, terminal pairs $(s_i,t_i)_{i=1}^{k}$, and a bound $L$, and must
delete as few edges as possible so that no $s_i$--$t_i$ path of length at most
$L$ survives~\cite{dvovrak2018parameterized}. Add to the graph an edge
$s_it_i$ of weight $L$ for each terminal pair. Such an edge lies on a broken
cycle precisely when a short $s_i$--$t_i$ path exists, so a minimum metric
repair that is allowed to modify only the original edges is exactly a minimum
length-bounded cut. Length-bounded cuts are $\NP$-hard and have been studied
extensively from the parameterized and approximation viewpoints; metric repair
sits squarely above them. Our results can thus be read as a fine-grained map of
where this added generality remains tractable.
\newpage
\begin{quote}
Since metric repair is intractable in general, it is natural to ask on which
graph classes it becomes easy: \emph{Where is the boundary of tractability?}
\end{quote}

\begin{table}[t]
\centering
\renewcommand{\arraystretch}{1.2}
\begin{tabular}{lll}
\hline
\textbf{Graph class} & \textbf{Complexity} & \textbf{Reference}\\
\hline
Series-parallel & $O(W^{4}\cdot |E|)$                   & \Cref{sec:sp}\\

Treewidth $\le r$         & $W^{O(r^{2})}\cdot(|V| + |E|)$ \,\,\,(FPT) & \Cref{sec:tw}\\ \hline
Pathwidth $\le 6$, unbounded $W$ & (weakly) $\NP$-hard           & \Cref{sec:bdd-tw-hard}\\
\hline
General graphs            & $\NP$-hard                       & \cite{fan_et_al:LIPIcs.SWAT.2020.25}\\
Planar graphs             & $\NP$-hard (bounded integer weights) & \Cref{sec:planar}\\
Grid graphs               & $\NP$-hard                       & \Cref{sec:grid}\\
\hline
\end{tabular}
\caption{Summary of results. Hardness results are stated for \MR{} (arbitrary
weight changes); the algorithmic results are for the increase-only variant
\IOMR{}, with integer weights in $[W]$.}
\label{tab:results}
\end{table}

\subsection{Our results}

Our results fall into two complementary areas. On the algorithmic side, increase-only metric repair becomes tractable once \emph{both} the graph structure and the numerical weights
are bounded. The algorithms we present extend to the general metric-repair problem. We only present the increase-only algorithms, and discuss the generalization after each algorithm. On the hardness side, we show that neither restriction alone makes the problem tractable:
bounding the structure while leaving the weights free, or bounding the weights while leaving the structure planar, each keeps the problem $\NP$-hard. The boundary of tractability sits somewhere between pathwidth 6 and treewidth 2, and series-parallel graphs with unbounded weights. \Cref{tab:results} summarizes the landscape. 

\paragraph{Algorithms.}
Our main algorithmic result is that bounded treewidth, together with bounded
integer weights, makes increase-only metric repair fixed-parameter tractable.
\begin{theorem}[\Cref{sec:tw}]
Increase-only metric repair on graphs of treewidth at most $r$, with integer
weights in $[W]$, can be solved in time $W^{O(r^{2})}\cdot(|V|+|E|)$ given a nice
tree decomposition of width $r$. In particular, the problem is
fixed-parameter tractable parameterized by $(r,W)$.
\end{theorem}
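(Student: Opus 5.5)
We plan a dynamic program over the nice tree decomposition that records, for each bag, the repaired shortest-path distances among the bag's vertices. The starting point is \Cref{thm:fan-characterization} in its increase-only form. For integer weights in $[W]$, a set $S$ of edges admits an increase-only repair if and only if every edge $e\notin S$ satisfies $w(e)\le d_{G'}(u,v)$, where the edges of $S$ are raised. We may raise each edge of $S$ to $W$ without loss of generality: raising an edge only lengthens paths, and $W$ is an upper bound for every unmodified edge. Equivalently, the repaired graph is metric precisely when every unmodified edge $uv$ is a shortest $u$--$v$ path in the repaired graph. All relevant distances can be truncated at $W$, since any path of length at least $W$ is harmless. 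So the state space per bag is a table $\tau:\binom{B}{2}\to\{1,\dots,W\}$ giving the guessed truncated distance $\min(W,d(u,v))$ between each pair of bag vertices in the final repaired graph. This gives $W^{O(r^2)}$ states per bag.

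For each node $t$ with bag $B_t$ and each table $\tau$, we define $\DP[t,\tau]$ as the minimum number of edges, among those introduced in the subtree of $t$, that must be raised to $W$, subject to two conditions. First, every pair in $B_t$ must have distance at least $\tau$ within the processed subgraph augmented with "virtual" edges of length $\tau$ on bag pairs; these virtual edges stand for paths through the part of the graph not yet seen. Second, every unmodified processed edge $uv$ must satisfy $w(uv)\le$ its distance in this augmented graph. We require $\tau$ to be a metric on $B_t$ (truncated at $W$). The transitions are as follows. At a \emph{leaf}, the value is trivial. At an \emph{introduce vertex} $v$ node, we guess the new row $\tau(v,\cdot)$, consistent with the triangle inequality. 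At an \emph{introduce edge} $uv$ node, we either pay $1$ and set the edge to $W$, or keep it, requiring $w(uv)\le\tau(u,v)$ and $\tau(u,v)\le w(uv)$, which forces $\tau(u,v)=w(uv)$. At a \emph{forget} node, we take the minimum over extensions. At a \emph{join} node, we add the children's values for the same $\tau$, relying on the fact that every path between bag vertices decomposes at bag vertices into segments lying in a single side. At the root, we take the minimum over all tables. Each node costs $W^{O(r^2)}$ work, and there are $O(|V|+|E|)$ nodes, giving the claimed bound.

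The main obstacle is correctness: showing that the "guessed distance" semantics is sound and complete. The condition "every path is at least $\tau$" is checked locally, so the key lemma is that the repaired graph is metric if and only if there is a global assignment of tables, namely the true truncated distances restricted to each bag, that passes every local check. For completeness, take the true distances of an optimal repair: each local inequality holds because a path through the processed part is a genuine path. For soundness, we must show that any $u$--$v$ path that is shorter than $w(uv)$ for a kept edge violates some local check. We will argue this by walking the path and cutting it at bag vertices, using the separator property of tree decompositions. By induction on the tree, we show that each subpath whose internal vertices lie strictly below a node has length at least the guessed $\tau$ at that node. This yields $d\ge\tau$ globally, and since kept edges force $\tau(u,v)=w(uv)$, we get $w(uv)\le d(u,v)$. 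Truncation at $W$ needs a small check: a path segment of length at least $W$ can be replaced by the value $W$ without affecting any comparison against weights at most $W$. We expect this induction, especially at join and forget nodes, to be where care is needed.
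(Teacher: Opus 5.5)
Your DP never constrains \emph{raised} edges, and that breaks the increase-only requirement. You claim that $S$ admits an increase-only repair iff every $e\notin S$ satisfies $w(e)\le d_{G'}(u,v)$, with the edges of $S$ raised to $W$. This is false. A path shorter than some $w(e)\le W$ cannot use a raised edge, so your condition holds exactly when every broken cycle meets $S$ at either its heavy edge or a light edge. That characterizes plain \MR{}, not \IOMR{}.

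What is missing is the case where a raised edge is itself the heavy edge of a broken cycle whose light edges are all kept. Raising that edge cannot help. For example, take an edge $uv$ of weight $3$ and $k\ge 2$ internally disjoint paths $u x_i v$, each with two edges of weight $1$. \IOMR{} needs $k$ modifications. Yet $S=\{uv\}$ passes every check in your DP: each kept edge of weight $1$ is a shortest path, and the ``pay $1$'' branch imposes nothing. So your DP returns $1$, which is the \MR{} optimum. For the same reason, your sentence ``the repaired graph is metric precisely when every unmodified edge is a shortest path'' fails here, since $uv$ at weight $W\ge 3$ is longer than the $2$-path. Your soundness plan inherits the error, because it only considers paths shorter than $w(uv)$ for \emph{kept} edges.

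The fix is to require $w(e)\le d_{G'}(u,v)$ for \emph{every} edge $e=uv$, raised or not. Then on $S$ take $\rho(e)=d_{G'}(u,v)$ rather than $W$. This $\rho$ is integral, lies in $[w(e),W]$, and is metric because it agrees with $d_{G'}$ on every edge. In the DP, this means adding the test $\tau(u,v)\ge w(uv)$ to the pay-$1$ branch at introduce-edge nodes. Your soundness argument then yields $w(uv)\le\tau(u,v)\le d(u,v)$ for raised edges too.

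With that change, your route is a valid and somewhat leaner alternative to the paper's:
\begin{itemize}
\item The paper carries the exact truncated distances of the processed subgraph plus a demand matrix $\Lambda$. It projects demands and tests $D\ge\Lambda$ at introduce-edge and join nodes, guesses each repaired length in $\{w(e),\dots,W\}$, and joins all pairs of child states.
\item You guess final distances on each bag and replace demands by virtual edges. You need only a keep/raise choice per edge, and a join combines only identical tables.
\item The price is guessing a row at introduce-vertex nodes, and proving the per-node semantics carefully at forget and join nodes. There you need that $\tau$ is a metric on the bag and that the bag separates.
\end{itemize}
Both give $W^{O(r^2)}$ states per bag and the claimed running time.
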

\noindent
This algorithm also solves length-bounded multicut, improving on~\cite{dvovrak2018parameterized}:
adding a weighted edge between each terminal pair $s_i,t_i$ reduces the problem
to \IOMR{}. Where~\cite{dvovrak2018parameterized} pay an additive $k$ (the number of terminals) in the exponent by design, our reduction pays only
$\tw(G')$ for the augmented graph $G'$. This is never worse than $\tw(G)+k$, and
collapses to $O(\tw(G))$ when the pairs are joined by short, low-congestion
paths, eliminating the dependence on $k$ entirely for local pairs. 

\begin{corollary}[\Cref{sec:tw}]
    An instance $(G=(V,E),L)$ of length-bounded multicut with $k$ terminal pairs can be solved in time $L^{O(r^2)}\cdot(|V|+|E|+k)$, where $r=\tw(G')$ for the augmented graph $G'=(V,E\cup\{s_it_i:i\in[k]\})$.
\end{corollary}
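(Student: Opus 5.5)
The plan is to reduce length-bounded multicut to a constrained form of \IOMR{} and then apply the treewidth theorem stated just above. Given $(G=(V,E),L)$ with terminal pairs $(s_i,t_i)_{i\in[k]}$, we would first handle a preprocessing nuance. Length-bounded multicut uses unit (or integer) edge lengths. After adding the terminal edges, every original edge gets weight $1$ and every new edge $s_it_i$ gets weight $L$; parallel terminal pairs are merged into a single edge. If some $s_it_i$ is already an edge of $G$ with $L\ge 1$, it lies on a short path of length one. We then subdivide or keep a separate copy, taking care that subdivision does not increase treewidth beyond a constant. The resulting graph is $G'$, whose weights lie in $[L]$, so $W=L$.

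Next we would verify the correspondence. A cycle in $G'$ is broken exactly when one edge is strictly heavier than the sum of the rest. Since original edges have weight $1$, a broken cycle must consist of a terminal edge $s_it_i$ together with an $s_i$--$t_i$ path in $G$ of length $<L$. We match the paper's convention of length ``at most $L$'' by setting the terminal weight to $L+1$; this changes only $W$ by one. Cycles containing two terminal edges cannot be broken, because the heavier terminal edge has weight at most $L+1$ while the rest includes another terminal edge of weight $L+1$ plus at least one more edge.

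By \Cref{thm:fan-characterization}, in the increase-only variant a repair supported on $S$ exists iff $S$ contains a light edge of every broken cycle. The light edges of such a cycle are exactly the edges of the short path. So feasible increase-only repairs touching only original edges are exactly length-bounded multicuts of the same size.

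The main obstacle is that \IOMR{} might choose to modify a terminal edge, which is meaningless for multicut. Terminal edges are never light edges of a broken cycle, since they are always the heavy edge. A minimal hitting set of light edges therefore never contains them, so an optimal increase-only repair automatically uses only original edges. If the DP of \Cref{sec:tw} is phrased as choosing a set of modified edges, we would alternatively forbid terminal edges from being selected in the edge tables, which costs nothing asymptotically. Finally, we compute (or assume given, as in the theorem) a nice tree decomposition of $G'$ of width $O(r)$, for instance via Bodlaender's algorithm or a constant-factor approximation in $2^{O(r)}$ linear time. We then run the algorithm from the theorem on $G'$, which has $|V|$ vertices and at most $|E|+k$ edges. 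This gives time $(L+1)^{O(r^2)}(|V|+|E|+k)=L^{O(r^2)}(|V|+|E|+k)$.
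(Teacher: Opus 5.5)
Your proposal is correct and is essentially the paper's proof. It weights original edges $1$ and chords $L+1$ and takes $W=L+1$. It then observes that every broken cycle contains exactly one chord, as its heavy edge, so every light hitting set lies in $E$, and invokes \Cref{thm:tw-algorithm}.

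The extra corner-case handling goes beyond the paper, but of the options you list, keep only the safe ones:
\begin{itemize}
\item For a pair with $s_it_i\in E$, do not subdivide. Subdividing either the chord or the original edge changes path lengths or the heavy/light structure; for example, a piece of a subdivided chord becomes a cheap light edge. Instead keep a parallel copy, or simply put that forced edge into the solution and delete it.
\item For the tree decomposition, do not use Bodlaender's exact algorithm, whose $2^{O(r^3)}$ factor does not fit in $L^{O(r^2)}$. Use a $2^{O(r)}$-time constant-factor approximation, or assume the decomposition is given, as the paper implicitly does.
\end{itemize}
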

\noindent For series-parallel graphs (the graphs of treewidth two) we give a
cleaner and faster algorithm, whose exposition also serves as a warm-up for the
general treewidth result.
\begin{theorem}[\Cref{sec:sp}]
Increase-only metric repair on series-parallel graphs, with integer weights in
$[W]$, can be solved in time $O(W^{4}\cdot |E|)$.
\end{theorem}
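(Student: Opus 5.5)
The plan is a dynamic program over the series-parallel decomposition tree of $G$. Each node of the tree is a two-terminal subgraph $H$ with terminals $s,t$, built from single edges by series and parallel composition. By \Cref{thm:fan-characterization}, an edge set $S$ is a valid increase-only repair exactly when every broken cycle contains an edge of $S$ among its light edges. Equivalently, after raising the edges of $S$ to suitable values, no edge is longer than the shortest path between its endpoints avoiding it.

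For a fixed subgraph $H$, the only information that matters to the outside world is how $H$ behaves as an $s$--$t$ connector. Two quantities capture this.
\begin{itemize}
\item[(a)] $\ell$, the length of the shortest $s$--$t$ path inside $H$ after the repair. It matters because outside edges may rely on it, and we cap it at $W$, since any value $\ge W$ can never break an outside edge.
\item[(b)] $h$, the largest demand that $H$ must have met by the outside. That is, $h$ is the minimum over $s$--$t$ routes that some edge of $H$ is entitled to use through the outside, which enters as a required upper bound on the outside $s$--$t$ distance.
\end{itemize}
Both values lie in $\{1,\dots,W\}\cup\{\infty\}$. The table $\DP_H(\ell,h)$ stores the minimum number of modified edges in $H$ such that the repaired shortest $s$--$t$ distance in $H$ is at least $\ell$, and every edge of $H$ is non-broken provided the outside offers an $s$--$t$ path of length at most $h$. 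A key lemma to state first: in increase-only repair we may assume every modified edge is raised to a value in $[W]$. Raising an edge to more than the weight of any edge whose cycles it lies on is never needed, so setting it to $W$ (or $\infty$ in spirit, capped) suffices. This justifies the bounded table size.

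The DP proceeds as follows.
\begin{itemize}
\item \textbf{Leaf (single edge $e=st$ of weight $w$).} If we keep $e$ unchanged, we get cost $0$ with $\ell=w$, valid when $h\ge w$ is not a violation, i.e.\ when the outside path has length at least $w$. If we modify $e$, we get cost $1$, and the new weight can be anything, so $\ell$ and $h$ are free.
\item \textbf{Series composition $H=H_1\cdot H_2$ at middle vertex $m$.} The shortest distance adds: $\ell=\ell_1+\ell_2$, capped. An outside path for $H_1$ consists of the outside path of $H$ followed by $H_2$ reversed, so we set $h_1=h+\ell_2$ and symmetrically $h_2=h+\ell_1$.
\item \textbf{Parallel composition.} We take $\ell=\min(\ell_1,\ell_2)$. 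The outside path for $H_1$ is the better of the true outside and $H_2$, so $h_1=\min(h,\ell_2)$ and symmetrically for $h_2$.
\end{itemize}
Each combination step enumerates $(\ell_1,\ell_2)$ together with the needed $h$ values. That gives $O(W^2)$ child pairs for each of $O(W^2)$ parent states, hence $O(W^4)$ work per node and $O(W^4|E|)$ overall. At the root, the virtual outside path is $h=\infty$, and the answer is $\min_\ell \DP_{\mathrm{root}}(\ell,\infty)$.

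The main obstacle is the correctness of this interface abstraction: showing that $(\ell,h)$ captures all interactions between a component and its complement. I would argue this via \Cref{thm:fan-characterization}. Any cycle through an edge $e$ of $H$ either stays within $H$, where it is handled recursively, or leaves $H$ exactly through $s$ and $t$. In the latter case it is an $s$--$t$ path inside $H$ through $e$ concatenated with an outside $s$--$t$ path, and only the shortest outside path matters. A cycle is broken only through its heavy edge. So we must check that a monotonicity argument lets the DP store "at least $\ell$ / at most $h$" rather than exact values. This monotonicity is what makes the min/plus combinations sound.

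We also need to take care that a cycle may use an outside path which itself passes back through the same composition. Series-parallel structure prevents this, because the separation pair $\{s,t\}$ isolates $H$. Finally, the cap at $W$ must be preserved under addition, which holds because all weights are at most $W$.
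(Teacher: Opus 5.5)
Your series and parallel recurrences are correct once $h$ is read as the outside $s$--$t$ distance. The gap is the leaf case for a modified edge, which is the only place the increase-only constraint enters your program. You give a modified edge $e$ cost $1$ with $\ell$ and $h$ ``free''. But if $e$ is raised to $x$, then $\ell=x$, increase-only forces $w(e)<x\le W$, and $e$ itself is heavy unless $x\le h$. With a free leaf, a modified edge behaves like a deleted one. Your table then computes the smallest $S$ with $G\setminus S$ metric under $w$, which is the unrestricted hitting-set optimum; it never enforces that each broken cycle is hit at a \emph{light} edge.

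Concretely, put an edge $ab$ of weight $3$ in parallel with two $a$--$b$ paths of two unit edges each, with $W=3$. Both triangles are broken with $ab$ heavy, and their light edges are disjoint, so the increase-only optimum is $2$. Your program instead modifies $ab$ alone and returns $1$. The leaf must range over $x\in\{w(e)+1,\dots,W\}$ and record state $(x,h)$ at cost $1$ only when $x\le h$, together with $(w(e),h)$ at cost $0$ when $w(e)\le h$.

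Two further corrections. First, your invariant points the wrong way. A shorter outside route can only help make an edge of $H$ heavy, so the guarantee must be ``no edge of $H$ is heavy whenever every outside $s$--$t$ route has length at least $h$'', not ``at most $h$''. Likewise, $h$ is a lower bound on the outside distance, not a ``required upper bound''. Your transitions already use $h$ this way. The relaxation that makes $h_1=h+\ell_2$ and $h_1=\min(h,\ell_2)$ sound when $\ell_2$ is only a lower bound is ``at least $\ell$, at least $h$''.

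Second, the justification of your $[W]$-lemma is false: raising a hitting-set edge to $W$ can create a new broken cycle. Take $ab=3$ and $ac=cb=ad=dc=1$. Then $\{ac\}$ is an optimal light hitting set, yet $\rho(ac)=3$ breaks the triangle $acd$. The paper simply restricts repaired lengths to $[W]$. If you want this as a lemma, set $w'=W$ on $S$ and $w'=w$ elsewhere, and let $\rho(uv)$ be the $w'$-shortest-path distance from $u$ to $v$. This $\rho$ is metric, agrees with $w$ off $S$, and lies in $[w(e),W]$ on $S$, in both cases exactly because $S$ is a light hitting set.

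With these fixes your route is a valid alternative to the paper's. The paper synthesizes a demand $\lambda$ bottom-up and tests $d\ge\lambda$ at parallel nodes. You instead index tables by the inherited outside distance and check each edge against its own context at the leaf. Since the children's contexts are determined by $(h,\ell_1,\ell_2)$, each of your tables can even be filled in $O(W^3)$ time.
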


\paragraph{Hardness.}
Our hardness results show that both ingredients above are essential and that
planarity is no substitute for either. First, bounded treewidth alone does not
suffice: the dependence on the maximum weight $W$ cannot be removed.
\begin{theorem}[\Cref{sec:bdd-tw-hard}]
\MR{} and \IOMR{} are both weakly $\NP$-hard, even on graphs of pathwidth at
most $6$.
\end{theorem}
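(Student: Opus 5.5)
We reduce from \Partition{} (or, equivalently, subset sum), which is weakly $\NP$-hard. An instance consists of positive integers $a_1,\dots,a_n$ with $\sum_i a_i = 2T$, and asks whether some subset sums to exactly $T$. The plan is to build a graph of constant pathwidth whose edge weights are polynomial in the $a_i$ (hence exponential in the bit length). By \Cref{thm:fan-characterization}, minimum repairs correspond to minimum hitting sets of broken cycles, and for \IOMR{} only light edges may be hit. We want the optimum to reach a threshold $k$ (say $k=n$) if and only if a balanced partition exists.

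\textbf{Construction.} We chain $n$ gadgets in series along a ``spine'' between two terminals $s$ and $t$. Gadget $i$ offers two parallel routes between consecutive spine vertices $v_{i-1}$ and $v_i$.
\begin{itemize}
\item An \emph{upper} route has length $M + a_i$.
\item A \emph{lower} route has length $M$.
\end{itemize}
Here $M$ is a large offset. Each gadget is augmented with a small rigid structure, a few heavy ``anchor'' edges forming broken triangles. These force exactly one cheap edge of each gadget into any minimum hitting set, and whichever edge is chosen effectively disables one of the two routes for short paths. We then add two long edges $st$:
\begin{itemize}
\item one of weight $nM + T + \tfrac12$ (scaled to integers), whose broken cycles force every surviving short path to have length at least roughly $nM+T$;
\item a complementary constraint forcing the selected routes to have length at most roughly $nM+T$.
\end{itemize}
The second constraint cannot be an $st$ edge directly, since a triangle inequality only bounds lengths from one side. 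So we mirror the spine: a second copy encodes the complementary choice ($a_i$ on the lower route instead), and its own $st$-type edge enforces the reverse inequality. The two copies are coupled gadget-by-gadget through the anchor structure, so that the choice in copy 1 determines the choice in copy 2.

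\textbf{Pathwidth.} Every vertex apart from $s$ and $t$ (and their mirrored counterparts) lies in a gadget of constant size that touches only consecutive spine vertices. A path decomposition therefore sweeps $i=1,\dots,n$ with bags containing $\{s,t,s',t'\}$, $v_{i-1}, v_i$, their mirrors, and a constant number of gadget vertices. This keeps the width at most $6$ after careful accounting. We would design the gadget to fit this budget, e.g., the gadget on copy 1 and copy 2 together uses at most three internal vertices beyond the shared terminals.

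\textbf{Correctness and the main obstacle.}
\begin{itemize}
\item \emph{Completeness.} From a balanced partition, we exhibit a hitting set of size $n$ plus the forced anchor edges. The unique short surviving $s$–$t$ path in each copy then has length exactly $nM+T$, so no long edge closes a broken cycle.
\item \emph{Soundness.} We must argue that any hitting set of that size picks exactly one route-disabling edge per gadget. This follows from disjoint forced broken triangles and counting. The surviving paths then encode a subset whose sum satisfies both inequalities, hence equals $T$.
\end{itemize}
The hard step is soundness against ``cheating'' cycles: broken cycles that mix routes across gadgets, or that use anchor edges or pass between the two copies. Each such cycle must either already be hit by the forced edges or not be broken at all. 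I would handle this by taking $M \gg \sum a_i$, so that path lengths are governed by how many spine segments a path uses, and any path that deviates from the spine is longer by at least $M$. The same argument adapts to \MR{} by making anchor edges appear in many disjoint broken cycles, so that modifying them is never cheaper than the intended repair.
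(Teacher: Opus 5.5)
\textbf{The coupling between the two copies is missing.} Your architecture matches the paper's. You reduce from \Partition{} and build two mirrored chains with complementary encodings. Each chain gets one long edge, so each long edge gives only a \emph{lower} bound on its surviving short path, and the fixed total $\sum_i a_i=2T$ turns the two lower bounds into equality. The gap is the coupling, which you assert (``coupled gadget-by-gadget through the anchor structure'') but never construct, although it is the whole source of hardness.

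Feasibility of a (light) hitting set is upward closed, so the copies can only be tied together through budget tightness. Without a concrete mechanism, nothing stops a solution from disabling the $M$-only route in every gadget of both copies. Both surviving paths then have length $nM+2T$, both long-edge constraints hold, and every instance becomes a YES instance.

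\textbf{How the paper's gadget enforces it.} Each gadget contains two disjoint broken triangles, one per copy. So there are $2n$ disjoint forced cycles and the budget is exactly $2n$, not ``$n$ plus anchor edges''. Each gadget also has two broken $4$-cycles sharing the weight-$6$ cross edge $a_{i-1}^2a_i^1$, one through the $b$-vertices and one through the $c$-vertices. At budget $2n$ no edge outside the triangles can be used, and no triangle edge lies on both $4$-cycles. Hence each gadget must take exactly $\{a_i^1c_i^1,a_{i-1}^2b_i^2\}$ or $\{a_i^1b_i^1,a_{i-1}^2c_i^2\}$, which is the complementary choice.

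Because the heavy triangle edges lie on neither $4$-cycle, the same counting forces light edges. So the instance works for \MR{} unchanged, without your ``many disjoint broken cycles'' device. Your cheating-cycle argument also cannot be run until the cross-copy edges exist. In the paper, the long-edge endpoints are separated by $n$ weight-$3$ crossings plus $n$ edges of weight at least $1$. The slack $\tfrac12$ is therefore smaller than any edge weight, so no detour, including through cross-copy edges, is affordable.

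\textbf{Long-edge weight.} The weight $nM+T+\tfrac12$ is wrong. With integer route lengths, that edge closes no broken cycle only if the surviving path has length at least $nM+T+1$. So in your completeness argument, the path of length exactly $nM+T$ from a balanced partition \emph{does} close a broken cycle. The two constraints would then require sums of at least $T+1$ on both sides of the partition, which is impossible since they total $2T$. The weight must be exactly $nM+T$; the paper's $4n+\tfrac12$ is this with $M=4$ and $T=\tfrac12$.

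\textbf{Pathwidth accounting.} Bags containing $s,t,s',t'$, both $v_{i-1},v_i$ and their mirrors already hold $8$ vertices, i.e.\ width at least $7$. To reach width $6$, keep only one endpoint of each long edge in every bag. Then sweep each gadget with bags of at most $5$ gadget vertices, which has to be checked against the concrete gadget; it is possible for the paper's.
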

\noindent Since pathwidth bounds treewidth, this rules out any algorithm running
in time $f(\tw)\cdot\poly(|V|)$ independent of $W$, and so the pseudo-polynomial
$W$-dependence of our algorithms is unavoidable unless $\P=\NP$. Second,
restricting to a rigid \emph{planar} structure offers no relief either, even
when the weights are bounded by a constant. The reduction is from independent
set on cubic planar graphs.
\begin{theorem}[\Cref{thmPlanarHardness}]
\pMR{} is strongly $\NP$-hard, even when all edge weights are integers bounded by a
constant.
\end{theorem}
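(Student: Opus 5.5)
We reduce from \textsc{Independent Set} on planar graphs of maximum degree three, which is strongly $\NP$-hard. The idea is to turn the choice ``which vertices are in the independent set'' into the choice of which light edges to repair, using the characterization of \Cref{thm:fan-characterization}: a set $S$ supports a repair if and only if $S$ meets every broken cycle. For \IOMR{}, $S$ must meet every broken cycle in a light edge. Starting from a cubic planar graph $H$ and a target $k$, we build a planar weighted graph $G$ with all weights in a constant range such that $G$ has a repair of size $c_1|V(H)| + c_2|E(H)| - k$ if and only if $H$ has an independent set of size $k$.

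\textbf{Vertex gadget.} For each $v\in V(H)$ we place a short even cycle whose edges are grouped into alternating ``in'' and ``out'' classes. Each class is hit by its own family of small broken cycles: a triangle-like face with one heavy edge of weight $3$ and light edges of weight $1$. Hitting every broken cycle of the gadget then forces at least $c_1$ repaired edges. Exactly two minimum hitting sets exist, corresponding to $v\in I$ and $v\notin I$. The gadget has three ``ports'' on its boundary, one for each incident edge of $H$, arranged in the planar rotation order. At a port, the state $v\in I$ leaves a particular light edge unrepaired.

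\textbf{Edge gadget.} For each $uv\in E(H)$, a planar path of gadget cells joins the ports of $u$ and $v$. It contains one extra broken cycle that threads through the port edges of both endpoints. This cycle is already hit at cost zero unless both endpoints are in state ``in''; in that case one additional repair is needed. Each edge gadget therefore costs $c_2$ in the base case and $c_2+1$ when both endpoints are chosen.

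Putting the gadgets together, a minimum repair costs $c_1|V|+c_2|E| + (\text{number of violated edges}) - |I|$ for an appropriately charged vertex state. Here we make the ``in'' state of a vertex gadget one edge cheaper than the ``out'' state. Since any configuration with a violated edge can be locally converted by flipping one endpoint to ``out'' without increasing the cost, the optimum equals $c_1|V|+c_2|E|-\alpha(H)$. The drawing is planar because we follow a planar embedding of $H$. All weights are constants, which gives strong $\NP$-hardness. For general \MR{}, we make heavy edges costly to touch by duplicating them: we replace each heavy edge by many parallel heavy paths, each subdivided so that the subdivided edges must all be hit, with weights still constant. Decreasing heavy edges is then never beneficial, and the same bound applies.

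The main obstacle is soundness: we must show that no unintended broken cycle exists and no cheap non-canonical hitting set exists. Long cycles that pass through several gadgets could be broken, or could be hit more cheaply by mixing states. To control long cycles, we will choose the weights so that every heavy edge has weight $3$ while every path between its endpoints other than the intended local ones has length at least $3$. Adding buffer subdivisions of weight $1$ in the connectors achieves this, so the broken cycles are exactly the designed local ones. To rule out cheap non-canonical hitting sets, we will do a local exchange argument gadget by gadget.
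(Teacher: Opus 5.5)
Your proposal has a workable counting template: a one-unit discount for ``in'' vertices, a one-unit penalty per edge with both endpoints ``in'', and the flip argument giving $c_1|V(H)|+c_2|E(H)|-\alpha(H)$. But there is a genuine gap. No gadget is ever specified, and the substance of the theorem is deferred to ``a local exchange argument''. That substance is constant-weight planar gadgets whose cheapest hitting sets are exactly the canonical ones, with no other broken cycles.

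The parts you do specify are inconsistent with one another:
\begin{enumerate}
\item A vertex gadget cannot both have exactly two minimum hitting sets and have ``in'' one edge cheaper than ``out''. You need an asymmetric gadget whose more expensive state repairs all three port edges, and nothing in the sketch realizes this.
\item If the only weights are the ones you name (heavy edges $3$, light and buffer edges $1$), a broken cycle can contain at most two edges besides its heavy edge. So every broken cycle is a $(3,1,1)$ triangle. A broken cycle threading through the port edges of both endpoints along a path of cells with weight-$1$ buffers therefore cannot exist, and the penalty mechanism of your edge gadget disappears.
\item Your device for \MR{} fails. Subdividing a heavy edge into a path destroys every broken cycle in which it was the heavy edge, since brokenness needs a single edge outweighing all the others. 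Parallel edges are not available in a simple graph.
\end{enumerate}

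The paper supplies the missing concrete mechanism.
\begin{enumerate}
\item It recasts independent set as finding an orientation with few non-sinks. The per-vertex charge becomes ``is $v$ a non-sink'', and the port repairs are paid by edge gadgets rather than vertex gadgets.
\item Each edge gadget is a chain of three triangles, each with two weight-$1$ edges and one weight-$4$ edge, so any hitting set needs at least two of its edges. A two-edge hit modifies at most one terminal. That terminal's triangle keeps its unmodified $1$ and $4$ edges, so the terminal's repaired weight is at most $5$.
\item Each vertex gadget is a $5$-cycle with a weight-$13$ edge, a weight-$1$ edge and the three terminals. It is fixed for free exactly when all three terminals are raised ($13\le 1+4+4+4$). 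Otherwise it needs one more repair, because $1+1+5+5<13$.
\item Soundness is a case analysis on the number of modifications inside each edge gadget. It normalizes each gadget to exactly two modifications with at most one modified terminal, and spends the saved modifications to re-repair the affected vertex gadgets.
\item This lower-bound argument applies to arbitrary $\rho$, so it covers \MR{} directly. \IOMR{} follows because the forward-direction repair is increase-only, so no device protecting heavy edges is needed.
\end{enumerate}
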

\noindent We push this all the way to the most constrained planar class, the
subgraphs of the integer grid.
\begin{theorem}[\Cref{sec:grid}]
\gridMR{} is strongly $\NP$-hard.
\end{theorem}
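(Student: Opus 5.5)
The plan is a polynomial reduction from \pMR{} (strongly $\NP$-hard with constant-bounded integer weights by \Cref{thmPlanarHardness}) to \gridMR{}. The weights stay polynomially bounded throughout, so strong hardness carries over. Concretely, I would take the hard planar instances produced there, which come from independent set on cubic planar graphs and so have maximum degree at most three (or can be arranged to), and embed them in the grid via a standard orthogonal grid embedding. Each vertex goes to a grid point and each edge to a rectilinear path of grid edges. The embedding can be taken to have polynomial area and be computable in polynomial time, e.g.\ by Valiant's or Tamassia's algorithm for planar graphs of degree at most four.

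The key step is replacing each original edge $e$ of weight $w(e)$ by a path $P_e$ of $\ell_e$ grid edges so that broken cycles and repair costs are preserved. I would first scale all weights by a large factor $M$, polynomial in $|V|$, so that every edge becomes long compared to the routing lengths. Then I would subdivide $e$ into its $\ell_e$ grid edges with weights summing to $M w(e)$, making the path lengths equal where slack allows. A weight-preserving subdivision does not change the shortest-path metric among original vertices. Its cycles correspond to cycles of $G$, and a cycle is broken iff its single heavy edge outweighs the rest. The subtlety is that in the subdivided graph the heavy ``edge'' is one grid segment, not the whole path. A subdivided path can therefore only be on the heavy side if we concentrate weight on one segment. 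So I would instead put essentially all of $Mw(e)$ on a single designated segment of $P_e$ and give the others weight $1$. Scaling by $M$ larger than the total number of routing segments ensures the unit weights only perturb cycle sums by less than $M$. Since the original weights are integers, strict and non-strict inequalities are then preserved: a cycle in $G$ is broken iff its image is broken. Unit-weight segments cannot be heavy edges of any broken cycle, because every cycle through a unit segment also contains other segments of total weight at least $1$.

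Next, correspondence of solutions. A repair of $G$ modifying edge set $S$ maps to modifying the designated segments of $S$. By \Cref{thm:fan-characterization}, hitting sets of broken cycles correspond one-to-one, so optimal costs are equal. Conversely, any hitting set in the grid instance can be projected by replacing a hit segment of $P_e$ with $e$. This projection does not increase the count and still hits every broken cycle of $G$, because broken cycles in the grid are exactly images of broken cycles of $G$ (grid cycles are unions of full paths $P_e$).

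The hard part is that grid graphs, read as induced subgraphs of the grid, would add extra edges between adjacent grid points of different routed paths. To avoid this, I would refine the grid by a factor of $2$ or $3$ so that distinct paths are never at unit distance except where they meet at a shared vertex. If the paper's notion is induced grid graphs, those stray edges must be absent, and the refinement is what guarantees it. The delicate point is whether the weight-$1$ padding keeps all cycle inequalities exact after scaling. I would check this with an integrality argument: a strict inequality in $G$ has slack at least $1$, hence at least $M$ after scaling, which dominates all padding.
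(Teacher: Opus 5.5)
There is a genuine gap at the first step: the premise that the hard planar instances have maximum degree at most three is false. The source graph of \pOrientation{} has degree at most three, but the metric-repair instance built from it in \Cref{thmPlanarHardness} has maximum degree up to $6$, as that theorem states. The vertices of each $5$-cycle vertex gadget are also incident to the edge-gadget triangles glued onto its terminal edges. A vertex of degree greater than $4$ cannot be a single grid point, so Valiant's and Tamassia's orthogonal embeddings do not apply. The paper also proves no hardness for bounded-weight planar \MR{} of maximum degree $\le 4$ that you could reduce from instead. Your phrase ``or can be arranged to'' hides exactly the hard part of the proof.

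The natural fix is to represent a high-degree vertex by a connected structure of negligible-weight segments: a tree, a path, or (as in the paper) the horizontal vertex path of a rectilinear layout. This fix breaks your soundness step. A solution can now cut one cheap segment inside a vertex's structure. That separates its incident edge paths into two groups and kills every broken cycle routed from one group to the other. This is a ``vertex cut'' with no counterpart in $G$, where only edges can be hit. For example, suppose broken cycles pass through $v$ via the edge pairs $(e_1,e_4),(e_2,e_5),(e_3,e_6)$, and the attachments lie in the order $e_1,\dots,e_6$ along a path. Then the single middle segment does the work of several edges of $G$. Your projection ``replace a hit segment of $P_e$ by $e$'' says nothing about such segments.

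The missing idea is the paper's block argument. Each vertex path is thickened into an $(m+1)\times(m+1)\ell_v$ block whose terminals are spread $m+1$ apart. Cut--cycle duality (\Cref{lem:grid-block-segment-hits-atmost-two}) then shows that any in-block cut separating two disjoint terminal pairs uses at least $m+1$ segments. This lets any solution of size at most $m$ trade its block segments for edge-path segments.

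The rest of your plan is sound and matches the paper's weighting: one designated segment carries the original weight, the remaining segments are negligible, integrality slack dominates the padding, dilation prevents stray adjacencies, and edge-path hits project back to $G$. But that plan only suffices when vertices are grid points, which requires a degree bound you do not have.
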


\subsection{Related work}

Metric repair was introduced independently by Gilbert and
Jain~\cite{gilbert2017sparse}, as \emph{sparse metric repair}, and by Fan, Raichel,
and Van Buskirk~\cite{fan2018metric}, under the name \emph{metric violation
distance}. In both formulations the input is a full distance matrix
that fails the triangle inequality, and the goal is to change as few
entries as possible to restore a metric. Fan et al.~\cite{fan2018metric}
proved hardness and approximation results for the general and
increase-only variants, and isolated the decrease-only variant as
polynomial-time solvable. More broadly, metric repair is part of the
literature on restoring metric structure to noisy distance data,
including metric nearness under norm objectives~\cite{BrickellDhillonSraTropp2008}
and recent work on fitting metrics and ultrametrics with minimum
disagreements~\cite{CohenAddadFanLeeMesmay2025}. These works concern
complete distance matrices; our focus is the graph-structured version.

Graph metric repair was introduced by Fan, Gilbert, Raichel, Sonthalia,
and Van Buskirk~\cite{fan_et_al:LIPIcs.SWAT.2020.25}. In this setting only the edges of a
weighted graph are specified, and feasibility is governed by
\emph{broken cycles}. Their broken-cycle/hitting-set characterization
is the structural backbone of our work: ordinary metric repair is
equivalent to hitting every broken cycle, while increase-only repair is
equivalent to hitting every broken cycle on a light edge. They also
proved NP-hardness on general graphs and gave approximation and
fixed-parameter results. We take a complementary view: rather than
approximating on general graphs, we map the exact tractability boundary
as a function of graph structure.

Metric repair also generalizes \textsc{Length-Bounded Cut}, in which
one deletes as few edges as possible so that no short $s$--$t$ path
survives. 
Length-bounded disjoint paths are already NP-hard for fixed length
bounds~\cite{IPS82}, and length-bounded cuts are NP-hard and hard to
approximate under related restrictions~\cite{BaierErlebachHallKohlerKolmanPangracSchillingSkutella2010,FHNN16}.
Dvo\v{r}\'{a}k and Knop~\cite{dvovrak2018parameterized} generalize to the multi-commodity version, \textsc{Length-Bounded Multicut}, which is still a special case of \IOMR{}. They
give algorithms on bounded-treewidth graphs whose dependence on the
length bound cannot in general be removed. Our bounded-treewidth
results exhibit a similar phenomenon for metric repair: tractability
requires bounding not only the treewidth but also the numerical weights.

Our algorithms use dynamic programming over graph decompositions. For
bounded treewidth, we use \emph{nice} tree decompositions and the standard
constructive theory for obtaining them~\cite{bodlaender1996efficient}; for
treewidth two we use the classical series-parallel decomposition of
Valdes, Tarjan, and Lawler~\cite{SP-graphs}. Our hardness
results use complementary decomposition ideas: a path-like gadget chain
for bounded-pathwidth hardness, planar local gadgets based on
NP-hardness of independent set in planar bounded-degree graphs
\cite{GareyJohnsonStockmeyer1976}, and rectilinear planar layouts of
Rosenstiehl and Tarjan~\cite{rosenstiehl1986rectilinear} to transfer
hardness to grid graphs.

\subsection{Technical overview}

Our results have two complementary parts.  On the algorithmic side, we
use specific graph decompositions and dynamic programming to show that increase-only-metric-repair becomes tractable when both the
graph structure and the numerical weights are bounded. The algorithms for the general metric repair problem follow the same techniques. On the hardness
side, we show that neither condition alone is sufficient via three different reductions, using a mixture of combinatorial and graph-theoretic arguments to cut down the vast space of possible solutions to the instances we construct.

\paragraph{Algorithms via distance--demand profiles.}
Our starting point is the broken-cycle characterization of
Fan et al.~\cite{fan_et_al:LIPIcs.SWAT.2020.25}.  A repaired weighting is metric precisely
when no edge is heavier than an alternative path between its endpoints.
Thus, after a dynamic program has processed a subgraph $H$, the
unprocessed part of the graph can threaten the metricity of $H$ only through paths that
enter and leave $H$ through its boundary.  The state must therefore
remember two kinds of information: the distances already realized
between boundary vertices, and the lower bounds that future outside
paths must satisfy in order not to make an already processed edge
heavy.

For series-parallel graphs, the boundary consists of two terminals
$s,t$.  A state consists of a truncated $s$--$t$ distance and a demand
value: the minimum length that any future outside $s$--$t$ path must
have in order not to create a broken cycle with an edge already
processed.  Series composition adds terminal distances and projects
demands through the new terminals.  Parallel composition takes the
shorter of the two terminal distances and checks the accumulated
demand.  Since repaired weights are integers in $[W]$, there are
$O(W^2)$ possible two-terminal profiles, and the resulting dynamic
program runs in time
\[
        O(W^4\cdot |E|).
\]
The bounded-treewidth algorithm is the same idea with a larger
boundary and a different graph decomposition, a \emph{nice} tree decomposition.  For a bag $B_t$ of size at most $r+1$, the state consists of
two matrices indexed by pairs of vertices in $B_t$: a truncated
distance matrix and a demand matrix. In a \emph{nice} tree decomposition, introduce-vertex and forget-vertex
nodes only add or remove isolated boundary vertices.  Introduce-edge
nodes guess the repaired length of the new edge, update the boundary
shortest-path matrix, and add the demand imposed by the new edge.  Join
nodes merge two processed subgraphs by taking the shortest-path closure
of the two boundary distance matrices and projecting the two demand
matrices through the merged distances.  Whenever a transition can create
new cycles across the boundary, metricity is tested by the inequality
``distance is not less than demand.''  The number of distance--demand profiles
is $W^{O(r^2)}$, yielding time
\[
        W^{O(r^2)}\cdot (|V|+|E|).
\]
Thus increase-only metric repair is fixed-parameter tractable
parameterized by $(r,W)$.

This also gives a new algorithm for length-bounded multicut: encoding each
terminal pair as a heavy edge forming the augmented graph $G'=(V,E\cup\{s_it_i:i\in[k]\})$, reduces it to increase-only metric repair. Where \cite{dvovrak2018parameterized} force all terminals into every bag of the decomposition, paying the number of terminals $k$ in the width by design, our demand matrix folds the pairs together by projection and entrywise maximum, so the running time depends only on $\tw(G')$. This is never worse than their bound and drops the dependence on $k$ entirely when the terminal pairs are local; we give a sufficient condition on $G$ under which this holds, namely that the pairs be joined by short paths of low congestion.

\paragraph{Why the weight dependence is necessary.}
The dependence on $W$ is unavoidable.  We reduce from
\textsc{Partition} to show that both \textsc{Metric-Repair} and
\textsc{IO-Metric-Repair} are NP-hard even on graphs of pathwidth at
most $6$, when weights are encoded in binary.  The construction is a
path-like chain of constant-size gadgets.  Each gadget contains two
locally broken triangles, forcing two modifications, and the surrounding
constraints force one of two possible patterns.  These two patterns
encode the choice of which side of the partition receives the number
$x_i$.  Two global cycles, with heavy edges of weight $4n+1/2$, check
whether the two selected sums are both at most $1/2$.  Hence a repair
of size at most $2n$ exists if and only if the \textsc{Partition}
instance is feasible.  This weak hardness matches the algorithmic
result: bounded treewidth gives tractability only when the weights are
also bounded.

\paragraph{Planar hardness via local gadgets.}
We next show that planarity does not make metric repair easy, even when there are a \emph{constant} number of possible weights.  The
reduction is via an intermediate problem about edge orientation which is equivalent to independent set on planar
maximum-degree-three graphs.  Each vertex of the source graph is
replaced by a vertex gadget: a 5-cycle with one heavy edge of weight
$13$.  Each edge of the source graph is replaced by an edge gadget: a
chain of three triangles glued to the two incident vertex gadgets.

The edge gadget enforces an orientation choice.  If an edge is oriented
from $u$ to $v$, then the terminal edge on the $v$ side is repaired,
together with one shared edge inside the edge gadget.  Thus each edge
gadget contributes exactly two modifications.  A vertex whose incident
edges all point into it has all three terminal edges repaired and its
heavy edge satisfies the triangle inequality for free.  A nonsink
vertex must pay one additional modification in its vertex gadget.
Therefore a source graph with $m$ edges has an orientation with at most
$s$ nonsinks if and only if the constructed metric-repair instance has a
repair of size at most $2m+s$.  The construction is planar, has maximum
degree $6$, uses only the weights $1,4,13$, and proves hardness for both
ordinary and increase-only metric repair.

\paragraph{Grid hardness via rectilinear layouts.}

Finally, we strengthen planar hardness to grid graphs.  Starting from a
planar metric-repair instance, the basic approach is embed the graph as a grid minor. However, obstacles quickly emerge.

We use a rectilinear layout in which each
vertex is represented by a horizontal path and each edge by a vertical
path. The main challenge is that the grid contains many auxiliary cycles that
do not correspond to cycles of the source graph.  The construction
therefore reshapes the layout before assigning weights.

First, we dilate the layout to separate distinct edge paths.  Then each
vertex path is thickened into a large rectangular block.  The structural
part of the resulting grid consists of the edge paths and vertex blocks.
On each edge path, one distinguished segment carries the original edge
weight; all other structural segments have tiny weight $\varepsilon$.
Auxiliary segments receive maximum weight.  The induced-skeleton
argument shows that every broken cycle in the grid lies entirely in the
structural skeleton and is exactly a realization of a broken cycle in
the source graph.

The remaining issue is soundness: A small grid repair should not be
able to cut through a vertex block in a way that has no analogue in the
source graph.  The block construction prevents this. Our key lemma (\Cref{lem:grid-block-segment-hits-atmost-two}) uses planar duality to show that any such cut must involve too many edges to be a good solution.  This allows us to canonicalize any solution into one that corresponds to a solution in the source graph: auxiliary segments can be removed, multiple
segments on the same edge path can be replaced by the distinguished
segment, and vertex-block segments can be shifted onto edge paths.
After this, every grid repair induces a repair of no larger
size in the original graph, while every original repair lifts directly
to the grid.  This proves NP-hardness of \MR{} and \IOMR{} on grid
graphs.

\subsection{Organization}
We lay out our notation and central problems in~\Cref{sec:notation}.
 \Cref{sec:sp} presents the series-parallel
algorithm, which \Cref{sec:tw} generalizes to bounded treewidth. \Cref{sec:bdd-tw-hard} establishes weak NP-hardness on graphs of bounded pathwidth.
\Cref{sec:planar} proves hardness on planar graphs and \Cref{sec:grid}
strengthens it to grid graphs.  

\section{Notation and Problem Statement}
\label{sec:notation}
Unless stated otherwise, graphs $G = (V,E)$ are simple and undirected. A cycle in $G$ is a connected subgraph where every vertex has degree $2$. We think of a cycle $C$ as a sequence of edges $e_1,\ldots e_k$. If $G$ is associated with an edge-weight function $w:E\to \RR_{>0}$, for a cycle $C$ the \emph{deficit} of $C$ is $w(e) - w(C\setminus e)$ where $e$ is the edge of maximum weight in $C$. We say that $C$ is \emph{broken} if its deficit is positive. For a broken cycle $C$, the edge of maximal weight in $C$ is called \emph{heavy} and the other edges are called \emph{light}. Note that an edge can be heavy in one cycle and light in another; when we say that an edge is heavy or light, it will be clear from context which cycle we refer to. The set of all broken cycles is denoted $\Cc = \Cc(G)$. We say that a set $S\subset E$ is a \emph{hitting set} for $\Cc$ if $S$ intersects every broken cycle. We say that it is a \emph{light hitting set} if for every broken cycle $C$, the set $C\cap S$ contains some light edge of $C$. If there are no broken cycles under $w$, we say that $w$ is \emph{metric}, and otherwise \emph{non-metric}.

Given a positively weighted graph $G = (V,E,w)$ and a new weight function $\rho:E\to \RR_{>0}$, the set $S_\rho \defeq \cbk{e\in E\mid w(e) \ne \rho(e)}$ is called the \emph{support} of $\rho$ with respect to $w$. The \emph{Metric Repair} problem asks to find a metric weight function $\rho:E\to \RR_{>0}$ whose support $|S_\rho|$ is as small as possible. The \emph{Increase Only Metric Repair} (IOMR) problem adds the constraint $\rho(e)\ge w(e)$ for all $e\in E$. As it turns out, the combinatorial structure of the problem encapsulates the entirety of the solution: in~\cite{fan_et_al:LIPIcs.SWAT.2020.25} Fan et al.\ show the following theorem.

\begin{theorem}[\cite{fan_et_al:LIPIcs.SWAT.2020.25}, Theorem 6, Proposition 7]\label{thm:fan-characterization}
Let $G = (V,E,w)$ be a positively weighted graph and $S\subset E$. Then
\begin{enumerate}
    \item There exists a metric $\rho:E\to \RR_{>0}$ supported on $S$ if and only if $S$ is a hitting set for $\Cc$.
    \item There exists a metric $\rho:E\to \RR_{>0}$ such that $\rho\ge w$ and $\rho$ is supported on $S$ if and only if $S$ is a light hitting set for $\Cc$.
\end{enumerate}
Moreover, there is a polynomial time algorithm that, given a hitting set $S$, produces a suitable $\rho$.
\end{theorem}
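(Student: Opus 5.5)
The plan is to prove each of the two equivalences in two directions. Necessity is a direct check. For sufficiency I will build one explicit repair $\rho$: a shortest-path metric computed from a modified weighting. Shortest paths can be computed in polynomial time, which also gives the ``moreover'' claim.

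\emph{Necessity.} Suppose $\rho$ is metric and supported on $S$, and suppose some broken cycle $C\in\Cc$ has $C\cap S=\emptyset$. Then $\rho=w$ on every edge of $C$, so $C$ is still broken under $\rho$, which is a contradiction. For the increase-only statement, suppose $\rho\ge w$ and $C\cap S$ contains no light edge of $C$. Then only the heavy edge $e$ of $C$ can change, and it can only increase. The light edges keep their weights, so $\rho(e)-\rho(C\setminus e)\ge w(e)-w(C\setminus e)>0$. Hence $e$ is still the maximum edge of $C$ and $C$ is still broken under $\rho$, again a contradiction.

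\emph{Sufficiency.} Let $M\defeq\sum_{e\in E}w(e)$. Define $w'$ by setting $w'(e)=w(e)$ for $e\notin S$ and $w'(e)=M$ for $e\in S$. Let $d'$ be the shortest-path distance under $w'$, and set $\rho(uv)\defeq d'(u,v)$ for every edge $uv\in E$. This $\rho$ is positive. It is also metric: for every edge $uv$ and every other $u$--$v$ path $P$, the triangle inequality for $d'$ along $P$ gives $\rho(uv)=d'(u,v)\le\sum_{xy\in P}d'(x,y)=\rho(P)$, so no cycle is broken. The edge $uv$ itself is a path of $w'$-length $w'(uv)$, so $\rho\le w'$.

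The key claim is that for every edge $e=uv$ with $e\notin S$ (and, in the increase-only case, also for $e\in S$), we have $\rho(e)\ge w(e)$. Suppose instead that some $u$--$v$ path $P$ has $w'(P)<w(e)$. Since $w(e)\le M$, the path $P$ uses no edge of $S$, so $w'(P)=w(P)$. Then $C=P+e$ is a cycle in which $e$ outweighs all the other edges combined. So $C$ is broken, $e$ is its heavy edge, and $C\cap S\subseteq\{e\}$. There are two cases.
\begin{itemize}
\item If $e\notin S$, then $C$ misses $S$ entirely, contradicting that $S$ is a hitting set. So $\rho(e)=w(e)$ off $S$, and $\rho$ is supported on $S$.
\item In the increase-only case with $e\in S$, $C$ meets $S$ only in its heavy edge, contradicting that $S$ is a light hitting set. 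So $\rho\ge w$ on $S$ as well.
\end{itemize}

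The only delicate point is choosing $M$ large enough. It must be at least every $w(e)$, so that a path shorter than $w(e)$ under $w'$ cannot use an edge of $S$; the value $M=\sum_{e}w(e)$ suffices. No other obstacle is expected.
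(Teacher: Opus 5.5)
Your proof is correct. The paper does not prove this theorem itself; it imports it from Fan et al., so there is no in-paper argument to compare against, and what you wrote is a complete, self-contained substitute.

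The necessity directions are right. In the increase-only case, fixing the light edges and only raising the heavy edge can only increase the deficit, and the heavy edge stays the unique maximum because weights are positive.

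For sufficiency, the shortest-path closure $\rho$ of $w'$ is metric by the triangle inequality for $d'$, and it satisfies $\rho\le w'$. If some edge $e=uv$ had $\rho(e)<w(e)$, there would be a path $P$ with $w'(P)<w(e)\le M$. Hence $P$ avoids $S$, and $P+e$ is a broken cycle meeting $S$ at most in its heavy edge $e$.

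The one step you leave implicit is that $P+e$ really is a cycle. Take $P$ to be a shortest, hence simple, path. It is not the single edge $e$, since $w'(e)\ge w(e)$ in both cases ($e\notin S$ and $e\in S$). A simple $u$--$v$ path other than $uv$ cannot contain $uv$, so $P+e$ is a genuine cycle with at least three edges.

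Two small remarks. Any $M\ge\max_e w(e)$ suffices. The same $\rho$ works for both parts at once, so the ``moreover'' clause costs a single all-pairs shortest-path computation on numbers of polynomial bit-length.
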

Therefore, we define the metric repair problem as follows. 

\begin{problem}[\MR]
    Given a graph $G$ and a weight function $w:E\to \RR_{>0}$, is there a set of edges $S$ of size at most $k$ that intersects every broken cycle in $(G,w)$?
\end{problem}
\begin{problem}[\IOMR{}]
    Given a graph $G$ and a weight function $w:E\to \RR_{>0}$, is there a set of edges $S$ of size at most $k$ that intersects every broken cycle in $(G,w)$ at a light edge?
\end{problem}
We note that for each of these problems, the following three formulations are equivalent:
\begin{enumerate}
    \item \label{formulation:find x} finding a (increase-only) weight function $\rho$ with smallest support;
    \item \label{formulation: hitting set} finding a minimum (light) hitting set $S$ of $\Cc$;
    \item \label{formulation: LB cut} finding a minimum (light) set $S\subset E$ such that $G\setminus S$ has no broken cycles.
\end{enumerate}
While all are equivalent, different proofs use different formulations for simplicity of presentation. We transition between these formulations freely.

%% file: series-parallel-updated.tex
We start by presenting the main ideas of our dynamic program for series parallel graphs. Since all graphs we consider are $2$-connected, the series parallel graphs are precisely the graphs of treewidth at most $2$, which makes the dynamic program more digestible. 
\subsection{Set up}
We adapt the recursive definition of~\cite{SP-graphs} to the undirected setting. 

A graph $G$ is called \emph{Series-Parallel} (SP) with terminals $\{s,t\}$ if it is one of the following:
\begin{enumerate}
    \item A single edge $st$
    \item $G$ is the union of two edge disjoint graphs SP graphs $G_1,G_2$ in one of two ways:
    \begin{itemize}
        \item \textbf{Parallel: } $G_1,G_2$ share the same terminal set $\{s,t\}$, and $V(G_1)\cap V(G_2) = \{s,t\}$.
        \item \textbf{Series: } $G_1,G_2$ have terminal sets $\{s,l\},\{l,t\}$ respectively and $V(G_1)\cap V(G_2) = \{l\}$.
    \end{itemize}

\end{enumerate}
\begin{figure}[!ht]
    \centering
    \includegraphics[scale=.4]{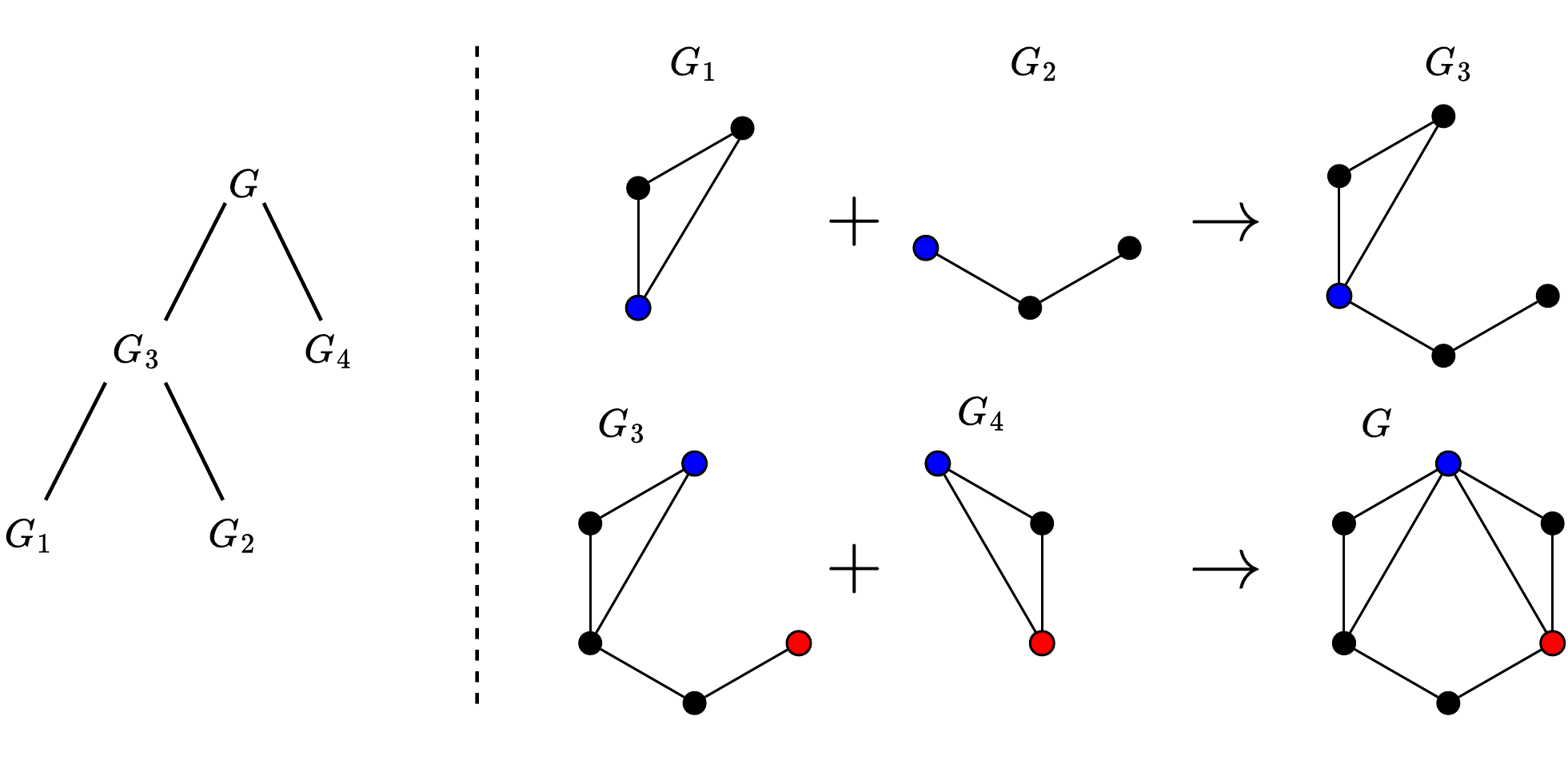}
    \caption{Series (top) and parallel (bottom) composition of Series-Parallel graphs. The matching terminals in each graph are colored, and the tree structure is on the left.}
    \label{fig:SPexample}
\end{figure}
This recursive structure allows one to construct a \emph{decomposition tree} of an SP graph: if $G$ is a single edge $st$, the tree is a single node representing $st$. If $G$ is a composition of $G_1,G_2$, the tree is rooted at a node $r$ representing $G$, with two children $f_1,f_2$, The subtree rooted at $f_i$ is the decomposition tree of $G_i$. Note that this is a binary tree with leaves corresponding to all edges of $G$. Constructing this tree takes linear time~\cite{SP-graphs}. Our dynamic program operates on this tree. 

To set notation, we are given a binary tree $T$ that is a SP decomposition tree of $G$. Its leaves are indexed by the edges of $G$, and each internal node is indexed by an SP subgraph $H$ of $G$, with decomposition tree rooted at that node. The children of $H$ are $H_1,H_2$. We say that a node is a Parallel (Series) node if $H$ is a Parallel (Series) composition of $H_1,H_2$.

A dynamic program state for an SP graph $H$ with terminals $\{s,t\}$ is comprised of two integers $(d,\lambda)$. $d$ is the \emph{distance} between $s,t$ in $H$, and $\lambda$ is a \emph{demand} value, encoding what the weight of a path of a future $s-t$ path needs to be in order to not cause edges already processed in $H$ to become heavy. States we process will correspond to increase only functions $\rho:E(H)\to [W]$ that are metric on $H$. However, the program does not maintain the functions $\rho$ explicitly. Each such function corresponds to possible increase values for the edges in $H$, corresponding to the leaves in the subtree rooted at $H$. The DP table for $H$ is denoted $A_H$. We begin by describing the DP tables for the leaves -- those are \emph{Single-edge tables}. We proceed by describing what DP states and tables for a general two terminal graph look like, namely \emph{Two-terminal profiles}. We then introduce a \emph{projection} operation, allowing us to design the DP update step for Series and Parallel nodes. Finally, we prove the correctness of these updates, and the correctness of the dynamic program.

\paragraph{Single-edge tables.}
For an edge $e=st$, define $\EdgeTable(e)$ as follows. For
each length
\[
\ell\in\{w(e),w(e)+1,\ldots,W\},
\]
We define the state
\[
(d,\lambda) \defeq (\ell,\ell).
\]
The cost of this state is
\[
\cost(\ell)=
\begin{cases}
0, & \ell=w(e),\\
1, & \ell>w(e).
\end{cases}
\]

Note that for each guess $\ell$, the shortest $s-t$ path in the graph with the unique edge $st$ is of length $\ell$, and any future path $P$ between $s$ and $t$ needs to have length at least $\ell$ so that $st$ is not heavy in $P\cup \{st\}$. 
\paragraph{Two-terminal profiles.}
Let $f$ be an internal node of $T$ representing the SP graph $H$ with terminals $S = \{s,t\}$.
For an increase-only repaired weight function
\[
\rho:E(H)\to [W],
\]
we define the \emph{two-terminal profile realized by $\rho$} to be the pair of integers $d_\rho,\lambda_\rho$ as follows.
\[
d_\rho=\tau\bigl(\dist_{\rho,H}(s,t)\bigr)
\]
where $\dist_{\rho,H}(s,t)$ is the shortest path distance between $s,t$ in $H$ under the weight function $\rho$, and
\[
\tau(r) \defeq \min\cbk{r,W+1}.
\]
Thus $d_\rho\in\{0,1,\ldots,W+1\}$.

Second, $\lambda_\rho$ is defined by
\[
\lambda_\rho
=
\psbk{
\max_{ab\in E(H)}
\left(
\rho(ab)-\dist_{\rho,H}(a,s)-\dist_{\rho,H}(b,t)
\right)
},
\]
where
\[
[r]_+ \defeq \min\{W,\max\{0,r\}\}.
\]
and the maximum ranges over both ordered pairs $(a,b)$ and
$(b,a)$ for each undirected edge $ab$. Hence
$\lambda_\rho\in\{0,1,\ldots,W\}$.

Intuitively, $\lambda_\rho(s,t)$ records how long a future outside route from
$s$ to $t$ must be in order not to make an already processed edge of
$H$ heavy. We say that a pair of integers $(d,\lambda)$ is \emph{realized} by $\rho$ if $d = d_\rho$ and $\lambda = \lambda_\rho$.
Let $\Xx(d,\lambda) $
be the set of all increase-only functions $\rho:E(H)\to[W]$ that are metric on
$H$ and realize $(d,\lambda)$.

The dynamic programming table for $f$ stores
\[
A(d,\lambda)
=
\displaystyle
\min_{\rho\in\Xx(d,\lambda)}
\cost(\rho),
\]
where 
\[
\cost(\rho) \defeq \abs{\cbk{e\in E(H) : \rho(e) > w(e)}}.
\]
We later show that we never keep states that are realized by non-metric $\rho$.
The dynamic program never stores the function $\rho$ explicitly. Nevertheless,
every finite table entry should be interpreted as being represented by at least one such function. 

This interpretation is maintained inductively by the transitions. For a
single-edge table, choosing a repaired length $\ell\in\{w(e),\ldots,W\}$
defines the realizing function on that one edge. For an internal face node, the
transition combines child states. By induction, each finite child state has at least one realizing function. More formally, for an SP graph $H$ that is the composition of two edge disjoint SP graphs $H_1$,$H_2$ and corresponding weight functions $\rho_i:E(H_i)\to [W]$, we denote by $\rho_1\cup \rho_2$ the weight function on $E(H)$ defined by 
\[ (\rho_1\cup \rho_2)(e)= \begin{cases} \rho_1(e), & e\in E(H_1),\\ \rho_2(e), & e\in E(H_2). \end{cases} \]
and note that 
\[
\cost(\rho_1\cup \rho_2) = \cost(\rho_1) + \cost(\rho_2).
\]


The two types of operations we preform, namely parallel and series composition, compute the two-terminal profile realized by this union function. Thus every
state produced by the dynamic program is again realized by some weight function $\rho$.

\paragraph{Projection of demands.}

When constructing an SP graph $H$ by composition of $H_1,H_2$, we will need to establish what a demand between the two terminals of $H$ is, given the demands between the terminals of $H_1,H_2$. Intuitively speaking, if an old demand says that any future route between old terminals
$p,q$ must have length at least $\lambda$, then after moving to new
terminals $s,t$, an outside $s$-to-$t$ route can threaten this old demand
only after paying the internal distances from $p$ to $s$ and from $q$ to $t$. If $\lambda$ is a demand between terminals $(s,t)$, we will sometime write $\lambda(s,t)$ to remind the reader that these are the terminals $\lambda$ refers to.

Let $H_i$ be an SP graph with terminal set $S_i=\{p_i,q_i\}$, and suppose $H_i$ is contained in a larger SP graph $H$ with
terminal set $S=\{s,t\}$. Suppose that $H$ is equipped with a weight function $x:E(H)\to [W]$ and let $\rho_i$ be its restriction to $H_i$. Let $D_H$ be the truncated distance matrix among vertices in $S_i\cup S$ under the weight function $\rho$. If $(d_i,\lambda_i)$ is a state for $H_i$ realized by $\rho_i$, define
its \emph{projected demand} on $S$ by
\[
\Pi_{H_i\to S}(s,t)
=
\left[
\max_{p,q\in S_i}
\left(
\lambda_i(p,q)-D_H(p,s)-D_H(q,t)
\right)
\right]_+
\]

\begin{lemma}[Projection of Demands]\label{lem:outerplanar-projection}
Let $H_i,H, S_i,S, x,\rho_i$ be as above. Then 
\[
\Pi_{H_i\to S}(s,t) = \psbk{\max_{ab=e\in E(H_i)}(\rho(ab) - \dist_{\rho,H}(a,s) - \dist_{\rho,H}(b,t)}
\]
where the maximum is taken over both orientations of $ab\in E(H_i)$.    
\end{lemma}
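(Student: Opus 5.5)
The plan is to unfold the definition of the child demand $\lambda_i(p,q)$ inside the projected demand, exchange the two maxima, and then collapse the resulting double minimisation over terminals of $H_i$ using a cut property of SP graphs: every path leaving $H_i$ passes through $S_i$. After that, the only remaining work is to check that the truncations $\tau$ and $[\cdot]_+$ do not change the value. Throughout, $\rho=x$ is the weight function on $H$ and $\rho_i$ its restriction to $H_i$.

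\textbf{Step 1 (separation).} I will prove that for every vertex $a\in V(H_i)$ and every $s\in S$,
\[
\dist_{\rho,H}(a,s)=\min_{p\in S_i}\bigl(\dist_{\rho_i,H_i}(a,p)+\dist_{\rho,H}(p,s)\bigr).
\]
The inequality ``$\le$'' is just concatenation of paths.

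For ``$\ge$'', take a shortest $a$--$s$ path in $H$. In the SP construction, $H_i$ meets the rest of $H$ only in $S_i$, since every later composition identifies $H_i$ with other pieces only at its terminals. Let $p$ be the first vertex of $S_i$ on the path. Such a $p$ exists either because $s$ lies outside $H_i$, or because $s\in V(H_i)$. In the latter case, by the same separation, $s\in S_i$ whenever $s\in S$; this uses the fact that the terminals of $H$ are terminals of the ancestors, and must be checked. The prefix of the path up to $p$ stays inside $H_i$, so its length is at least $\dist_{\rho_i,H_i}(a,p)$. The suffix has length at least $\dist_{\rho,H}(p,s)$.

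\textbf{Step 2 (exchange of maxima).} Substitute the identity from Step~1 for both $\dist_{\rho,H}(a,s)$ and $\dist_{\rho,H}(b,t)$. This gives
\[
\rho(ab)-\dist_{\rho,H}(a,s)-\dist_{\rho,H}(b,t)=\max_{p,q\in S_i}\Bigl(\rho(ab)-\dist_{H_i}(a,p)-\dist_{H_i}(b,q)-\dist_H(p,s)-\dist_H(q,t)\Bigr).
\]
Take the maximum over all oriented edges $ab\in E(H_i)$ and swap it with the maximum over $(p,q)$. The inner maximum over edges is then exactly the untruncated expression defining $\lambda_i(p,q)$. This proves the identity without truncations.

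\textbf{Step 3 (truncations).} This is the step I expect to be the only real fiddly point. I will check three things.

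First, the lower clamp at $0$ inside $\lambda_i$ is harmless. If the untruncated inner value $y$ is negative, then $0-D(p,s)-D(q,t)\le 0$ while the true term is $y-\cdots<0$. Both are wiped out by the outer $[\cdot]_+$, unless some other term is positive, in which case neither affects the maximum.

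Second, the upper clamp at $W$ inside $\lambda_i$ is inactive, because $\rho(ab)\le W$ and distances are nonnegative.

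Third, replacing true distances by $\tau$-truncated entries of $D_H$ only matters when a distance exceeds $W+1$. In that case both the true term and the truncated term are at most $W-(W+1)<0$, so they again vanish under the outer clamp.

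Since $[\cdot]_+$ is monotone and commutes with maxima, the two sides of the lemma agree.
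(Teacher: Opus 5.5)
Your proposal is correct and follows the paper's proof essentially step for step. Like the paper, you write $\dist_{\rho,H}(a,s)$ as a minimum over $p\in S_i$ of $\dist_{\rho_i,H_i}(a,p)$ plus the $p$--$s$ distance via the cut $S_i$, turn the two minima into a joint maximum, swap the maximum over edges with the one over $(p,q)\in S_i\times S_i$, and then argue the clamps are harmless; your Step~3 is actually more careful than the paper, which puts the $\tau$-truncated $D_H(p,s)$ directly into the separation identity (literally valid only when that distance is at most $W+1$), whereas you keep true distances there and justify the truncation afterwards.
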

\begin{proof}
    Consider an edge $ab\in E(H_i)$. Since $S_i$ disconnects $H_i$ from $G$, it also disconnects $H_i$ from $H$. Thus, every path from $a$ to $s$ must pass through a terminal in $S_i$ to reach $s$. Hence
\[
\dist_{\rho,H}(a,s)
=
\min_{p\in S_i}
\left(
\dist_{\rho_i,H_i}(a,p)+D_H(p,s)
\right).
\]
Similarly,
\[
\dist_{\rho,H}(b,t)
=
\min_{q\in S_i}
\left(
\dist_{\rho_i,H_i}(b,q)+D_H(q,t)
\right).
\]
Therefore
\[
\begin{aligned}
&\rho(ab)-\dist_{\rho,H}(a,s)-\dist_{\rho,H}(b,t) \\
&=
\rho(ab)
-
\min_{p\in S_i}
\left(
\dist_{\rho_i,H_i}(a,p)+D_H(p,s)
\right)
-
\min_{q\in S_i}
\left(
\dist_{\rho_i,H_i}(b,q)+D_H(q,t)
\right) \\
&=
\max_{p,q\in S_i}
\left(
\rho(ab)
-\dist_{\rho_i,H_i}(a,p)-\dist_{\rho_i,H_i}(b,q)
-D_H(p,s)-D_H(q,t)
\right).
\end{aligned}
\]
Taking the maximum over all edges $ab\in E(H_i)$, we get
\[
\begin{aligned}
&\max_{\substack{ab\in E(H_i)}}
\left(
\rho(ab)-\dist_{\rho,H}(a,s)-\dist_{\rho,H}(b,t)
\right) \\
&=
\max_{p,q\in S_i}
\left(
\max_{ab\in E(H_i)}
\Bigl(
\rho(ab)-\dist_{\rho_i,H_i}(a,p)-\dist_{\rho_i,H_i}(b,q)
\Bigl)
-D_H(p,s)-D_H(q,t)
\right).
\end{aligned}
\]
By correctness of $\lambda_i$, the inner maximum is exactly the unprojected
demand represented by $\lambda_i(p,q)$. Since all edge lengths are at most
$W$, these demands are never larger than $W$, and negative values remain
irrelevant after subtracting the nonnegative distances $D_H(p,s),D_H(q,t)$. Thus
the positive clipping $[\cdot]_+$ commutes with this projection step, giving
\[
\left[
\max_{ab\in E(H_i)}
\left(
\rho(ab)-\dist_{\rho,H}(a,s)-\dist_{\rho,H}(b,t)
\right)
\right]_+
=
\left[
\max_{p,q\in S_i}
\left(
\lambda_i(p,q)-D_H(p,s)-D_H(q,t)
\right)
\right]_+ .
\]
\end{proof}

\begin{figure}[hb]
    \centering
    \includegraphics[scale = .5]{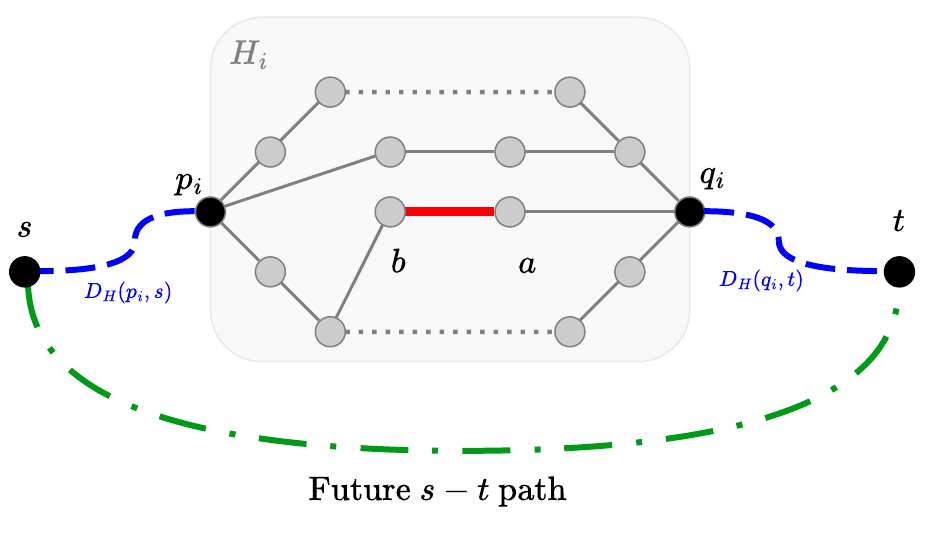}
    \caption{Illustration of demand projection. If an edge $ab$ sets a demand on an $p_i-q_i$ path, any path that gets from $a,b$ to $s,t$ needs to pay the cost of the $p_i-s$ and $q_i-t$ distance.}
\end{figure}
\begin{corollary}\label{cor:outerplanar-projection-union}
Suppose $H$ is a composition of two edge disjoint SP graphs $H_1,H_2$ and $S=\{s,t\}$ is the terminal set of $H$. Suppose $\rho$ is a metric repair function on $E(H)$. For $i\in \{1,2\}$, let $\rho_i$ be its restriction to $E(H_i)$ and let  $(d_i,\lambda_i)$ be a state realized by $\rho_i$. Let $\lambda$ be the $s-t$ demand realized by $\rho$. Then
\[
\lambda
=
\max\left\{
\Pi_{H_1\to S}(s,t),
\Pi_{H_2\to S}(s,t)
\right\}
\]
\end{corollary}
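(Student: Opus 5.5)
The plan is to derive the corollary directly from the definition of $\lambda_\rho$ and Lemma~\ref{lem:outerplanar-projection}, using only that $E(H)$ is the disjoint union of $E(H_1)$ and $E(H_2)$.

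First I unfold the definition of the demand realized by $\rho$ on $H$ with terminals $S=\{s,t\}$:
\[
\lambda=\psbk{\max_{ab\in E(H)}\bigl(\rho(ab)-\dist_{\rho,H}(a,s)-\dist_{\rho,H}(b,t)\bigr)},
\]
where the maximum ranges over both orientations of each edge. Because $H_1$ and $H_2$ are edge disjoint and their edge sets together make up $E(H)$, the inner maximum over $E(H)$ equals the larger of the two maxima, one over $E(H_1)$ and one over $E(H_2)$.

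Next I need the clipping $[\cdot]_+=\min\{W,\max\{0,\cdot\}\}$ to commute with a binary maximum. This holds because it is a monotone nondecreasing map, so $[\max\{x,y\}]_+=\max\{[x]_+,[y]_+\}$. Hence
\[
\lambda=\max_{i\in\{1,2\}}\psbk{\max_{ab\in E(H_i)}\bigl(\rho(ab)-\dist_{\rho,H}(a,s)-\dist_{\rho,H}(b,t)\bigr)}.
\]

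Finally, I apply Lemma~\ref{lem:outerplanar-projection} to each $i$, with $x=\rho$ as the weight function on $H$, restriction $\rho_i$, and state $(d_i,\lambda_i)$ realized by $\rho_i$. The lemma identifies the $i$-th bracketed term with $\Pi_{H_i\to S}(s,t)$, which gives the claimed identity.

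The main thing to check is that the lemma's hypotheses hold in both compositions. The lemma's proof uses that $S_i$ separates $H_i$ from the rest of $H$, and this is true in both cases.
\begin{itemize}
\item In a parallel composition, $V(H_1)\cap V(H_2)=\{s,t\}=S_i$.
\item In a series composition, $V(H_1)\cap V(H_2)=\{l\}\subseteq S_i$.
\end{itemize}
In each case every path leaving $H_i$ exits through $S_i$. The lemma's $D_H$ is the distance matrix of the whole graph $H$ under $\rho$, which is consistent with using $\dist_{\rho,H}$ in the definition of $\lambda$. So no further work is required beyond quoting the lemma; metricity of $\rho$ is not even needed for this identity.
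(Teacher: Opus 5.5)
Your proof is correct and follows the paper's argument: apply Lemma~\ref{lem:outerplanar-projection} to each $H_i$, then combine the two clipped maxima using $E(H)=E(H_1)\sqcup E(H_2)$. You make explicit two points the paper leaves implicit---that $[\cdot]_+$ commutes with a binary maximum because it is monotone, and that the lemma's separation hypothesis holds in both series and parallel compositions---and your remark that metricity of $\rho$ is unused is apt, since the corollary is later applied (in Lemma~\ref{lem:SP-update}) to $\rho_1\cup\rho_2$ before metricity has been checked.
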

\begin{proof}
    By \Cref{lem:outerplanar-projection}, 
    \[
    \Pi_{H_i\to S}(s,t) = \psbk{\max_{ab=e\in E(H_i)}(\rho(ab) - \dist_{\rho,H}(a,s) - \dist_{\rho,H}(b,t)}.
    \]
    Since $E(H) = E(H_1)\cup E(H_2)$, 
    \[
    \max\left\{
    \Pi_{H_1\to S}(s,t),
    \Pi_{H_2\to S}(s,t)
    \right\} = \psbk{ \max_{ab\in E(H)} 
    \left( 
    \rho(ab)-\dist_{\rho,H}(a,s)-\dist_{\rho,H}(b,t) 
    \right)
    } = \lambda(s,t).
    \]
\end{proof}

We are now ready to describe the table updates for Parallel and Series nodes. 
\paragraph{Parallel composition.}

Let $H_1,H_2$ be two edge-disjoint SP graphs with the same terminals $s,t$ and DP tables $A_1,A_2$. Let $H$ be their parallel composition of $H_1,H_2$.
Consider states $(d_1,\lambda_1)$ and $(d_2,\lambda_2)$ for $H_1,H_2$. We construct a new state $(d,\lambda)$ for $H$ by defining
\[
d=\min\{d_1,d_2\}\qquad\text{and}\qquad \lambda
=
\max\left\{
\Pi_{H_1\to\{s,t\}}(s,t),
\Pi_{H_2\to\{s,t\}}(s,t)
\right\}
\]

We check
\[
d\ge \lambda
\]
If this check passes, we update
\[
A_H(d,\lambda)
=
\min\left\{
A_H(d,\lambda),
A_{H_1}(d_1,\lambda_1)+A_{H_2}(d_2,\lambda_2)
\right\}.
\]
Otherwise we discard the guess $(d,\lambda)$.

\paragraph{Series composition.}

Let $H_1,H_2$ be two edge-disjoint SP graphs with terminal sets $\{s,l\}$ and $\{l,t\}$ respectively. Let $H$ be their series composition with terminals $\{s,t\}$.
Consider states $(d_1,\lambda_1)$ and $(d_2,\lambda_2)$ for $H_1,H_2$. We construct a new state $(d,\lambda)$ for $H$ by defining
\begin{flalign*}
    d = \tau(d_1 + d_2) \qquad \text{and}\qquad
    \lambda
=
\max\left\{
\Pi_{H_1\to\{s,t\}}(s,t),
\Pi_{H_2\to\{s,t\}}(s,t)
\right\}.
\end{flalign*}
Update
\[
A_H(d,\lambda)
=
\min\left\{
A_H(d,\lambda),
A_{H_1}(d_1,\lambda_1)+A_{H_2}(d_2,\lambda_2)
\right\}.
\]

The subsequent lemmas establish the correctness of updates for both types of nodes. Most arguments hold for both compositions, but some require specific details for each type of composition. We specify when this is the case. 

\begin{lemma}[Correctness of Updates]\label{lem:SP-update}
    Suppose that for $i\in \{1,2\}$, $(d_i,\lambda_i)$ is a state of $H_i$ realized by $\rho_i$. Let $\rho = \rho_1\cup \rho_2$, and let $(d,\lambda)$ be their composition. Then $d = d_\rho$ and $\lambda = \lambda_\rho$.
\end{lemma}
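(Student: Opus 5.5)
The claim has two halves, the distance part $d=d_\rho$ and the demand part $\lambda=\lambda_\rho$. I will treat them separately. The demand part should follow almost immediately from \Cref{cor:outerplanar-projection-union}. The distance part needs a short case split on the composition type.

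\textbf{Distance.} For a parallel node, $V(H_1)\cap V(H_2)=\{s,t\}$. Hence any simple $s$--$t$ path in $H$ has no interior vertex in $\{s,t\}$, so it lies entirely in $H_1$ or entirely in $H_2$. This gives
\[
\dist_{\rho,H}(s,t)=\min\{\dist_{\rho_1,H_1}(s,t),\dist_{\rho_2,H_2}(s,t)\}.
\]
Since $\tau$ is monotone, $\tau(\min\{a,b\})=\min\{\tau(a),\tau(b)\}$, and therefore $d_\rho=\min\{d_1,d_2\}=d$.

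For a series node, $l$ is a cut vertex separating $s$ from $t$ in $H$. Every $s$--$t$ path therefore passes through $l$, splitting into an $s$--$l$ path in $H_1$ and an $l$--$t$ path in $H_2$. This gives
\[
\dist_{\rho,H}(s,t)=\dist_{\rho_1,H_1}(s,l)+\dist_{\rho_2,H_2}(l,t).
\]
It remains to check that $\tau(a+b)=\tau(\tau(a)+\tau(b))$ for nonnegative $a,b$. If both $a,b\le W+1$ this is trivial. Otherwise both sides equal $W+1$, since the sum already exceeds $W+1$ on either side.

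\textbf{Demand.} The subtle point is that the projection $\Pi_{H_i\to S}$ is defined using the distance matrix $D_H$ under the combined weights $\rho$, not under $\rho_i$ alone. So before invoking the corollary I need the relevant entries of $D_H$ on $S_i\cup S$ to be computable from the children's data.
\begin{itemize}
\item \emph{Parallel case:} $S_i=S$. The only entries needed are $D_H(s,s)=D_H(t,t)=0$ and $D_H(s,t)=d$, which the distance part has just established.
\item \emph{Series case:} the needed entries are $D_H(s,l)$, $D_H(l,t)$ and $D_H(s,t)$. Because $l$ is a cut vertex, $\dist_{\rho,H}(s,l)=\dist_{\rho_1,H_1}(s,l)$, and symmetrically $\dist_{\rho,H}(l,t)=\dist_{\rho_2,H_2}(l,t)$. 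The value $D_H(s,t)$ is $d$.
\end{itemize}

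With these entries in hand, \Cref{lem:outerplanar-projection} applies to each $H_i$ inside $H$. Its hypothesis, that $S_i$ separates $H_i$ from the rest of the graph, holds because $H_i$ is an SP piece attached only at its terminals. Then \Cref{cor:outerplanar-projection-union} directly gives $\lambda_\rho=\max\{\Pi_{H_1\to S},\Pi_{H_2\to S}\}=\lambda$.

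\textbf{Main obstacle: truncation.} Using $D_H$ truncated at $W+1$ inside the projection must not change the clipped maximum. When a true distance exceeds $W+1$ but is replaced by $W+1$, the term $\lambda_i(p,q)-D_H(p,s)-D_H(q,t)$ is at most $W-(W+1)<0$ under either value. Both versions therefore clip to $0$ under $[\cdot]_+$, so only the untruncated regime matters.

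I expect this bookkeeping to be the fiddliest step. I would verify it once carefully by splitting according to which distances exceed $W+1$.
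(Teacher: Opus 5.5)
Your proposal is correct and follows the paper's proof. For the distance you use the same parallel/series case split: a simple $s$--$t$ path stays inside one child, or passes through the cut vertex $l$. For the demand you read $\lambda$ off directly from \Cref{cor:outerplanar-projection-union}, as the paper does. Your additional checks make explicit details the paper passes over: that $\tau$ commutes with $\min$, that $\tau(a+b)=\tau(\tau(a)+\tau(b))$, that the $D_H$ entries used by the projection are determined by $d_1,d_2,d$, and that truncation does not affect the clipped demand.
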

\begin{proof}
    For both series and parallel composition, correctness of $\lambda$ follows from \Cref{cor:outerplanar-projection-union}.

    Suppose $H$ is the parallel composition of $H_1,H_2$.
    Since $H_1,H_2$ share the same terminal set $\{s,t\}$, any $s-t$ shortest path in $H$ remains entirely in either $H_1$ or $H_2$. Therefore
    \[
    d_\rho = \dist_\rho(s,t) = \min\cbk{\dist_{\rho_1,H_1}(s,t),\dist_{\rho_2,H_2}(s,t)} = \min\cbk{d_1,d_2}.
    \]

    Suppose $H$ is the series composition of $H_1,H_2$ with terminal sets $\{a,b\}$ and $\{b,c\}$ respectively, and consequently $\{a,c\}$ is the terminal set of $H$. Since $c$ is a cut vertex in $H$, every $a-c$ path in $H$ must cross $b$, that is 
    \[
    \dist_{\rho}(a,c) = \dist_{\rho}(a,b) + \dist_{\rho}(b,c) = d_1(a,b) + d_2(b,c)
    \]
    and by definition,
    \[
    d_\rho = \tau(\dist_{\rho}(a,c)) = \tau(d_1 + d_2) = d
    \]
    proving the claim.
\end{proof}
\begin{lemma}[Metricity Testing]\label{lem:SP-metricity-testing}
Let $H$ be a parallel composition of $H_1,H_2$ with terminal set $\{s,t\}$. Suppose that for $i\in \{1,2\}$ $(d_i,\lambda_i)$ is a state of $H_i$ realized by $\rho_i$, and that $\rho_i$ is metric on $H_i$. Then 
$\rho = \rho_1\cup \rho_2$ is metric if and only if the realized state $d,\lambda$ satisfied $d \ge \lambda$.
\end{lemma}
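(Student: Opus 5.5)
We prove both directions by characterizing broken cycles of $\rho=\rho_1\cup\rho_2$ through a single inequality. An assignment is non-metric exactly when some edge $ab$ satisfies $\rho(ab)>\dist_{\rho,H-ab}(a,b)$: a broken cycle has its heavy edge longer than the rest of the cycle, and conversely a short alternative path together with $ab$ contains a broken cycle. Since $\rho_1$ and $\rho_2$ are metric on $H_1$ and $H_2$, any violation must involve an edge $ab\in E(H_i)$, say $i=1$, whose shortest alternative $a$--$b$ path uses edges of both $H_1$ and $H_2$.

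\textbf{Structure of a violating path.} Because $\{s,t\}$ separates $H_1$ from $H_2$, such a path leaves $H_1$ through one terminal and re-enters through the other. It therefore decomposes as
\[
a \rightsquigarrow s \;\longrightarrow\; t \rightsquigarrow b
\]
(or with $s,t$ swapped). The middle segment is an $s$--$t$ path in $H_2$ of length at least $d_2$.

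\textbf{Necessity.} Suppose $\rho$ is metric. We take the edge $ab$ attaining the demand $\lambda$, say in $H_1$ (the argument is symmetric). We route $a\to s$ and $t\to b$ along shortest paths in $H$, and connect $s$ to $t$ along a shortest $s$--$t$ path in the \emph{other} child $H_2$. Metricity forces
\[
\rho(ab)\le \dist(a,s)+\dist(t,b)+\dist_{H_2}(s,t),
\]
so $\lambda\le d_2$. One also needs $\lambda\le d_1$. Here we use that $\lambda_1\le d_1$ follows from metricity of $\rho_1$ on $H_1$: an edge's demand is at most the distance between the terminals, by the triangle inequality inside $H_1$. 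We must then check that projecting through the distances in $H$, which may be shorter, preserves the bound. Combining the two gives $\lambda\le\min\{d_1,d_2\}=d$.

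\textbf{Sufficiency and the main obstacle.} For the converse, suppose $ab\in E(H_1)$ is heavy on some cycle. The path decomposition above gives
\[
\rho(ab)>\dist(a,s)+\dist(t,b)+d_2,
\]
hence $\lambda>d_2\ge d$, contradicting $d\ge\lambda$.

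The delicate point is the distances used. The subpaths $a\rightsquigarrow s$ and $t\rightsquigarrow b$ are not required to be shortest, only at least the true distances. Moreover, $\dist_{\rho,H}(a,s)$ is computed in $H$, possibly routing through $H_2$, while the actual path is simple and avoids reusing $H_2$. Using the smaller $H$-distances only strengthens the inequality, so this direction is fine.

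The truncation by $\tau$ and the clipping $[\cdot]_+$ also need care. If $d=W+1$, then $\lambda\le W<d$ holds automatically. This is consistent, because a path of length exceeding $W$ cannot make any edge of weight at most $W$ heavy. Negative demands, clipped to $0$, pose no threat. I expect this bookkeeping of truncation, together with justifying the reduction to a single cross path, to be the main technical step; the remainder follows from \Cref{lem:outerplanar-projection} and \Cref{cor:outerplanar-projection-union}.
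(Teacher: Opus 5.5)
Your proof follows the paper's: for ``not metric $\Rightarrow \lambda>d$'' you decompose a short alternative $a$--$b$ path through the separator $\{s,t\}$ exactly as the paper does, and your handling of $\tau$ and $[\cdot]_+$ in that direction is sound.

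There is one loose end, in the other direction. You get $\lambda\le d_2$, but you defer $\lambda\le d_1$ to a check, never carried out, that projection preserves $\lambda_1\le d_1$. That check is not automatic: replacing $\dist_{\rho_1,H_1}$ by the smaller $\dist_{\rho,H}$ can only increase each term $\rho(ab)-\dist(a,s)-\dist(b,t)$.

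The check is also unnecessary. Metricity gives $\rho(ab)=\dist_{\rho,H}(a,b)$, which is at most the length of any $a$--$b$ walk, whether or not it passes through $H_2$. Hence $\rho(ab)-\dist_{\rho,H}(a,s)-\dist_{\rho,H}(t,b)\le\dist_{\rho,H}(s,t)=\min\{\dist_{\rho_1,H_1}(s,t),\dist_{\rho_2,H_2}(s,t)\}$. Clamping then gives $\lambda\le d$ in one line, which is the paper's argument.
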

\begin{proof}
    $\rho$ is not metric if and only if there exists an edge $e = ab$ such that $\rho(ab) > \dist_\rho(a,b)$. WLOG assume $e\in E(H_1)$. Since $H_1,H_2$ are metric, any $a-b$ shortest path must cross between $H_1,H_2$ through the vertex cut $\{s,t\}$, that is
    \[
    \rho(ab) > \dist_\rho(a,b) = \dist_\rho(a,s) +\dist_\rho(s,t) + \dist_{\rho}(t,b).
    \]
    Rearranging we get
    \[
    \lambda \ge \rho(ab)-  \dist_\rho(a,s) -\dist_\rho(t,b) > \dist_{\rho}(s,t) = d.
    \]
    That is, if $\rho$ is not metric then $\lambda > d$. If $\rho$ is metric, then for all $e = ab$,
    \[
    \rho(ab) \le \dist_\rho(a,b) \le \dist_\rho(a,s) +\dist_\rho(s,t) + \dist_{\rho}(t,b).
    \]
    that is
    \[
    \rho(ab)-  \dist_\rho(a,s) -\dist_\rho(t,b) \le \dist_\rho(s,t) = d
    \]
    Taking the maximum over all $e =ab$ we get
    \[
    \lambda \le d.
    \]
\end{proof}
\begin{corollary}\label{cor:SP-metricity-kept}
    When constructing the table of $H$, a state $(d,\lambda)$ is kept if and only if it is realized by some metric $\rho$.
\end{corollary}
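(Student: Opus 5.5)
We argue by structural induction on the decomposition tree $T$, proving that for every node $H$ a state $(d,\lambda)$ receives a finite entry in $A_H$ if and only if it is realized by some increase-only $\rho:E(H)\to[W]$ that is metric on $H$. We also show that the stored value equals $\min_{\rho\in\Xx(d,\lambda)}\cost(\rho)$. The base case is a leaf $e=st$. Every $\rho$ on the single edge is trivially metric, and the guess $\ell$ realizes $(d,\lambda)=(\ell,\ell)$. Indeed, $d_\rho=\tau(\ell)=\ell$ since $\ell\le W$. For $\lambda_\rho$, the orientation $(s,t)$ gives $\ell-0-0=\ell$ and the orientation $(t,s)$ gives $\ell-2\ell<0$, so the clipped maximum is $\ell$. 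The cost is also exactly as defined in $\EdgeTable(e)$.

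For the inductive step, let $H$ be composed of $H_1,H_2$. In the forward direction (kept implies realized by a metric function), a state is written into $A_H$ only from child states $(d_i,\lambda_i)$ that are kept. By induction each of these is realized by a metric $\rho_i$ on $H_i$. By \Cref{lem:SP-update}, $\rho=\rho_1\cup\rho_2$ realizes $(d,\lambda)$. For a parallel node, the check $d\ge\lambda$ passed, so $\rho$ is metric by \Cref{lem:SP-metricity-testing}.

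For a series node no check is performed, so we must argue directly that $\rho$ is metric. In the series composition $V(H_1)\cap V(H_2)=\{l\}$, so $l$ is a cut vertex of $H$. Every cycle of $H$ is $2$-connected and therefore lies entirely in $H_1$ or in $H_2$. Hence any broken cycle under $\rho$ would already be broken under $\rho_1$ or $\rho_2$, which contradicts their metricity.

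In the converse direction, let $\rho$ be metric on $H$ and realize $(d,\lambda)$. Its restrictions $\rho_i$ are metric on $H_i$, because a broken cycle in $H_i$ is also a broken cycle in $H$. By induction the states $(d_{\rho_i},\lambda_{\rho_i})$ are kept in $A_{H_i}$. The DP enumerates this pair of states, and \Cref{lem:SP-update} shows the composition yields exactly $(d,\lambda)$. For a parallel node, \Cref{lem:SP-metricity-testing} gives $d\ge\lambda$, so the state is not discarded.

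Cost optimality follows from the same correspondence together with the identity $\cost(\rho_1\cup\rho_2)=\cost(\rho_1)+\cost(\rho_2)$. Taking the minimum over pairs of child states in the update gives at most the cost of any realizing $\rho$, because that $\rho$ splits into child realizers whose costs are at least the children's table entries. Conversely, each stored value is attained by some union of realizers.

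The step I expect to need the most care is the truncation of $d$ by $\tau$ and the clipping in $\psbk{\cdot}$. The metricity test compares the truncated $d$ with the clipped $\lambda$, and we need this comparison to agree with the untruncated one. It does: $\lambda\le W$ always holds, so if the true distance exceeds $W$, then $d=W+1>\lambda$ and the test passes, matching \Cref{lem:SP-metricity-testing}. Likewise, clipping negative values to $0$ never changes the truth of $d\ge\lambda$, since $d\ge0$.
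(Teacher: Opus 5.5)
Your proof is correct and follows the paper's argument. Series nodes cannot create a broken cycle because the cut vertex $l$ prevents new cycles, and at parallel nodes the check $d\ge\lambda$ is equivalent to metricity via \Cref{lem:SP-update} and \Cref{lem:SP-metricity-testing}, applied inductively. You are more explicit than the paper about the base case, the converse direction (restrictions of a metric $\rho$ are metric and realize kept child states), and why truncation and clipping are harmless; your cost-optimality paragraph is really the content of \Cref{lem:SP-invariant} rather than this corollary.
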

\begin{proof}
    If $H$ is the series composition of $H_1,H_2$, no new cycles are formed in $H$ By induction, since every state of $H_1,H_2$ is metric, so are all states of $H$.

    If $H$ is a parallel composition of $H_1,H_2$, then a state $(d,\lambda)$ is kept if and only if it passes the check $d \ge \lambda$. By \Cref{lem:SP-metricity-testing}, this happens if and only if it is realized by some metric $\rho$.
\end{proof}
\begin{lemma}[DP invariant]
\label{lem:SP-invariant}
    Suppose $A_H$ is the DP table of $H$ after processing all pairs of states $(d_1,\lambda_1),(d_2,\lambda_2)$ for $H_1,H_2$. Suppose $A_{H_1}, A_{H_2}$ are correct for $H_1,H_2$. Then for every state $(d,\lambda)$ of $H$,
    \[
    A_H(d,\lambda) = \min_{\rho\in \Xx(d,\lambda)}\cost(\rho).
    \]
\end{lemma}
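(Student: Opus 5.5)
We prove the two inequalities $A_H(d,\lambda)\ge \min_{\rho\in\Xx(d,\lambda)}\cost(\rho)$ and $A_H(d,\lambda)\le \min_{\rho\in\Xx(d,\lambda)}\cost(\rho)$ separately. We use the convention that both sides equal $+\infty$ when the corresponding set is empty or the entry was never updated. The hypothesis that $A_{H_1},A_{H_2}$ are correct means that each finite entry $A_{H_i}(d_i,\lambda_i)$ equals the minimum cost over metric, increase-only $\rho_i:E(H_i)\to[W]$ realizing $(d_i,\lambda_i)$.

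For the first inequality ($\ge$), suppose $A_H(d,\lambda)$ is finite. Its value was set by some pair of child states $(d_1,\lambda_1),(d_2,\lambda_2)$ whose composition is $(d,\lambda)$ and that passed the check (for a parallel node). By correctness of the child tables, choose minimizers $\rho_i\in\Xx_{H_i}(d_i,\lambda_i)$ with $\cost(\rho_i)=A_{H_i}(d_i,\lambda_i)$, and set $\rho=\rho_1\cup\rho_2$. We then verify three facts:
\begin{itemize}
\item[(i)] $\rho$ realizes $(d,\lambda)$, by \Cref{lem:SP-update};
\item[(ii)] $\rho$ is metric, by \Cref{lem:SP-metricity-testing} for parallel nodes since the check $d\ge\lambda$ passed, or because a series composition creates no new cycles, as in \Cref{cor:SP-metricity-kept};
\item[(iii)] $\rho$ is increase-only with values in $[W]$, since it is a union of such functions.
\end{itemize}
Hence $\rho\in\Xx(d,\lambda)$, and $\cost(\rho)=\cost(\rho_1)+\cost(\rho_2)=A_H(d,\lambda)$.

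For the second inequality ($\le$), take any $\rho\in\Xx(d,\lambda)$ and let $\rho_i$ be its restriction to $E(H_i)$. Each $\rho_i$ is increase-only with values in $[W]$. It is also metric on $H_i$, because any broken cycle in $H_i$ is a broken cycle in $H$. Let $(d_i,\lambda_i)$ be the state realized by $\rho_i$, so that $A_{H_i}(d_i,\lambda_i)\le\cost(\rho_i)$. The DP processes this pair of states. By \Cref{lem:SP-update}, their composition equals the profile realized by $\rho_1\cup\rho_2=\rho$, namely $(d,\lambda)$. For a parallel node the check passes, by \Cref{lem:SP-metricity-testing} applied to the metric $\rho$. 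Therefore the update rule gives
\[
A_H(d,\lambda)\le A_{H_1}(d_1,\lambda_1)+A_{H_2}(d_2,\lambda_2)\le\cost(\rho_1)+\cost(\rho_2)=\cost(\rho).
\]
Minimizing over $\rho$ yields the claim.

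The main obstacle is the subtlety inside \Cref{lem:SP-update} and the projection step. The projected demand $\Pi_{H_i\to S}$ is defined through the distance matrix $D_H$ of the combined graph, not through the child states alone. We must check that the composition rule computes $\lambda$ from the child states, so that any two realizers of the same child state yield the same composed state. For the parallel case, $D_H$ on $\{s,t\}$ is just $d=\min\{d_1,d_2\}$. For the series case, the distances between $s,l,t$ are determined by $d_1,d_2$ through the cut vertex $l$. Truncation by $\tau$ is harmless, because a truncated distance $W+1$ already kills any demand, which is at most $W$. With this observation, the two inequalities go through as sketched.
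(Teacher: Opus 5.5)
Your proof is correct and follows the paper's argument. The $\le$ direction restricts an optimal $\rho$ to $H_1,H_2$ and invokes \Cref{lem:SP-update}, exactly as the paper does, and the $\ge$ direction corresponds to the paper's appeal to \Cref{cor:SP-metricity-kept}; your version of it is slightly more careful, since gluing cost-minimizing child realizers gives a member of $\Xx(d,\lambda)$ whose cost equals $A_H(d,\lambda)$, whereas the paper only notes that kept states are realized by \emph{some} metric.
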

\begin{proof}
    Let $(d,\lambda)$ be a valid state of $H$, and let $\DP = A_H(d,\lambda)$. Let $\OPT = \min_{\rho\in \Xx(d,\lambda)}\mathbf{cost}(\rho)$. By \Cref{cor:SP-metricity-kept}, we only keep states realized by metrics, so $\DP \ge \OPT$. Conversely, let $\rho^\star$ be a metric that attains the optimal cost. Its restriction to $H_i$, $\rho_i$ is a valid assignment inducing a state $(d_i,\lambda_i)$. 
    By \Cref{lem:SP-update}, their composition is $(d,\lambda)$. By the correctness of $A_{H_i}$, $A_{H_i}(d_i,\lambda_i) \le \mathbf{cost}(\rho_i)$. Therefore
    \[
    \DP \le A_{H_1}(d_1,\lambda_1) + A_{H_2}(d_2,\lambda_2) \le \mathbf{cost}(\rho_1)+\mathbf{cost}(\rho_2) = \mathbf{cost}(\rho^\star) = \OPT.
    \]
\end{proof}

\begin{theorem}
Integer-valued \IOMR{} on Series-Parallel graphs, with repaired edge lengths restricted
to $[W]$, can be solved in time
\[
O(W^{4}\cdot |E|).
\]
\end{theorem}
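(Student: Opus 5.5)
The plan is to assemble the already-established lemmas into an induction over the SP decomposition tree $T$, and then count the work done at each node.

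First, I build $T$ in linear time using \cite{SP-graphs}. I then fill in the tables bottom-up. Each leaf gets its $\EdgeTable(e)$. These tables are correct by inspection. For a single edge $e=st$ and a chosen length $\ell$, the truncated distance is $\tau(\ell)=\ell$. The demand is $[\ell-0-0]_+=\ell$, since the other orientation gives the same value. Every such $\rho$ is trivially metric, and its cost is exactly $\cost(\ell)$.

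For an internal node $H$ with children $H_1,H_2$, I assume inductively that $A_{H_1}$ and $A_{H_2}$ are correct. \Cref{lem:SP-invariant} then shows that the series or parallel update produces a correct $A_H$. It relies on \Cref{lem:SP-update} for the realized profile and on \Cref{cor:SP-metricity-kept} for metricity. At the root $G$, I output $\min_{(d,\lambda)}A_G(d,\lambda)$. This equals the minimum cost over all increase-only metric $\rho:E\to[W]$. By \Cref{thm:fan-characterization}, that is the optimum of \IOMR{} restricted to repaired lengths in $[W]$. If a witness is wanted, back-pointers recover $\rho$.

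The point I expect to need the most care is that the projection $\Pi_{H_i\to\{s,t\}}$ must be computable from the child states alone. It needs $D_H(p,s)$ and $D_H(q,t)$ for $p,q\in S_i$.

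In the parallel case $S_i=S$, so these distances are $0$ on the diagonal and $d=\min\{d_1,d_2\}$ off it. In the series case, $H_1$ has terminals $\{s,l\}$ and $H_2$ has terminals $\{l,t\}$, and every needed distance is one of $0$, $d_1$, $d_2$, or $d_1+d_2$, truncated by $\tau$.

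I must still check that truncating at $W+1$ is harmless. Demands are at most $W$, so any subtracted distance of at least $W+1$ already drives the term to $\le 0$, and it is clipped to $0$ either way.

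The child states also store only the single value $\lambda_i(p_i,q_i)$, while the maximum in $\Pi$ runs over ordered pairs $p,q\in S_i$. For $p=q$ the demand is $\le 0$, because $\rho(ab)\le \dist(a,p)+\dist(p,b)$ holds when $\rho_i$ is metric. The reversed pair $(q,p)$ gives the same value by the symmetry of ranging over both orientations. So only $\lambda_i$ is needed, and each combination takes $O(1)$ time.

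For the running time, each table has at most $(W+2)(W+1)=O(W^2)$ states. An internal node enumerates all pairs of child states, which is $O(W^4)$ pairs at $O(1)$ each. $T$ is a binary tree whose leaves are the $|E|$ edges, so it has $|E|-1$ internal nodes. Leaf tables cost $O(W)$ each. The total is $O(W^4\cdot|E|)$, plus the linear time to build $T$.
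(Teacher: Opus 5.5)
Your proposal is correct and is essentially the paper's proof. It runs a bottom-up induction over the SP decomposition tree via \Cref{lem:SP-invariant} with the edge tables as the base case, reads the answer off the root, and counts $O(W^2)$ states, $O(W^4)$ state pairs per internal node, and $O(|E|)$ nodes. Your explicit checks fill in details the paper leaves implicit: the projections need only the distances $0,d_1,d_2,d_1+d_2$ (resp.\ $0,d$) and the one stored $\lambda_i$, with the $p=q$ terms nonpositive by metricity, and truncation at $W+1$ is harmless. The only slip is cosmetic: in the leaf check the reversed orientation gives $-\ell$ rather than the same value, which leaves the maximum $\ell$ unchanged.
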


\begin{proof}
Let $G$ be a SP graph.
We assume $G$ is $2$ connected. If it is not, we apply the algorithm repeatedly to each $2$-connected block.

Let $T$ be an SP decomposition of $G$. The leaves of $T$ correspond to edges of $G$, and for every $e\in E(G)$, $\EdgeTable(e)$ holds the correct value for \IOMR{} of the subgraph $e$. By inductively applying \Cref{lem:SP-invariant}, the dynamic program of every node is updated correctly. The subgraph represented at the root of $T$ is $G$, so the dynamic program holds the minimum Increase Only Metric Repair cost for $G$.

It remains to bound the running time. A two-terminal state consists of distance values in $\{0,\ldots,W+1\}$ and demand values in
$\{0,\ldots,W\}$, so there are $O(W^{2})$ states. A naive composition of
states considers all pairs of states and performs only constant-size
arithmetic, so each composition takes $O(W^{4})$ time. The tree $T$ has
$O(|E|)$ nodes and edges, and each edge of $T$ participates in
constantly many table operations. Therefore the total running time is $O(W^{4}\cdot |E|)$.

\end{proof}
\paragraph{Extracting a metric repair function}
 To construct a function $\rho$ attaining this cost, one needs to remember which edge guesses attain minimum costs. When processing $T$, if a parallel composition involved a leaf, we keep the ID of this leaf and its guess value alongside the table. This incurs an additional constant space and time cost for each update, so the asymptotic runtime and space complexity stays the same. To extract $\rho$, simply follow these pointers down the trees starting at the root.

\paragraph{Adapting the algorithm to the general case.} To adapt the algorithm to the \MR~problem, one is not restricted to values greater than $w(e)$, but to all valued in $[W]$, and the cost of a metric repair function $\rho$ is the number of modified, increased or decreased, edges. Changing the edge tables to include all values in $[W]$ results in the general algorithm.

%% file: bdd_tw_fpt.tex
In this section we generalize the algorithm from \Cref{sec:sp} to trees of bounded treewidth. We begin by introducing the terminology and notation needed.
\subsection{Set Up}
Let \(G=(V,E)\) be a graph. a \emph{tree decomposition} of $G$ is a pair
\[
    \mathcal T=(T,\{B_t\}_{t\in V(T)})
\]
where $T$ is a tree, and each node \(t\in V(T)\) is assigned a set $B_t\subseteq V(G)$, called the \emph{bag} at \(t\), such that the following properties hold:
\begin{enumerate}
    \item \emph{Vertex coverage:} every vertex of \(G\) appears in some bag:
    \[
        \bigcup_{t\in V(T)} B_t = V(G).
    \]

    \item \emph{Edge coverage:} for every edge \(uv\in E(G)\), there is a bag containing both
    endpoints:
    \[
        \exists t\in V(T) \quad \text{such that} \quad u,v\in B_t.
    \]

    \item \emph{Running intersection:} for every vertex \(v\in V(G)\), the set of tree nodes whose
    bags contain \(v\),
    \[
        \{t\in V(T):v\in B_t\},
    \]
    induces a connected subtree of \(T\).
\end{enumerate}
The \emph{width} of $\Tt$ is 
\[
\mathrm{width}(\Tt) = \max_{t\in V(T)} |B_t| - 1.
\]
The \emph{treewidth} of $G$ is 
\[
\tw(G) = \min_{\Tt}\textrm{width(\Tt)}
\]
where the minimum is taken over all tree decompositions of $G$. 

\paragraph{(Nice) Tree decomposition notation.}

We say that a tree decomposition is \emph{nice}~\cite{bodlaender1996efficient} if every node of \(T\) is one of the following types:
\begin{itemize}
    \item A \emph{leaf} node, whose bag is empty.

    \item An \emph{introduce-vertex} node, with one child \(t'\), where
    \(B_t=B_{t'}\cup\{v\}\) for some vertex \(v\notin B_{t'}\).

    \item A \emph{forget-vertex} node, with one child \(t'\), where
    \(B_t=B_{t'}\setminus\{v\}\) for some vertex \(v\in B_{t'}\).

    \item An \emph{introduce-edge} node, with one child \(t'\), where \(B_t=B_{t'}\), and one edge
    \(uv\in E(G)\), with \(u,v\in B_t\), is introduced.

    \item A \emph{join} node, with two children \(t_1,t_2\), where
    \(B_t=B_{t_1}=B_{t_2}\).
\end{itemize}
It is known~\cite{bodlaender1996efficient} that given $G$ of treewidth $r$, one can find a nice tree decomposition in $O(f(r)\cdot |E|)$. From this point onwards, we assume we have access to a nice tree decomposition of $G$. This will make the Dynamic Program cleaner. To set notation, for a tree node $t$, let $G_t = (V(G_t), E_t)$ be the subgraph induced by the subtree rooted in $t$. In DP terms,$G_t$ is the graph processed by the DP up to node $t$. Let $\overline{G}_t$ be the graph left to process, alongside $B_t$. That is, $V(G_t)\cap V(\overline{G}_t) = B_t$.
Note that if $t$ is a child of $t'$, then $B_t$ is a vertex separator in $G_{t'}$. This is the analog of a two-terminal set for Series Parallel graphs, we therefore call the vertices in $B_t$ the \emph{terminals}.

\subsection{Bag profiles}

Recall from \Cref{sec:sp} that
\[
\tau(k):=\min\{k,\,W+1\},\qquad [k]_+:=\min\{W,\max\{0,k\}\},
\]
We describe the dynamic program's states and tables in an analogous way to \Cref{sec:sp}. 

\paragraph{Leaf tables.} The base case is a leaf node, whose bag is empty and whose graph $G_t$ has no
edges. Its table holds the single empty profile with cost $0$.

\paragraph{Bag profiles.} Let $t$ be a node with terminals $B_t$ and processed graph $G_t=(V_t,E_t)$. Recall that $\dist_{\rho,G_t}$ is the distance function induced by $\rho$ on the graph $G_t$. We write $\dist_t\defeq\dist_{\rho,G_t}$. For an increase-only repaired weight function $\rho:E_t\to[W]$, the
\emph{bag profile realized by $\rho$} is the pair of matrices $(D_\rho,\Lambda_\rho)$, indexed by $B_t$,
defined for $y,z\in B_t$ by
\[
D_\rho(y,z)=\tau\big(\dist_t(y,z)\big),
\qquad
\Lambda_\rho(y,z)=\psbk{\max_{e=ab\in E_t}\big(\rho(e)-\dist_t(a,y)-\dist_t(b,z)\big)} ,
\]
where the maximum ranges over both orientations of each edge. Thus $D_\rho(y,z)\in\{0,\dots,W+1\}$ and
$\Lambda_\rho(y,z)\in\{0,\dots,W\}$. Intuitively, $\Lambda_\rho(y,z)$ records how long a future outside
route between $y$ and $z$ must be in order not to make an already-processed edge of $G_t$ heavy---the
matrix version of $\lambda_\rho$ in \Cref{sec:sp}.

We say a state $(D_t,\Lambda_t)$ is \emph{realized by $\rho$} if $D_t=D_\rho$ and $\Lambda_t=\Lambda_\rho$.
Let $\Xx(D_t,\Lambda_t)$ be the set of all increase-only functions $\rho:E_t\to[W]$ that are metric
on $G_t$ and realize $(D_t,\Lambda_t)$. The dynamic-programming table at $t$ stores
\[
A_t(D_t,\Lambda_t)=\min_{\rho\in\mathcal X(D_t,\Lambda_t)}\cost(\rho),
\]
and $A_t(D_t,\Lambda_t)=+\infty$ when $\mathcal X(D_t,\Lambda_t)=\emptyset$. The program never stores a
witnessing $\rho$; every finite entry should be read as being represented by at least one such function,
an invariant maintained by the transitions and we prove in \Cref{lem:tw-inv}.

\subsection{Projection of demands}

When we build $G_t$ by introducing an edge or by joining two children, the bag $B_t$ is unchanged, but
the distances between its vertices may shrink and edges processed earlier may constrain future routes --- these are the analogous operations to parallel and series composition. We isolate this bookkeeping in a single \emph{projection} operation---the treewidth analogue of the demand projection $\Pi_{H_i\to S}$. In both these cases, $B_t$ acts as a vertex separator between $G_t$ and the subgraphs induced by the children of $t$, hence we project the demands induced by $\Lambda$ onto pairs of terminals in $B_t$, according to the new distance matrix $D$ of $G_t$.

Recall that the demand realized by an increase-only assignment $\rho$ on a processed subgraph $G'=(V',E')$ with bag $B$ is the matrix indexed by $p,q\in B$ defined by 
\[
\Lambda'(p,q)=\psbk{\max_{e=ab\in E'}\big(\rho(e)-\dist_{\rho,G'}(a,p)-\dist_{\rho,G'}(b,q)\big)},
\]
and the maximum ranging over both orientations of each edge.

\begin{definition}[Projection]
Let $B$ be a set of bag vertices, and let $D$ be a distance matrix and $\Lambda$ a demand matrix, both
indexed by $B$. The \emph{projection of $\Lambda$ through $D$} is the demand matrix $\Pi_D[\Lambda]$ with
\[
\Pi_D[\Lambda](y,z)=\psbk{\max_{p,q\in B}\big(\Lambda(p,q)-D(p,y)-D(q,z)\big)}
\]
\end{definition}

\begin{lemma}[Projection of demands]\label{lem:tw-proj}
Let $G'=(V',E')$ be a subgraph of $G_t$ with $B\subseteq V'$ a vertex separator between $G'$ and the rest
of $G_t$, let $\rho$ be increase-only on $E_t$, and let $\Lambda'$ be the demand realized by $\rho|_{E'}$
on $G'$. Let $D_t$ be the truncated distance matrix of $G_t$ restricted to $B$. Then for all $y,z\in B$,
\[
\Pi_{D_t}[\Lambda'](y,z)=\Big[\max_{e=ab\in E'}\big(\rho(e)-\dist_{\rho,G_t}(a,y)-\dist_{\rho,G_t}(b,z)\big)\Big]_+ ,
\]
the maximum ranging over both orientations of each edge.
\end{lemma}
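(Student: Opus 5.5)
We follow the proof of \Cref{lem:outerplanar-projection} closely, replacing the two terminals $S_i$ by the separator $B$. The core step is a decomposition of distances through the separator. Fix an oriented edge $ab\in E'$ and $y\in B$. Since $B\subseteq V'$ separates $G'$ from the rest of $G_t$, any shortest $a$--$y$ path in $G_t$ either stays inside $G'$ or leaves $V'\setminus B$ for the first time through some $p\in B$. In the second case its prefix is an $a$--$p$ path in $G'$. Hence
\[
\dist_{\rho,G_t}(a,y)=\min_{p\in B}\big(\dist_{\rho,G'}(a,p)+\dist_{\rho,G_t}(p,y)\big).
\]
The case $p=y$ covers paths staying in $G'$, provided we note $\dist_{\rho,G'}(a,y)\ge\dist_{\rho,G'}(a,p)+\dist_{\rho,G_t}(p,y)$ is never needed: the $\ge$ direction comes from the splitting argument, and the $\le$ direction is the triangle inequality. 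The same identity holds for $b$ and $z$ with some $q\in B$. Substituting and pulling the minima out as maxima gives
\[
\rho(ab)-\dist_{\rho,G_t}(a,y)-\dist_{\rho,G_t}(b,z)=\max_{p,q\in B}\Big(\rho(ab)-\dist_{\rho,G'}(a,p)-\dist_{\rho,G'}(b,q)-\dist_{\rho,G_t}(p,y)-\dist_{\rho,G_t}(q,z)\Big).
\]
Taking the maximum over all oriented edges of $E'$ and swapping the two maxima, the inner maximum becomes the unclipped demand
\[
\lambda^\ast(p,q)=\max_{ab\in E'}\big(\rho(ab)-\dist_{\rho,G'}(a,p)-\dist_{\rho,G'}(b,q)\big),
\]
so the right-hand side of the lemma equals $\big[\max_{p,q}(\lambda^\ast(p,q)-\dist_{\rho,G_t}(p,y)-\dist_{\rho,G_t}(q,z))\big]_+$.

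The main obstacle is reconciling two mismatches with the definition of $\Pi_{D_t}[\Lambda']$: it uses the clipped $\Lambda'=[\lambda^\ast]_+$, and it uses truncated distances $D_t=\tau(\dist)$ instead of true distances.

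For the clipping, $\lambda^\ast\le W$ always holds since $\rho\le W$ and distances are nonnegative. So $\Lambda'$ differs from $\lambda^\ast$ only where $\lambda^\ast<0$. There $\Lambda'=0$, and since distances are nonnegative, $0-D(p,y)-D(q,z)\le 0$. This contributes at most $0$, which is absorbed by the outer $\max\{0,\cdot\}$, while the corresponding term with $\lambda^\ast$ is also $\le 0$. Thus replacing $\lambda^\ast$ by $[\lambda^\ast]_+$ inside leaves the outer $[\cdot]_+$ unchanged; this is monotonicity of $[\cdot]_+$ together with the fact that $[\max(x,0)]_+=[x]_+$ after subtracting nonnegative terms.

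For the truncation, whenever $\dist_{\rho,G_t}(p,y)\ge W+1$ we have $D_t(p,y)=W+1$, and both versions of the term are then at most $\Lambda'(p,q)-(W+1)<0$. Such terms are killed by the clipping in both expressions, so truncated and true distances can be interchanged term by term. With both points checked, the two sides agree for every $y,z\in B$, which completes the plan.
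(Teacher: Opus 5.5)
Your proposal is correct and follows the same route as the paper: decompose $\dist_{\rho,G_t}(a,y)$ through the first separator vertex on a shortest path, turn the two minima into a joint maximum over $p,q\in B$, swap maxima to expose the unclamped demand, and note that clamping $\Lambda'$ only alters terms the outer $[\cdot]_+$ already kills. You are also slightly more careful than the paper, which writes the truncated $D_t(p,y)$ directly into the distance decomposition; you use true distances and then check that truncation only affects terms that are already negative.
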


\begin{proof}
Write $\dist_t=\dist_{\rho,G_t}$ and $\dist'=\dist_{\rho,G'}$. Fix an edge $ab\in E'$ and $y\in B$. Since
$B$ separates $G'$ from the rest of $G_t$, every $a$--$y$ path in $G_t$ crosses $B$; up to its first vertex
of $B$ the path stays in $G'$ and is shortest there. Hence
\[
\dist_t(a,y)=\min_{p\in B}\big(\dist'(a,p)+D_t(p,y)\big)
\]
and similarly
\[
\dist_t(b,z)=\min_{q\in B}\big(\dist'(b,q)+D_t(q,z)\big)
\]
Substituting and turning the two minima into a joint maximum,
\[
\rho(ab)-\dist_t(a,y)-\dist_t(b,z)
=\max_{p,q\in B}\big(\rho(ab)-\dist'(a,p)-\dist'(b,q)-D_t(p,y)-D_t(q,z)\big).
\]
Recall that the \emph{unclamped} demand on $G'$ is
\[
\max_{ab\in E'}\big(\rho(ab)-\dist'(a,p)-\dist'(b,q)\big) = \widetilde\Lambda'(p,q).
\]
So taking the maximum over all $ab\in E'$ and exchanging the two maxima we get
\[
\max_{ab\in E'}\big(\rho(ab)-\dist_t(a,y)-\dist_t(b,z)\big)
=\max_{p,q\in B}\Big(\widetilde\Lambda'(p,q)-D_t(p,y)-D_t(q,z)\Big).
\]

Edge weights are at most $W$ and internal distances are nonnegative, so
$\widetilde\Lambda'(p,q)\le W$; and since $D_t(p,y),D_t(q,z)\ge0$, any pair with
$\widetilde\Lambda'(p,q)<0$ contributes a nonpositive term and is therefore irrelevant after the outer
clamp. Thus replacing $\widetilde\Lambda'$ by $\Lambda'=[\widetilde\Lambda']_+$ leaves the value unchanged,
and applying $[\cdot]_+$ to both sides yields
\[\psbk{\max_{ab\in E'}\big(\rho(ab)-\dist_t(a,y)-\dist_t(b,z)\big)}
=\psbk{\max_{p,q\in B}\big(\Lambda'(p,q)-D_t(p,y)-D_t(q,z)\big)}
\]
and the right hand side is precisely $\Pi_{D_t}[\Lambda'](y,z)$, concluding the proof.
\end{proof}

\begin{corollary}[Demand of a composition]\label{cor:tw-demand-composition}
Suppose $E_t=E'\sqcup E''$, and let $\Lambda',\Lambda''$ be the demands realized by $\rho$ on the two
parts. Then the demand realized on $G_t$ is the entrywise maximum of the projected parts,
\[
\Lambda_t=\Pi_{D_t}[\Lambda']\ \vee\ \Pi_{D_t}[\Lambda''].
\]
where $\vee$ is the entry-wise maximum of the two matrices.
\end{corollary}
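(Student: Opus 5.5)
The plan is to reduce the statement to \Cref{lem:tw-proj}, applied once to each of the two parts, and then combine the results using the fact that a maximum over a disjoint union splits into a maximum of maxima. Let $G'=(V',E')$ and $G''=(V'',E'')$ be the subgraphs spanned by the two parts. In the intended uses (introduce-edge and join nodes), both contain $B_t$, and $B_t$ separates each from the remainder of $G_t$. The separation hypothesis is precisely what \Cref{lem:tw-proj} needs. I would state it as an explicit assumption of the corollary, exactly as \Cref{cor:outerplanar-projection-union} did implicitly.

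First, apply \Cref{lem:tw-proj} with $B=B_t$ and $D_t$ the truncated distance matrix of $G_t$. For all $y,z\in B_t$ this gives
\[
\Pi_{D_t}[\Lambda'](y,z)=\Big[\max_{ab\in E'}\big(\rho(ab)-\dist_t(a,y)-\dist_t(b,z)\big)\Big]_+ ,
\]
and the analogous identity for $\Lambda''$ over $E''$.

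Next, take the entrywise maximum. The clamp $[\cdot]_+=\min\{W,\max\{0,\cdot\}\}$ is monotone nondecreasing, so it commutes with binary maxima: $\max\{[x]_+,[x']_+\}=[\max\{x,x'\}]_+$. Because $E_t=E'\sqcup E''$, the maximum over orientations of edges in $E'$ and the maximum over $E''$ combine into the maximum over $E_t$. The result is
\[
\big(\Pi_{D_t}[\Lambda']\vee\Pi_{D_t}[\Lambda'']\big)(y,z)=\Big[\max_{ab\in E_t}\big(\rho(ab)-\dist_t(a,y)-\dist_t(b,z)\big)\Big]_+ ,
\]
which is by definition $\Lambda_\rho(y,z)=\Lambda_t(y,z)$.

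The only real obstacle is checking that the hypotheses of \Cref{lem:tw-proj} hold. At a join node, $G'=G_{t_1}$ and $G''=G_{t_2}$. By the running intersection property they meet only in $B_t$, so $B_t$ separates each from the other part. At an introduce-edge node, one part is the single edge $uv$ with $u,v\in B_t$, where separation is trivial, and the other part is $G_{t'}$. If some vertex of $B_t$ is not incident to $E'$, I would add it to $V'$ as an isolated vertex so that $B\subseteq V'$. This is harmless because isolated vertices contribute no edges and do not change the distances used in the projection.

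There is one more subtlety. \Cref{lem:tw-proj} uses the truncated $D_t$, while the right-hand side uses the true distances. Truncation at $W+1$ does not affect the clamped value, since $\Lambda'\le W$ means any term with $D_t\ge W+1$ is already negative and is removed by the clamp. I would note this remark, but it is already covered inside \Cref{lem:tw-proj}, so nothing new is needed.
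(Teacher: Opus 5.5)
Your proposal is correct and follows the paper's own argument: apply \Cref{lem:tw-proj} to each part, then split the maximum over $E_t=E'\sqcup E''$, using that the monotone clamp $[\cdot]_+$ commutes with binary maxima. Making the separation hypothesis explicit at join and introduce-edge nodes is a useful addition rather than a change of route: the paper leaves it implicit, and without it the identity can fail.
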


\begin{proof}
By definition $\Lambda_t(y,z)=\big[\max_{e\in E_t}(\rho(e)-\dist_t(a,y)-\dist_t(b,z))\big]_+$, and since
$E_t=E'\sqcup E''$ the maximum over $E_t$ is the larger of the maxima over $E'$ and $E''$. Applying
\Cref{lem:tw-proj} to each part gives the claim.
\end{proof}

\subsection{Transition Rules}
We now describe the table update at each node type.

Introduce Vertex and Forget Vertex nodes add no edges to $G_t$ and close no cycles, while Introduce Edge and Join nodes may close cycles. They are treated in a similar manner to the parallel composition in \Cref{sec:sp}: project the demands from the previously processed subgraphs onto the bag $B_t$ according to the new distance matrix $D_t$, then test for metricity.

\paragraph{Introduce Vertex} If a vertex $v$ is added to $B_{t-1}$ to form $B_t$, it has no incident edges in $G_t$. For every state $(D_{t-1},\Lambda_{t-1})$ of finite cost in $B_{t-1}$:
\begin{enumerate}
    \item Extend the matrix $D_{t-1}$ to $D_t$ by setting $D_t(v,v) = 0$. For every $u\in B_{t-1}$ set $D_t(u,v) = D_t(v,u) = W+1$ and $\Lambda_t(u,v) = \Lambda_t(v,u) = 0$.
    \item Set $A_t(D_t,\Lambda_t) = A_{t-1}(D_{t-1},\Lambda_{t-1})$.
\end{enumerate}
An isolated terminal is at (truncated) distance $W+1$ from every other terminal and imposes no demand, so no metricity test is needed.

\paragraph{Forget vertex} Suppose $t$ forgets vertex $v$ with child $t-1$. For every valid state $(D_{t-1},\Lambda_{t-1})$:
\begin{enumerate}
    \item Form $(D_t,\Lambda_t)$ by deleting the row and column corresponding to $v$.
    \item Update $A_t(D_t,\Lambda_t) = \min\Big(A_t(D_t,\Lambda_t),A_{t-1}(D_{t-1},\Lambda_{t-1})\Big)$
\end{enumerate}
Again, no metricity test is needed.
\paragraph{Introduce Edge} Introducing edge $e = uv$ of weight $w(e)$. For every valid state $(D_{t-1},\Lambda_{t-1})$ and every possible value $\rho(e)\in \{w(e),w(e) +1,\ldots ,W\}$:
\begin{enumerate}
    \item If $\rho(e) > D_{t-1}(u,v)$, discard the guess
    \item For all $y,z\in B_t$, set
    \begin{flalign*}
    \Dnew(y, z) = \tau\rbk{\min \Big\{D_{t-1}(y,z),\rho(uv) + D_{t-1}(y,u) + D_{t-1}(v,z)\Big\}}
    \end{flalign*}
    where the minimum is taken over both orientations $uv, vu$.

    \item Project the old demands through the new distances and add the demand of $e$:
    \[
    \Lnew = \Pi_{\Dnew}[\Lambda_{t-1}] \vee \Lambda_e
    \]
    Where 
    \[
    \Lambda_e (y,z) = \psbk{\max_{(a,b)\in \{(u,v),(v,u)\}}(\rho(e) - \Dnew(a,y)-\Dnew(b,z)}
    \]
    \item If $\Dnew(y,z) < \Lnew(y,z)$ for some $y,z\in B_t$, discard the guess. 
    \item Otherwise update $A_t(\Dnew, \Lnew) = \min\Big(A_t(\Dnew,\Lnew), A_{t-1}(D_{t-1},\Lambda_{t-1}) + c_e(\rho(e))\Big)$.
    \end{enumerate}
\paragraph{Join} Suppose $t$ is a join of children $t_1,t_2$ and $B_t = B_{t_1} = B_{t_2}$. For every pair of states $(D_1,\Lambda_1)$ of $t_1$ and $(D_2,\Lambda_2)$ of $t_2$:
\begin{enumerate}
    \item Set $\Dtemp = D_1(y,z)\wedge D_2(y,z))$, where $\wedge$ is the entrywise minimum of $D_1,D_2$. Let $D_t$ be the truncated all-pairs-shortest-paths closure of $\Dtemp$ over $B_t$.
    \item project the merged demands $\Lambda_1,\Lambda_2$ through $D_t$:
    \[
    \Lambda_t = \Pi_{D_t}[\Lambda_1\vee \Lambda_2].
    \]
    \item If $D_t(y,z) < \Lambda_t(y,z)$ for some $y,z\in B_t$, discard the pair $(D_t,\Lambda_t)$. 
    \item Otherwise update $A_t(D_t,\Lambda_t) = \min\Big( A_t(D_t, \Lambda_t), \; A_{t_1}(D_1, \Lambda_1) + A_{t_2}(D_2, \Lambda_2) \Big) $.
\end{enumerate}

\subsection{Correctness}
We prove correctness by showing that at every node $t$ and every state $(D_t,\Lambda_t)$ the invariant
\begin{equation}\tag{$\star$}\label{eq:inv}A_t(D_t,\Lambda_t) = \min_{\rho\in \Xx(D_t,\Lambda_t)}\cost(\rho)
\end{equation}
is maintained.
Leaf and Introduce/Delete vertex nodes add no edge and leave the metric structure of the processed graph unchanged, so they are immediate. The substance is in the two cycle-closing nodes, Introduce Edge and Join, which we treat in parallel and distinguish only where necessary. For ease of presentation, for an Introduce Edge and Join nodes we write  $E_t = E_{t_1}\sqcup E_{t_2}$. For a Join node, $E_{t_i}$ is the edge set of the graph processed until the child $t_i$. For Introduce Edge, we think of $E_{t_1} = E_{t-1}$, that is the edge set of the graph $G_{t-1}$ processed until the child $t-1$ of $r$, and $E_{t_2} = \{e^*\}$ is the newly introduced edge. We furthermore write the combined assignment $\rho = \rho_1 \cup \rho_2$, where $\rho_1,\rho_2$ are metrics realizing the states of $t_1,t_2$ respectively. For an Introduce Edge node, $\rho_2$ is simply the function $e^*\mapsto \rho(e^*)$.

The proof follows the structure of the proof of \Cref{sec:sp}. We first show that the transitions compute the correct updated profiles (\Cref{lem:tw-correct-update}). Then we show that a state is valid if and only if it is realized by some metric $\rho$ (\Cref{lem:tw-metr-test} and \Cref{cor:tw-real-states}). Finally, we show that we maintain the DP invariant above in \Cref{lem:tw-inv}, and conclude with the main theorem. 

Throughout we assume the children's tables satisfy \eqref{eq:inv}; in particular every child state we combine is realized
by some increase-only $\rho_i$ that is metric on the child's subgraph. We write $\dist_t\defeq\dist_{\rho,G_t}$,
and note that truncating distances at $W+1$ and clamping demands into $[0,W]$ never affects metricity: a
broken cycle has a heavy edge of weight at most $W$ and a complementary path of smaller weight, so all
quantities that matter lie in the tracked ranges.

\begin{lemma}[Correctness of updates]\label{lem:tw-correct-update}
    Let $t$ be an introduce-edge node adding $e^*=uv$ with a chosen $\rho(e^*)\le D_{t-1}(u,v)$, or a join node,
and let $\rho$ be the combined assignment of child assignments realizing the combined child states. Then the
state $(D_t,\Lambda_t)$ produced by the transition is realized by $\rho$: that is, $D_t=D_\rho$ and
$\Lambda_t=\Lambda_\rho$.
\end{lemma}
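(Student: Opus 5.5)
The plan is to prove the two equalities $D_t=D_\rho$ and $\Lambda_t=\Lambda_\rho$ separately. For each, the argument rests on the fact that $B_t$ separates the two parts of the edge set $E_t=E_{t_1}\sqcup E_{t_2}$. The demand equality will then follow almost immediately from \Cref{cor:tw-demand-composition} once the distance equality is established, since the projection in both transitions is taken through the (already verified) new distance matrix.

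\textbf{Distances.} The first step is to show that for $y,z\in B_t$ the true distance $\dist_t(y,z)$ equals the shortest-path closure, over $B_t$, of the entrywise minimum of the children's distance matrices. Decompose a shortest $y$--$z$ path in $G_t$ at its visits to $B_t$. Each maximal subpath between consecutive bag vertices uses edges of only one part, because a vertex outside $B_t$ lies in only one of $G_{t_1},G_{t_2}$ (running intersection). Such a subpath therefore has length at least $\min(\dist_{t_1},\dist_{t_2})$ between its endpoints, so $\dist_t$ is at most the closure value. The reverse inequality is trivial, since concatenating child paths gives a walk in $G_t$.

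For a join node this is exactly $D_t$. For an introduce-edge node, $G_{t_2}$ is the single edge $uv$. A shortest path in $G_t$ uses $uv$ at most once, so the closure reduces to the one-step formula $\min\{D_{t-1}(y,z),\rho(uv)+D_{t-1}(y,u)+D_{t-1}(v,z)\}$ over both orientations, which is $\Dnew$.

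Truncation needs care. I would check that $\tau$ commutes with min and with addition followed by $\tau$ whenever all summands are nonnegative, so computing with truncated entries yields $\tau(\dist_t)$. This holds because a sum of truncated terms exceeds $W$ exactly when the true sum does.

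\textbf{Demands.} Apply \Cref{lem:tw-proj} with $G'=G_{t_i}$ and $B=B_t$; $B_t$ is a separator between $G_{t_i}$ and the rest of $G_t$. This gives, for each part, that $\Pi_{D_t}[\Lambda_i]$ equals the clamped maximum over $E_{t_i}$. By \Cref{cor:tw-demand-composition}, $\Lambda_\rho=\Pi_{D_t}[\Lambda_1]\vee\Pi_{D_t}[\Lambda_2]$. It remains to match the transitions to this formula.

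For a join node, projection commutes with entrywise maximum, because the projection is a max over $(p,q)$ of $\Lambda(p,q)$ minus fixed terms. Hence $\Pi_{D_t}[\Lambda_1\vee\Lambda_2]=\Pi_{D_t}[\Lambda_1]\vee\Pi_{D_t}[\Lambda_2]$.

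For an introduce-edge node, $\Lambda_e$ is by definition $\Pi_{\Dnew}$ of the single-edge demand, written directly with $a,b\in\{u,v\}$. Here one must also use that $\dist_t(a,y)$ is given by $\Dnew(a,y)$, since $u,v\in B_t$.

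\textbf{Main obstacle.} I expect the main difficulty to be the truncation issue inside the projection. \Cref{lem:tw-proj} is stated with $D_t$ truncated at $W+1$, while the demand involves true distances. The fix is that any term with a distance $\ge W+1$ contributes a value $\le W-(W+1)<0$, which the clamp $[\cdot]_+$ removes either way, so replacing $\dist$ by $\tau(\dist)$ is harmless. I would state this as a one-line remark and apply it in both the distance closure and the projection steps.
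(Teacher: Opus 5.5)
Your proposal is correct and follows essentially the same route as the paper. You split a shortest $y$--$z$ path at its visits to $B_t$ to identify $\dist_t$ with the shortest-path closure of $D_1\wedge D_2$, reducing to the one-step formula at an introduce-edge node. You then get $\Lambda_t=\Lambda_\rho$ from \Cref{lem:tw-proj}, \Cref{cor:tw-demand-composition}, and distributivity of the projection over $\vee$; your explicit handling of the truncation $\tau$ and of $\Lambda_e$ is somewhat more careful than the paper's.

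One wording slip: decomposing the shortest path shows $\dist_t$ is \emph{at least} the closure value, and concatenating child paths shows it is \emph{at most} that value. Your two inequality labels are therefore swapped, although both directions are in fact covered.
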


\begin{proof}
    We first show that $D_t = D_\rho$. For an Introduce Edge node, a shortest $y-z$ path in $G_t$ either uses $e^*$ once, or avoids it altogether. If it avoids $e^*$, it lies entirely in $G_{t-1}$ and has length $\dist_{t-1}(y,z)$. If it uses $e^*$, it splits into two paths lying in $G_{t-1}$. Therefore
\[
\dist_t(y,z)=\min\big\{\dist_{t-1}(y,z),
\rho(e^*)+\dist_{t-1}(y,u)+\dist_{t-1}(v,z),
\rho(e^*)+\dist_{t-1}(y,v)+\dist_{t-1}(u,z)\big\}.
\]
By the inductive hypothesis, $\dist_{t-1}$ agrees with $D_{t-1}$ on $B_t$, so 
\[
\dist_t(y,z)=\min\big\{D_{t-1}(y,z),
\rho(e^*)+D_{t-1}(y,u)+D_{t-1}(v,z),
\rho(e^*)+D_{t-1}(y,v)+D_{t-1}(u,z)\big\}.
\]
Applying $\tau$ to both sides we get 
\[
D_\rho(y,z) =\tau(\dist_t(y,z)) = \Dnew(y,z).
\]

For a join node, $B_t$ separates $V_{t_1}$ from $V_{t_2}$, so a shortest $y-z$ path in $G_t$ splits into
maximal subpaths, each lying in one child and joining two terminals $p,q\in B_t$; being a subpath of a
shortest path it is itself shortest in that child, of length $\dist_{t_i}(p,q)=D_i(p,q)$ by the inductive
hypothesis. The terminals visited thus form a walk over $B_t$ with edge weights $\Dtemp=D_1\wedge D_2$,
of total length $\dist_t(y,z)$; conversely any such walk realizes a path of that length. Hence $\dist_t(y,z)$
equals the all-pairs shortest-path closure of $\Dtemp$, i.e. $D_t(y,z)=\tau(\dist_t(y,z))=D_\rho(y,z)$.\\

We now show that the demand updates are correct. In both cases $E_t$ is the disjoint union of the parts carried by the children, and by the
distance computation the transition's distance matrix is the true $D_\rho$. \Cref{cor:tw-demand-composition} then gives
$\Lambda_\rho=\Pi_{D_\rho}[\Lambda']\vee\Pi_{D_\rho}[\Lambda'']$, where $\Lambda',\Lambda''$ are the demands
realized by $\rho$ on the two parts. For a Join node these are $\Lambda_1,\Lambda_2$. Since $\max$ distributes over $\max$, the projection distributes over the entrywise maximum, we have
\[
\Lambda_t = \Pi_D[\Lambda'\vee\Lambda'']=\Pi_D[\Lambda']\vee\Pi_D[\Lambda''] = \Lambda_\rho.
\]
For an Introduce-Edge node $\Lambda'=\Lambda_{t-1}$. $B_t$ trivially separates $G_{t-1}$ from the subgraph on the newly introduced edge $e^*$, so by \Cref{lem:tw-proj}
\[
\Pi_{D_\rho}[\Lambda'']=\Lambda_{e}.
\] 
Hence
\[
\Lnew = \Pi_{\Dnew}[\Lambda_{t-1}] \vee \Lambda_e = \Pi_{\Dnew}[\Lambda_{t-1}\vee \Lambda_e] = \Pi_{\Dnew}[\Lambda' \vee \Lambda''] =  \Lambda_\rho.
\]
\end{proof}

\begin{lemma}[Metricity testing]\label{lem:tw-metr-test}
Let $t$ be an introduce-edge or join node, and let $\rho$ be the combined assignment of \Cref{lem:tw-correct-update}. Suppose that the assignment on each child is metric on its subgraph. For an introduce-edge node suppose further that $\rho(e^*)\le D_{t-1}(u,v)$.
Then $\rho$ is metric on $G_t$ if and only if $D_\rho(y,z)\ge\Lambda_\rho(y,z)$ for all $y,z\in B_t$.
\end{lemma}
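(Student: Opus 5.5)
We follow the template of \Cref{lem:SP-metricity-testing}: a weighting on $G_t$ is metric exactly when no edge $ab\in E_t$ has $\rho(ab)>\dist_t(a,b)$. We prove both directions by relating a shortest $a$--$b$ path in $G_t$ to a route through the bag $B_t$. Throughout, \Cref{lem:tw-correct-update} lets us replace the transition's matrices by the realized ones, $D_\rho$ and $\Lambda_\rho$. By \Cref{cor:tw-demand-composition}, $\Lambda_\rho(y,z)$ is the clamped maximum over all $ab\in E_t$ (both orientations) of $\rho(ab)-\dist_t(a,y)-\dist_t(b,z)$.

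\emph{Metric implies the test passes.} Assume $\rho$ is metric on $G_t$ and fix $y,z\in B_t$. For every edge $ab$, the triangle inequality for $\dist_t$ gives
\[
\rho(ab)\le\dist_t(a,b)\le\dist_t(a,y)+\dist_t(y,z)+\dist_t(z,b).
\]
Hence $\rho(ab)-\dist_t(a,y)-\dist_t(b,z)\le\dist_t(y,z)$. Taking the maximum over edges gives $\widetilde\Lambda_\rho(y,z)\le\dist_t(y,z)$, where $\widetilde\Lambda_\rho$ is the unclamped demand. Clamping into $[0,W]$ and truncating at $W+1$ are both monotone, and they preserve this inequality: the clamp is at most $W<W+1$, and $\dist_t\ge 0$. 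Thus $\Lambda_\rho(y,z)\le D_\rho(y,z)$.

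\emph{Non-metric implies the test fails.} Suppose some edge $ab\in E_t$ has $\rho(ab)>\dist_t(a,b)$, and let $P$ be a shortest $a$--$b$ path in $G_t-ab$. Note that $\rho(ab)>\dist_t(a,b)$ forces the shortest $a$--$b$ path to avoid $ab$ itself. We split into cases according to where $ab$ and $P$ lie.

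In an introduce-edge node with $ab=e^*$, the hypothesis $\rho(e^*)\le D_{t-1}(u,v)$ rules out $\dist_t(u,v)<\rho(e^*)$. This is because any $u$--$v$ path avoiding $e^*$ lies in $G_{t-1}$, and $\rho(e^*)\le W$ makes truncation harmless.

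Otherwise $ab$ lies in one part, say $E_{t_1}$, where $\rho_1$ is metric. Then $P$ cannot lie entirely in $G_{t_1}$, so it leaves $G_{t_1}$. Since $B_t$ separates the two parts, $P$ passes through a first bag vertex $y$ and a last bag vertex $z$; for the introduce-edge case these are $u$ and $v$, and $y=z$ is possible. The subpaths $a\to y$ and $z\to b$ of $P$ are shortest, so
\[
\rho(ab)>\dist_t(a,b)=\dist_t(a,y)+\dist_t(y,z)+\dist_t(z,b).
\]
This yields $\widetilde\Lambda_\rho(y,z)>\dist_t(y,z)$. We do not need the subpaths to stay in one side; shortness in $G_t$ suffices.

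\emph{Main obstacle: clamping and truncation in the second direction.} We must still turn $\widetilde\Lambda_\rho(y,z)>\dist_t(y,z)$ into $\Lambda_\rho(y,z)>D_\rho(y,z)$. Since $\rho(ab)\le W$, we have $\dist_t(y,z)<W$, so $D_\rho(y,z)=\dist_t(y,z)$ is untruncated. Also $\widetilde\Lambda_\rho(y,z)\le W$, and the left-hand side is positive because $\dist_t\ge 0$. So clamping leaves it unchanged, and integrality gives the strict inequality $\Lambda_\rho(y,z)>D_\rho(y,z)$.

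One subtle point is the case $y=z$, which arises when $P$ exits and re-enters through the same vertex. There $D_\rho(y,y)=0<\Lambda_\rho(y,y)$, so the diagonal entries of the test are exactly what catch this case. With that checked, the equivalence follows.
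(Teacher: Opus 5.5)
Your proof is correct and follows the paper's own argument: the triangle inequality gives the forward direction, and for the converse you take a shortest path for the heavy edge, show it must meet $B_t$, and split it at bag vertices. You are more careful than the paper about truncation and clamping, and about treating separately the case where the introduced edge $e^*$ is itself heavy. One small imprecision is that the first and last bag vertices on $P$ need not be $u$ and $v$ (for instance $a$ itself may lie in $B_t$), and $y=z$ cannot actually occur for a simple $P$; this is harmless, because your three-term decomposition of $|P|$ holds for any two vertices of $P$ taken in order, so choosing $y,z\in\{u,v\}$ works.
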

\begin{proof}
    Suppose $\rho$ is metric on $G_t$. Then for every edge $ab\in E_t$ and all $y,z\in B_t$, the triangle inequality gives $\rho(ab)\le\dist_t(a,b)\le\dist_t(a,y)+\dist_t(y,z)+\dist_t(z,b)$, hence $\rho(ab)-\dist_t(a,y)-\dist_t(z,b)\le\dist_t(y,z)$. Taking the maximum over edges and orientations, $\Lambda_\rho(y,z)\le\dist_t(y,z)$, and after truncation $\Lambda_\rho(y,z)\le D_\rho(y,z)$.\\
    
    Conversely, suppose $\rho$ is not metric, and let $e=ab$ be a heavy edge, so a shortest
$a$--$b$ path $\pi$ has length $\dist_t(a,b)<\rho(ab)$. Without loss of generality $ab$ is carried by the first
child (or by $G_{t-1}$, for an introduce-edge node). That child is metric, so $\pi$ cannot lie inside it.
For a Join node, $\pi$ therefore uses an edge of $G_{t_2}$. Consequently, $\pi$ crosses $B_t$: let $y$ be the first vertex of
$B_t$ on $\pi$ (starting from $a$) and $z$ the last (before $b$).
The subpaths $a\to y$ and $z\to b$ avoid the
crossing edge and lie in the first child, hence have length at least $\dist_t(a,y)$ and $\dist_t(z,b)$, while
the middle subpath has length at least $\dist_t(y,z)$. Therefore
\[
\rho(ab)>\dist_t(a,b)=|\pi|\ge\dist_t(a,y)+\dist_t(y,z)+\dist_t(z,b),
\]
so
\[
\Lambda_\rho(y,z)\ge\rho(ab)-\dist_t(a,y)-\dist_t(z,b)>\dist_t(y,z)=D_\rho(y,z),
\] 
and the check fails.

For an Introduce Edge node, $\pi$ lies entirely in $G_{t-1}$ except for a single traverse of $e^* = uv$ which is not heavy Since $\rho(e^*)\le D_{t-1}(u,v)$. Since $u,v\in B_t$, again $\pi$ must cross $B_t$ and a similar argument holds.
\end{proof}

\begin{corollary}[The DP keeps only realizable states]\label{cor:tw-real-states}
When the transition at a node $t$ processes its inputs, a state $(D_t,\Lambda_t)$ is assigned a finite value
if and only if it is realized by some increase-only $\rho$ that is metric on $G_t$. In particular, the
program never stores a state that is not realized by an increase-only metric repair of $G_t$.
\end{corollary}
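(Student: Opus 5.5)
The argument is an induction over the nice tree decomposition, processed bottom-up. The inductive hypothesis at a child node is that every finite-cost entry of its table is realized by some increase-only $\rho_i$ that is metric on the child's subgraph. Conversely, every state realized by such a $\rho_i$ receives a finite value. The corollary then follows by checking each node type against this hypothesis.

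For leaf nodes the claim is immediate: the only state is the empty profile, realized by the empty function.

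For introduce-vertex and forget-vertex nodes, $E_t=E_{t-1}$, so realizing functions are the same for parent and child. For introduce-vertex, the new vertex $v$ is isolated in $G_t$. Hence $\tau(\dist_t(u,v))=W+1$, and every term in the definition of $\Lambda_\rho(u,v)$ has $\dist_t(a,v)=\infty$, so the demand clamps to $0$. Together these show that the extended matrices are exactly $(D_\rho,\Lambda_\rho)$. For forget-vertex, deleting a row and column is precisely restricting the profile to $B_t$. Metricity is unchanged in both cases because $G_t$ has the same edges as $G_{t-1}$.

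The substantive cases are introduce-edge and join.

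\textbf{Soundness.} Suppose a state gets a finite value. Then it came from child states with finite values. By hypothesis, these are realized by metric $\rho_1,\rho_2$, and we form $\rho=\rho_1\cup\rho_2$. For an introduce-edge node, $\rho_2$ is the guess $\rho(e^*)$; this guess survived the first check, so $\rho(e^*)\le D_{t-1}(u,v)$. \Cref{lem:tw-correct-update} shows the produced state equals $(D_\rho,\Lambda_\rho)$. The state also passed the test $D_t\ge\Lambda_t$, so \Cref{lem:tw-metr-test} gives that $\rho$ is metric on $G_t$. Finally, $\rho$ is increase-only with values in $[W]$ because both parts are.

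\textbf{Completeness.} Suppose $(D,\Lambda)$ is realized by some increase-only $\rho$ that is metric on $G_t$. Restrict $\rho$ to the children's edge sets. A restriction of a metric to a subgraph is still metric, since a broken cycle in a subgraph is a broken cycle in $G_t$. So each restriction realizes some child state, and by hypothesis that state has finite value. For an introduce-edge node, the value $\rho(e^*)$ lies in $\{w(e^*),\dots,W\}$, so it is among the enumerated guesses. It also passes the first check: metricity of $\rho$ means $\rho(e^*)\le\dist_{t-1}(u,v)$, and this inequality survives truncation since $\rho(e^*)\le W$. \Cref{lem:tw-correct-update} then shows that the transition outputs exactly $(D_\rho,\Lambda_\rho)=(D,\Lambda)$. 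The forward direction of \Cref{lem:tw-metr-test} shows the test passes. Hence the entry is assigned a finite value.

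The step that needs the most care is this introduce-edge completeness argument. There we must confirm that the first check, which is stated with the truncated $D_{t-1}(u,v)$, discards no metric $\rho$. This holds because $\rho(e^*)\le W<W+1$. Everything else is a direct assembly of \Cref{lem:tw-correct-update} and \Cref{lem:tw-metr-test}.
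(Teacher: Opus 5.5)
Your proposal is correct and follows essentially the same route as the paper. You induct over the nice tree decomposition and dispose of leaf and vertex nodes by noting that $E_t$ is unchanged. For introduce-edge and join nodes you combine \Cref{lem:tw-correct-update} with the two directions of \Cref{lem:tw-metr-test}: soundness comes from finite child entries, and completeness from restricting a metric $\rho$ to the children. Your one addition is an explicit check that a metric $\rho$ passes the preliminary test $\rho(e^*)\le D_{t-1}(u,v)$, which both lemmas assume as a hypothesis; this fills a step the paper leaves implicit.
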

\begin{proof}
    For leaf and vertex nodes, $E_t$ and the metric structure of the processed graph are unchanged: by the
inductive hypothesis every child state is realized by a metric $\rho$, the transition copies it (for an
introduce-vertex node the added terminal is isolated, at truncated distance $W+1$ from every other and
imposing no demand), and $\rho$ remains metric on $G_t$. No state is discarded, and every kept state is
realized by a metric $\rho$.

For an Introduce Edge or Join node, the transition keeps a state $(D_t,\Lambda_t)$ exactly when
the guess or pair passes the check $D_t\ge\Lambda_t$ (and, for an introduce-edge node, $\rho(e^*)\le D_{t-1}(u,v)$).
\begin{itemize}
\item If the state is kept, then by \Cref{lem:tw-correct-update} it is realized by the combined assignment $\rho$. Its child assignments are metric on their subgraphs by the inductive hypothesis, so by \Cref{lem:tw-metr-test} $\rho$ is metric on $G_t$. Thus the kept state is realized by a metric $\rho$.
\item Conversely, suppose $(D_t,\Lambda_t)$ is realized by some increase-only $\rho$ that is metric on $G_t$.
Restricting $\rho$ to each child's edge set gives increase-only assignments that are metric on the children.
Passing to a subgraph only increases distances, so $\rho(ab)\le\dist_{G_t}(a,b)\le\dist_{G_{t_i}}(a,b)$ for
every child edge $ab$. By the inductive hypothesis these restrictions realize valid child states.
The transition considers such states as it iterates over all valid states. By \Cref{lem:tw-correct-update} it produces $(D_t,\Lambda_t)$,
and by \Cref{lem:tw-metr-test} the check passes. Hence the state is kept.
\end{itemize}
Therefore a state is kept if and only if it is realized by some increase-only metric $\rho$.
\end{proof}

\begin{lemma}[DP invariant]\label{lem:tw-inv}
Every node $t$ satisfies the invariant \eqref{eq:inv}.
\end{lemma}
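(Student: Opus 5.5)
We argue by induction over the nice tree decomposition in post-order, following the proof of \Cref{lem:SP-invariant}. The base case is a leaf node. Its processed graph has no edges, so the only assignment is the empty one. It is trivially metric, realizes the empty profile, and has cost $0$, which is exactly the stored value.

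For the inductive step we assume \eqref{eq:inv} holds at every child of $t$. Fix a state $(D_t,\Lambda_t)$ and write $\DP=A_t(D_t,\Lambda_t)$ and $\OPT=\min_{\rho\in\Xx(D_t,\Lambda_t)}\cost(\rho)$. We prove the two inequalities separately.

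\textbf{The bound $\DP\ge\OPT$.} Every finite update of $A_t(D_t,\Lambda_t)$ comes from one child state, or from a pair of child states, with finite values. By the induction hypothesis each such value is attained by some metric $\rho_i$. For an introduce-edge node we additionally fix the guess $\rho(e^*)$, whose cost is $c_e(\rho(e^*))$. The combined assignment $\rho=\rho_1\cup\rho_2$ has cost equal to the sum of the parts, since the edge sets are disjoint. It realizes $(D_t,\Lambda_t)$ by \Cref{lem:tw-correct-update}; for vertex nodes this is immediate. It is metric by \Cref{lem:tw-metr-test}, since the check passed. Hence $\rho\in\Xx(D_t,\Lambda_t)$, so every candidate value is at least $\OPT$, and therefore $\DP\ge\OPT$. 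At forget nodes the same reasoning applies: deleting a row and column preserves realizability, because $G_t=G_{t-1}$ and distances within the remaining bag are unchanged.

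\textbf{The bound $\DP\le\OPT$.} Let $\rho^\star$ attain $\OPT$, and restrict it to the children's edge sets. Each restriction is increase-only and metric on its child subgraph, since subgraph distances only grow. Each restriction therefore realizes some child state $(D_i,\Lambda_i)$, and by the induction hypothesis $A_{t_i}(D_i,\Lambda_i)\le\cost(\rho^\star|_{E_{t_i}})$. For an introduce-edge node the guess $\rho^\star(e^*)$ is among the enumerated values. It is not discarded by step 1, because metricity of $\rho^\star$ on $G_t$ gives $\rho^\star(e^*)\le\dist_{t-1}(u,v)$ and hence $\rho^\star(e^*)\le D_{t-1}(u,v)$ after truncation, using $\rho^\star(e^*)\le W$. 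By \Cref{lem:tw-correct-update} the transition produces exactly $(D_t,\Lambda_t)$ from these child states, and by \Cref{lem:tw-metr-test} the metricity check passes. Hence $\DP\le\sum_i A_{t_i}(D_i,\Lambda_i)+c_e\le\cost(\rho^\star)=\OPT$.

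The vertex nodes need separate care. At an introduce-vertex node the restriction of $\rho^\star$ is the same function; it realizes the child profile, and extending that profile gives the current one. At a forget node, $\rho^\star$ realizes some child state whose projection (deleting the row and column of $v$) is $(D_t,\Lambda_t)$, and the transition takes the minimum over all such child states.

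The main obstacle is making sure the hypotheses of \Cref{lem:tw-metr-test} and \Cref{lem:tw-correct-update} are met in the $\DP\le\OPT$ direction. In particular, the first child state must be realized by the restriction of $\rho^\star$ itself, not by some other cheaper witness. This is fine because the lemmas depend only on the profiles: any two witnesses realizing the same child profile produce the same combined profile and the same check outcome. So we may swap $\rho^\star|_{E_{t_i}}$ for the cheapest witness of its profile without changing the result. The same independence from the choice of witness underlies \Cref{cor:tw-real-states}, which we invoke to conclude.
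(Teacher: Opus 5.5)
Your proof is correct and follows the same induction as the paper: you get $\DP\ge\OPT$ from child witnesses combined via \Cref{lem:tw-correct-update} and \Cref{lem:tw-metr-test}, and $\DP\le\OPT$ by restricting an optimal $\rho^\star$ to the children. You are in fact somewhat more careful than the paper, which only cites that kept states are realizable: you build a witness whose cost equals each candidate value, treat forget and introduce-vertex nodes explicitly, and check that the guess $\rho^\star(e^*)$ survives step~1 (though your closing witness-swap remark is unnecessary, since the restrictions of $\rho^\star$ realize the child states by definition).
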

\begin{proof}
    We induct on the decomposition tree.

\paragraph{Leaf.} The bag is empty and $G_t$ has no edges, so the only state is the empty profile, realized solely
by the empty assignment of cost $0$; the table stores $0$.

\paragraph{Introduce/Forget vertex.} Since $E_t=E_{t-1}$, the invariant is maintained trivially.

\emph{Introduce edge / Join.} Fix a state $(D_t,\Lambda_t)$ and write
\[
\DP=A_t(D_t,\Lambda_t),\qquad \OPT=\min_{\rho\in\Xx(D_t,\Lambda_t)}\cost(\rho).
\]

By \Cref{cor:tw-real-states}, the program only keeps valid states realized by some metric $\rho\in\Xx$, so $\DP\ge \OPT$.
Conversely, let $\rho^\star\in\Xx(D_t,\Lambda_t)$
attain $\OPT$. Its restrictions $\rho^\star_i$ are metric on the children and realize child states with
$A_{t_i}(\cdot)\le\cost(\rho^\star_i)$ by the inductive hypothesis. The transition processes these states. By \Cref{lem:tw-correct-update}, it yields $(D_t,\Lambda_t)$, and by \Cref{lem:tw-metr-test} the check
passes, and it updates
\[
\DP\le A_{t_1}(\cdot)+A_{t_2}(\cdot)
\le \cost(\rho^\star_1)+\cost(\rho^\star_2) =\cost(\rho^\star)=\OPT,\]
where for Introduce Edge node we think of $A_{t_2}$ as the cost of $\rho(e^*)$.
Therefore $\DP=\OPT$.
\end{proof}

We conclude with the main theorem:
\begin{theorem}\label{thm:tw-algorithm}
    Integer-valued IOMR on graphs of treewidth at most $r$, with repaired edge lengths restricted to
$[W]$, can be solved in time
\[
O\Big(W^{O(r^2)}\cdot (|V|+|E|)\Big)
\]
given a width $r$ nice tree decomposition.  In particular, the problem is fixed-parameter
tractable parameterized by $(r,W)$.
\end{theorem}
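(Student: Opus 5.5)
The proof has two parts: correctness, which follows from the invariant already established, and a running-time count.

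For correctness, I apply \Cref{lem:tw-inv} at the root of the nice tree decomposition. Without loss of generality the root bag is empty; otherwise I append forget-vertex nodes, which at most doubles the number of nodes. Then $G_{\mathrm{root}}=G$, and the only state is the empty profile. By \eqref{eq:inv}, its value is the minimum cost over all increase-only $\rho:E\to[W]$ that are metric on $G$. By \Cref{thm:fan-characterization}, this cost equals the minimum size of a light hitting set, restricted to integer repairs in $[W]$ as in the statement. I need one small remark at the leaves. Every vertex and edge of $G$ is introduced exactly once in a nice decomposition, so $E_{\mathrm{root}}=E$. The costs $c_e(\rho(e))$ in the introduce-edge rule are the same $0/1$ costs as in $\EdgeTable$, so the table value really is $|\{e:\rho(e)>w(e)\}|$.

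For the running time, I first count states. A bag has at most $r+1$ vertices, so each matrix has at most $(r+1)^2$ entries. Entries of $D$ lie in $\{0,\dots,W+1\}$ and entries of $\Lambda$ lie in $\{0,\dots,W\}$. Hence each table has at most
\[
(W+2)^{(r+1)^2}(W+1)^{(r+1)^2}=W^{O(r^2)}
\]
states, using $W\ge 2$; the case $W=1$ is trivial.

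I then bound the work per node.
\begin{itemize}
\item Introduce-vertex and forget-vertex nodes touch each state once, with $O(r^2)$ work per state.
\item Introduce-edge nodes iterate over states and at most $W$ guesses. Computing $\Dnew$, the projection $\Pi_{\Dnew}$ (which takes $O(r^4)$ time), and the check costs $\poly(r)$ per guess.
\item Join nodes iterate over pairs of states, which gives $(W^{O(r^2)})^2=W^{O(r^2)}$ pairs. Each pair costs $O(r^3)$ for the Floyd--Warshall closure plus $O(r^4)$ for the projection.
\end{itemize}
Every $\poly(r)$ factor is absorbed into $W^{O(r^2)}$, since $r\le r^2\log W$ for $W\ge2$.

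The remaining step is bounding the number of nodes of the decomposition by $O(|V|+|E|)$. Standard nice decompositions of width $r$ can be taken with $O(r|V|)$ nodes. Introduce-edge nodes number exactly $|E|$, one per edge. I expect to cite \cite{bodlaender1996efficient} for this, or to normalize the given decomposition by contracting redundant nodes. The factor $r$ is again absorbed into $W^{O(r^2)}$. Multiplying the per-node cost by the number of nodes gives $W^{O(r^2)}\cdot(|V|+|E|)$.

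Fixed-parameter tractability in $(r,W)$ then follows. A decomposition can be computed in $f(r)\cdot|E|$ time, as recalled in the set-up, so the given-decomposition hypothesis costs only an $f(r)$ factor.

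The main obstacle is bookkeeping rather than substance. I must check that the number of nodes is linear and that the join-node squaring still yields $W^{O(r^2)}$. I must also make sure the root is handled correctly, meaning the root bag is empty and every edge has been introduced, so that \Cref{lem:tw-inv} gives exactly the global optimum.
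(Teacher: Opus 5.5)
Your proposal is correct and follows the same approach as the paper. Correctness comes from applying \Cref{lem:tw-inv} at an empty root bag, and the running time comes from counting $W^{O(r^2)}$ profiles per node, bounding the per-node work (including the squared number of state pairs at join nodes), and summing over the nodes. Your bookkeeping is slightly more careful than the paper's: you note that a nice decomposition has $O(r|V|+|E|)$ nodes rather than $O(|V|+|E|)$, that projections at introduce-edge nodes cost $O(r^4)$, and that the degenerate case $W=1$ needs separate treatment; all of these are absorbed or harmless as you say.
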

\begin{proof}
By \Cref{lem:tw-inv}, every table entry satisfies the invariant \eqref{eq:inv}. The root has an empty bag, so $\Xx(D_\emptyset,\Lambda_\emptyset)$ is the set of all metric increase only functions $\rho:E(G)\to [W]$, and the program returns the correct minimum value. 

It remains to show the runtime. A bag has at most $r+1$ vertices, so a state consists of $O((r+1)^2)$ entries, each a distance in $\{1,\ldots, W+1\}$ or a demand in $\{0,1,\ldots W\}$. There are thus $O\rbk{(W+1)^{O(r+1)^2}} = W^{O(r^2)}$ states per node. Introduce and Forget Vertex nodes updates take $O(r^2)$ time per state as they are simply matrix update or deletion. An Introduce Edge node tries at most $W$ guesses per state, each costing $O(r^2)$ per matrix. A Join node combines $\rbk{W^{O(r^2)}}^2$ pairs, each require computing a shortest path closure at cost $O(r^3)$ and $O(r^4)$ matrix work. These costs are subsumed by $W^{O(r^2)}$. Summed over all $O(|V| + |E|)$ nodes of the tree decomposition, the running time is 
\[
O\Big(W^{O(r^2)}\cdot (|V|+|E|)\Big).
\]
\end{proof}
\paragraph{Extracting the solution}
The program computes the optimal support for an increase-only repair function. To extract one such function, one can adapt the algorithm in the following way. When processing a node and state $(D_t,\Lambda_t)$, keep a record of which states of children induce it. Moreover, when processing an Introduce Edge node with new edge $e^*$, at state $(D_t,\Lambda_t)$, the table $A_t$ also keeps track of some guess $\rho(e^*)$ attaining the optimal value stored in $A_t(D_t,\Lambda_t)$. After running the program, the minimum entry at the root's table holds the correct optimum. Chase down the tree the recorded states obtaining this minimum, and when reading Introduce Edge nodes, read the values recorder for the edges and construct $\rho$ realizing an optimal state. These changes cost an additional $W^{O(r^2)}$ space, and one more traversal of the tree -- so the runtime and space complexity stay the same.
\paragraph{Adapting the algorithm to the general case.}
Much like the algorithm on Series Parallel graphs, when processing an Introduce Edge node we do not restrict $\rho(e^*) \ge w(e)$ and adjust the cost function accordingly. The transition rules stay the same.

\subsection{Application for length-bounded multicut}
Recall that in the \emph{length-bounded multicut} problem we are given a graph $G=(V,E)$,
a collection of terminal pairs $(s_i,t_i)_{i=1}^{k}$, and a bound $L\in\NN$,
and we seek a smallest set $S\subseteq E$ such that no $s_i$--$t_i$ path of
length at most $L$ survives in $G\setminus S$, i.e.\ $\dist_{G\setminus
S}(s_i,t_i)\ge L+1$ for all $i$. We now show that the algorithm of
\Cref{thm:tw-algorithm} solves this problem, and give sufficient conditions
under which the dependency on the number of terminals $k$ is eliminated in
the exponent.

The reduction augments $G$ with a chord $s_it_i$ for each pair, and the
running time depends on the treewidth of the resulting graph. We first bound
this treewidth: it suffices that the pairs be joined by short paths of low
congestion.
\begin{lemma}\label{lem:tw-dist-cong-bound}
Let $G=(V,E)$ have a tree decomposition of width $f$. Let
$(s_i,t_i)_{i=1}^{k}$ be pairs, and suppose there exist $s_i$--$t_i$ paths
$\pi_1,\dots,\pi_k$ in $G$ such that each $\pi_i$ has at most $c$ edges and
each vertex of $G$ lies on at most $d$ of the paths. Then
$G'=(V,E\cup\{s_it_i:i\in[k]\})$ satisfies
\[
  \tw(G')\le (f+1)(1+cd)-1 = O(cd\cdot f).
\]
\end{lemma}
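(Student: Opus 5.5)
We build a tree decomposition of $G'$ directly from the given one by enlarging bags. Let $(T,\{B_t\})$ be a tree decomposition of $G$ of width $f$. For each $i$, fix the path $\pi_i$ with at most $c$ edges, hence at most $c+1$ vertices. For a vertex $x\in V$, let $I(x)=\{i : x\in V(\pi_i)\}$; by the congestion hypothesis $|I(x)|\le d$. For every $i$ we pick a representative endpoint, say $s_i$. The new bags are
\[
B'_t \defeq B_t\cup\{\, s_i : i\in I(x) \text{ for some } x\in B_t\,\}.
\]
That is, whenever a bag contains a vertex lying on $\pi_i$, we also add $s_i$ to it. Each $x\in B_t$ contributes at most $d$ new vertices, so $|B'_t|\le (f+1)(1+d)$. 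This is within the claimed bound $(f+1)(1+cd)$, and the slack suggests the authors add more vertices per path, perhaps all of $V(\pi_i)$. If the simpler choice fails, we fall back to adding all vertices of $\pi_i$, which gives exactly $|B'_t|\le (f+1)(1+(c+1)d)$ or $(f+1)(1+cd)$ with careful counting of the shared vertex $x$ itself.

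We then check the three axioms for $G'$. Vertex coverage is inherited. Every original edge is still covered because $B_t\subseteq B'_t$. For a new chord $s_it_i$, take any bag $B_t$ containing $t_i$. Since $t_i\in V(\pi_i)$, the bag $B'_t$ also contains $s_i$, so the chord is covered.

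Running intersection is the main point. A vertex $v$ lies in $B'_t$ either because $v\in B_t$, or because $v=s_i$ and $B_t$ meets $V(\pi_i)$. So the set of nodes containing $v$ is the union of $T_v=\{t : v\in B_t\}$ with the sets $T_{\pi_i}=\{t : B_t\cap V(\pi_i)\neq\emptyset\}$ over all $i$ with $s_i=v$.
\begin{itemize}
\item $T_{\pi_i}$ is connected. It is the union of the subtrees $T_x$ for $x\in V(\pi_i)$. Consecutive vertices of $\pi_i$ are adjacent in $G$, so their subtrees intersect, and the union of a chain of pairwise-consecutive intersecting subtrees is a subtree.
\item $T_{\pi_i}$ contains $T_{s_i}=T_v$, because $s_i\in V(\pi_i)$.
\end{itemize}
Hence every piece of the union contains $T_v$, and the whole union is connected. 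This handles the case where $v$ serves as the representative for several pairs.

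The step needing most care is this connectivity argument, together with the bookkeeping when one vertex is $s_i$ for many $i$. That is why we anchor every enlarged region at the fixed subtree $T_{s_i}$. The width bound is then $|B'_t|-1\le (f+1)(1+d)-1\le (f+1)(1+cd)-1$ for $c\ge1$, which is $O(cd\cdot f)$.
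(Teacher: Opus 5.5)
Your proof is correct, and it takes a leaner route than the paper. The paper adds \emph{all} (at most $c+1$) vertices of $\pi_i$ to every bag whose node lies in the subtree $T_i$ of bags meeting $\pi_i$. A bag is met by at most $(f+1)d$ paths, each contributing at most $c$ new vertices, and that is where the factor $cd$ comes from.

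You add only the single endpoint $s_i$ to those same bags. This already covers the chord, since any bag containing $t_i$ meets $\pi_i$. Your running-intersection check is sound: the occurrence set of $v$ becomes $T_v\cup\bigcup_{i:s_i=v}T_{\pi_i}$, a union of connected subtrees each containing $T_v$. The paper relies on the same containment $T_i\supseteq T_v$ for $v\in\pi_i$ when it says vertices are added ``along connected subtrees,'' but leaves it implicit; you make the anchoring explicit.

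Your version gives a strictly stronger result: $\tw(G')\le (f+1)(1+d)-1$, with no dependence on the path length $c$. This equals the lemma's bound for $c=1$ and beats it for every $c\ge 2$. Consequently, in \Cref{cor:tw-lbmc}, bounded congestion alone would suffice for $r=O(f)$, without requiring short paths. The paper's heavier augmentation co-bags every vertex of $\pi_i$, which is more than covering the chord requires.

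You can drop the hedge about falling back to adding all of $V(\pi_i)$. The simple choice does not fail, and the fallback count you sketch is never needed.
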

\begin{proof}
Fix a width-$f$ tree decomposition $\Tt=(T,\{B_t\})$ of $G$. We build
a decomposition of $G'$ by enlarging bags, then bound the enlargement.
For each pair $i$, the path $\pi_i$ uses at most $c+1$ vertices. By edge
coverage and running intersection, the nodes of $T$ whose bags meet $\pi_i$
form a connected subtree $T_i$, and consecutive edges of $\pi_i$ are covered
by adjacent bags. Therefore adding the $\le c+1$ vertices of $\pi_i$ to every bag of
$T_i$ yields a valid decomposition in which $s_i$ and $t_i$ are co-bagged
throughout $T_i$, so the chord $s_it_i$ is covered. Doing this for all $i$
simultaneously gives a decomposition $\mathcal{T}'$ of $G'$: coverage of the
new edges holds by construction, and each original property is preserved
since we only added vertices to bags along connected subtrees.
It remains to bound $|B'_t|$. A vertex $v$ is added to $B_t$ only on account
of some path $\pi_i$ with $v\in\pi_i$ and $t\in T_i$. Fix $t$. Each vertex
$u\in B_t$ belongs to at most $d$ paths, and each such path contributes at
most $c$ other vertices to $B'_t$. Hence the number of added vertices is at
most $|B_t|\cdot d\cdot c\le(f+1)cd$, so
\[
  |B'_t|\ \le\ (f+1)+(f+1)cd\ =\ (f+1)(1+cd).
\]
Therefore $\tw(G')\le(f+1)(1+cd)-1=O(cd\cdot f)$.
\end{proof}

\begin{corollary}[Length-bounded multicut]\label{cor:tw-lbmc}
Let $G=(V,E)$ be a graph, $(s_i,t_i)_{i=1}^{k}$ a collection of terminal
pairs, and $L\in\NN$. Let $G'=(V,E\cup\{s_it_i:i\in[k]\})$ and
$r=\tw(G')$. Then a smallest set $S\subseteq E$ with
$\dist_{G\setminus S}(s_i,t_i)\ge L+1$ for all $i$ can be
computed in time
\[
  O\left(L^{O(r^2)}\cdot(|V|+|E|+k)\right).
\]
Moreover $r\le f+k$, where $f=\tw(G)$ and $r=O(f)$ whenever the pairs satisfy the hypotheses of \Cref{lem:tw-dist-cong-bound}
with constant $c,d$---in particular when each pair $s_it_i$ is co-bagged in a
width-$f$ decomposition.
\end{corollary}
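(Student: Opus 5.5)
We reduce length-bounded multicut to \IOMR{} on the augmented graph $G'$ and then invoke \Cref{thm:tw-algorithm}. Assign every original edge $e\in E$ weight $w(e)=1$, and every chord $s_it_i$ weight $w(s_it_i)=L+1$. If a pair $s_it_i$ is already an edge of $G$, the instance is infeasible, since that edge alone is a path of length $1\le L$; we either handle this trivially or keep a parallel copy, which the DP tolerates because nothing in it uses simplicity. We take $W=L+1$.

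\textbf{Broken cycles.} In $(G',w)$ the broken cycles are exactly the cycles $C$ that contain some chord $s_it_i$ as their unique heaviest edge and whose remaining edges have total weight at most $L$. A cycle made only of unit edges has deficit $\le 0$. A cycle containing two chords has deficit $(L+1)-(L+1)-\ldots<0$. So the broken cycles are precisely the chords together with an $s_i$--$t_i$ path, inside $G$ plus possibly other chords, of weight at most $L$. Because any other chord on such a path has weight $L+1>L$, that path lies entirely in $G$ and has at most $L$ edges. Hence the light edges of broken cycles are exactly original edges, and a light hitting set is exactly a set $S\subseteq E$ meeting every $s_i$--$t_i$ path of length $\le L$ in $G$. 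By \Cref{thm:fan-characterization}(2), the minimum support of an increase-only metric repair therefore equals the minimum length-bounded multicut.

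\textbf{Integer weights in $[W]$.} We must check that an optimal light hitting set can be realized by some $\rho:E(G')\to[W]$, since \Cref{thm:tw-algorithm} optimizes only over such functions. Given $S$, set $\rho(e)=L+1$ for $e\in S$ and keep all other weights. Every edge then has weight at most $L+1=W$. We verify metricity directly. Unit edges are never heavy. A chord $s_it_i$ is not heavy, because any alternative $s_i$--$t_i$ path either uses an edge of $S$, another chord, or more than $L$ unit edges, so it has length $\ge L+1$. Conversely, any feasible $\rho$ in the DP yields a light hitting set of the same support by \Cref{thm:fan-characterization}. A chord can never belong to a light hitting set, since chords are never light, so the optimum is unaffected whether or not we forbid modifying chords. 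The solution extracted from the DP is therefore a valid cut.

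\textbf{Running time and width bounds.} The running time $W^{O(r^2)}(|V|+|E'|)$ with $W=L+1$ and $|E'|=|E|+k$ gives the stated bound. Here we use a nice tree decomposition of $G'$ of width $O(r)$, obtainable within the allowed time by \cite{bodlaender1996efficient}. The bound $r\le f+k$ follows by adding $\{s_1,\dots,s_k\}$ to every bag of an optimal decomposition of $G$: each bag grows by at most $k$, and every chord is then covered by any bag containing $t_i$. The bound $r=O(f)$ under constant $c,d$ is exactly \Cref{lem:tw-dist-cong-bound}. In the co-bagged case no enlargement is needed at all, so $r=f$.

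\textbf{Main obstacle.} The step requiring the most care is the characterization of broken cycles, specifically that chords are never light and that no cycle using two chords is broken. This is what ensures that a light hitting set corresponds exactly to a cut in $G$.
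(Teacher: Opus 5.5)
Your route is the paper's: the same weights $1$ and $L+1$, the same broken-cycle characterization, and the same appeal to \Cref{thm:tw-algorithm}. Your $r\le f+k$ argument, which adds only the $s_i$ to every bag, is slightly tighter. You also correctly spot a step the paper skips. Since \Cref{thm:tw-algorithm} minimizes only over integer metric $\rho:E(G')\to[W]$, you must show every length-bounded cut $S\subseteq E$ is the support of such a $\rho$.

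\textbf{The gap: your witness is not metric.} Your check covers the unit edges outside $S$ and the chords, but not the edges of $S$ themselves. Once raised to $L+1$, an edge $uv\in S$ is heavy on any cycle closed by a $u$--$v$ path of at most $L$ unmodified unit edges. For example, take $E=\{sa,at,sb,ba,ac,ct\}$, the single pair $(s,t)$, and $L=2$. The only optimal cuts are $\{sa\}$ and $\{at\}$. Raising either one to $3$ breaks the triangle $s,a,b$ or $a,t,c$ respectively, since $3>1+1$. So, as written, you have not shown that the DP optimum is at most the minimum cut.

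\textbf{The fix.} Let $\bar w$ be your raised weighting and set $\rho(uv)\defeq\dist_{\bar w}(u,v)$ for every edge of $G'$.
\begin{itemize}
\item Since $\rho\le\bar w$ pointwise and $\dist_{\bar w}$ satisfies the triangle inequality, $\dist_\rho=\dist_{\bar w}$. Every edge then has length exactly the distance between its endpoints, so $\rho$ is metric.
\item Unit edges outside $S$ keep weight $1$.
\item Each chord keeps weight $L+1$. Any other $s_i$--$t_i$ path either uses an edge of $S$ or another chord (each of $\bar w$-weight $L+1$), or consists of more than $L$ unit edges of $G\setminus S$.
\item Edges of $S$ receive integers in $[1,L+1]$.
\end{itemize}
Hence $\rho$ is integer-valued in $[W]$, increase-only, metric, and supported on $S$.

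\textbf{Two minor slips.} First, an edge $s_it_i\in E$ does not make the instance infeasible. It simply must lie in every solution, which your parallel-copy treatment correctly enforces. Second, Bodlaender's algorithm has a $2^{O(r^3)}$ dependence on the width, which is not dominated by $L^{O(r^2)}$ when $L$ is small. You should therefore take the decomposition as given, as \Cref{thm:tw-algorithm} does, rather than claim it fits within the time bound.
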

\begin{proof}
Weight $G'$ by giving every original edge $e\in E$ weight $1$ and every
added edge $s_it_i$ weight $L+1$, and set $W=L+1$. The broken cycles of $G'$ are exactly cycles that decompose into a pair of terminals $(s_i,t_i)$ and a complementary path $P$ of length at most $L$.
By \Cref{thm:tw-algorithm}, a set of edges $S$ that is a light hitting set for the broken cycles can be computed in $O\left(L^{O(r^2)}\cdot(|V|+|E|+k)\right)$ time. Note that an edge of weight $L+1$ is never light in a cycle of $G'$, as a cycle with two heaviest edges is not broken, so the computed set $S$ is a subset of $E(G)$. In the graph $G\setminus S$, no paths of length at most $L$ between $s_i-t_i$ exist, and therefore $\dist_{G\setminus S}(s_i,t_i) \ge L+1$ as required. The bound $r\le f+k$ follows by adding all $k$ terminals to every bag of a width-$f$ decomposition; $r=O(f)$ under the hypotheses of \Cref{lem:tw-dist-cong-bound} is immediate.
\end{proof}

%% file: bdd-tw-is-hard.tex
\label{sec:bdd-tw-hard}
We now show that, when the weights can be arbitrary rational numbers encoded in binary, metric repair is hard \emph{even} on graphs of bounded treewidth.

\begin{theorem}\label{thmBoundedTreewidthHardness}
    \MR{} and \IOMR{} are both $\NPh$ even on graphs of pathwidth at most 6.
\end{theorem}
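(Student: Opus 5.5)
We reduce from \Partition{}, which is weakly $\NPh$: given positive integers $a_1,\dots,a_n$ with $\sum_i a_i = 2A$, decide whether some subset sums to exactly $A$. We first rescale so that every number is small. Set $x_i = a_i/(2A)$, so that $\sum_i x_i = 1$, and ask whether the indices can be split into two sides each of total at most $1/2$. All weights will be rationals with polynomially many bits, so the hardness is weak, matching the pseudo-polynomial algorithm of \Cref{thm:tw-algorithm}.

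The instance is a path-like chain of $n$ constant-size gadgets $\Gamma_1,\dots,\Gamma_n$. Consecutive gadgets share a constant-size interface of vertices, and two long ``rails'' run along the chain. Each gadget $\Gamma_i$ contains two edge-disjoint triangles that are locally broken. This forces at least two modified edges per gadget, so any repair has size at least $2n$. The gadget is designed so that a repair of size exactly $2$ inside $\Gamma_i$ comes in only two patterns, $L_i$ and $R_i$. Pattern $L_i$ inserts an edge of length $1+x_i$ on the left rail, and pattern $R_i$ does the same on the right rail; the unchosen rail receives a segment of length exactly $1$. Concretely, each rail segment in $\Gamma_i$ is a pair of parallel routes: a short route of length $1$, broken by the triangle constraint unless it is hit, and a route of length $1+x_i$. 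Hitting the triangle on one side removes the short route on that side. The triangles share edges so that the two hits must be on opposite rails.

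We then close two global cycles. For each rail, add one heavy edge from its start to its end with weight $4n+1/2$, the constant chosen to fit the per-gadget rail lengths. We tune the rail lengths so that the rail length under the chosen patterns is a fixed constant plus the sum of the $x_i$ assigned to that side. With the right constant, the global cycle on the left rail is broken exactly when $\sum_{i\in L} x_i > 1/2$, and likewise on the right. Completeness: given a valid partition, choose patterns accordingly; both global cycles are then non-broken, all local triangles are hit at a light edge, and we must check that no other cycle is broken. Since the heavy global edges are the only long edges, this is a finite case check per gadget plus the two rail cycles. This gives a light hitting set of size $2n$, so the same witness works for both \MR{} and \IOMR{}. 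Pathwidth: the gadgets form a path, each bag holds the interface of $\Gamma_i$ and $\Gamma_{i+1}$ together with the two endpoints of the global edges, and with a careful gadget this comes to at most $7$ vertices, giving pathwidth $6$.

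The main obstacle is soundness. A repair of size at most $2n$ must spend exactly two edges per gadget. So it cannot afford an extra edge on a global heavy edge or an ad hoc cut elsewhere, and those two edges must realize $L_i$ or $R_i$ rather than some other hitting pair. For \MR{} we must also rule out decreasing the two heavy edges. We handle this by duplicating each heavy global edge, or by making the global check robust: hitting a global cycle costs an edge beyond the $2n$ budget, because every gadget already needs two. The intended argument is as follows. Counting forces exactly two edges per gadget and none outside. Then we enumerate the size-$2$ hitting sets of the local broken cycles in one constant-size gadget and verify that each one either equals $L_i$ or $R_i$, or leaves the rails at least as long as one of them. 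Finally, since the global cycles are unhit, their non-brokenness yields $\sum_{i\in L} x_i\le 1/2$ and $\sum_{i\in R} x_i \le 1/2$, hence a partition.
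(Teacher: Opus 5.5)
Your skeleton matches the paper's. You reduce from \Partition{} and use $2n$ pairwise edge-disjoint broken triangles, so a repair of size $2n$ uses exactly one edge per triangle and nothing else. Your budget option is the right way to protect the chords of weight $4n+\frac12$ under \MR{}; duplicating them is unnecessary and would leave simple graphs. The gap is that the gadget, which is the whole content of the reduction, is never built. The one mechanism you give for forcing the two hits onto opposite rails, ``the triangles share edges'', contradicts the edge-disjointness your lower bound needs, since a shared edge lets a single modification hit both triangles.

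The missing idea is to couple the triangles through \emph{additional} broken cycles whose non-triangle edges lie in no triangle. Under the budget, these cycles can then only be hit through triangle edges. In the paper these are two 4-cycles through the weight-$6$ edge $a_{i-1}^2a_i^1$. One runs through both $(x_i+1)$-edges and the weight-$3$ rung $b_i^1b_i^2$; it is broken because $2x_i+5<6$, which is where $x_i<\frac12$ is used. The other runs through both unit edges and the rung $c_i^1c_i^2$, and is broken because $5<6$. The triangles' heavy edges lie on neither 4-cycle. So the only admissible pairs are one unit edge and one $(x_i+1)$-edge, taken from different triangles.

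Second, the direction of your global test is reversed. The chord is the heaviest edge of any cycle through it, so the rail cycle is broken iff the rail is \emph{shorter} than $4n+\frac12$, i.e.\ iff $\sum_{i\in L}x_i<\frac12$, not $>\frac12$. Non-brokenness therefore gives $\sum_{i\in L}x_i\ge\frac12$ and $\sum_{i\in R}x_i\ge\frac12$. Since the $x_i$ sum to $1$, your final conclusion survives. However, your soundness filter is inverted: a local pattern that leaves the rails at least as long as $L_i$ or $R_i$ is exactly the dangerous case, not a harmless one. Hitting both unit edges lengthens both rails. If that were allowed in every gadget, both rails would have length $4n+1$ and every instance would map to a yes-instance. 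It is the 4-cycle through both $(x_i+1)$-edges, left unhit by that pattern, that excludes it.

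Finally, completeness is not a per-gadget check. Cycles through a chord may switch rails or detour across gadgets. The paper's weight-$3$ connectors handle this: they make a pure-rail gadget crossing cost $4$ or $4+x_i$, matching the constant $4n+\frac12$. They also make every deviation cost at least one extra unit, more than the slack $\frac12$.
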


\begin{proof}
    We reduce from the \Partition{} problem, which we define as follows:

    \begin{problem}[\Partition]
    	Given a list of $n$ numbers $x_1, x_2, \dots, x_n$, each strictly between 0 and $\frac12$ and summing to 1, does there exist a partition of $[n]$ into disjoint subsets $[n] = P_1 \cup P_2$ such that $\sum_{i \in P_1} x_i = \sum_{i \in P_2} x_i = \frac12$?
    \end{problem}

    \ipncm{.8}{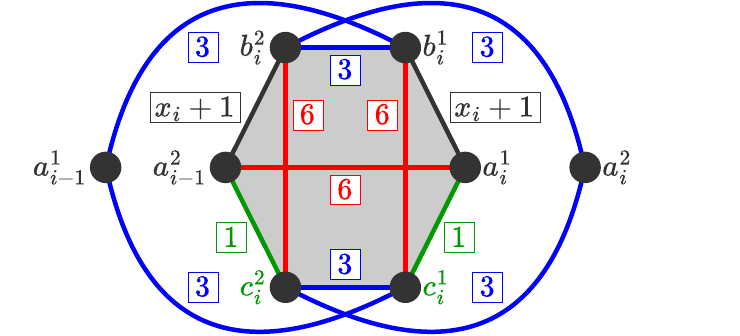}{\label{figBoundedTreewidthHardnessGadget} The gadgets (i.e., weighted subgraphs of $H$) produced in the reduction in the proof of Theorem \ref{thmBoundedTreewidthHardness}, for each $i \in [n]$.}

    This problem is obviously still $\NPh$ with the restrictions that each $0 < x_i < \frac12$. Given a \Partition{} instance $x_1, x_2, \dots, x_n$, we construct a weighted graph $(H, w)$ with a copy of the gadget pictured in Figure~\ref{figBoundedTreewidthHardnessGadget} for each $i \in [n]$, which has the following edge weights:
    \begin{align*}
        w(a_{i - 1}^1b_i^1) = w(b_i^1b_i^2) = w(b_i^2a_i^2) = w(a_{i - 1}^1c_i^1) = w(c_i^1c_i^2) = w(c_i^2a_i^2) &= 3\\
        w(b_i^1c_i^1) = w(b_i^2c_i^2) = w(a_{i - 1}^2a_i^1) &= 6\\
        w(a_{i - 1}^2b_i^2) = w(b_i^1a_i^1) &= x_i + 1\\
        w(a_{i - 1}^2c_i^2) = w(c_i^1a_i^1) &= 1
    \end{align*}
    These gadgets are stitched together by the $a$ vertices with shared names. Finally, we add two additional edges, with weights
    $$w(a_0^1a_n^1) = w(a_0^2a_n^2) = 4n + \frac12.$$
    The entire weighted graph $H$ is shown in Figure~\ref{figBoundedTreewidthHardnessExample}; note that the pathwidth of this graph is 6 for $n \geq 2$.

    \ipncm{.4}{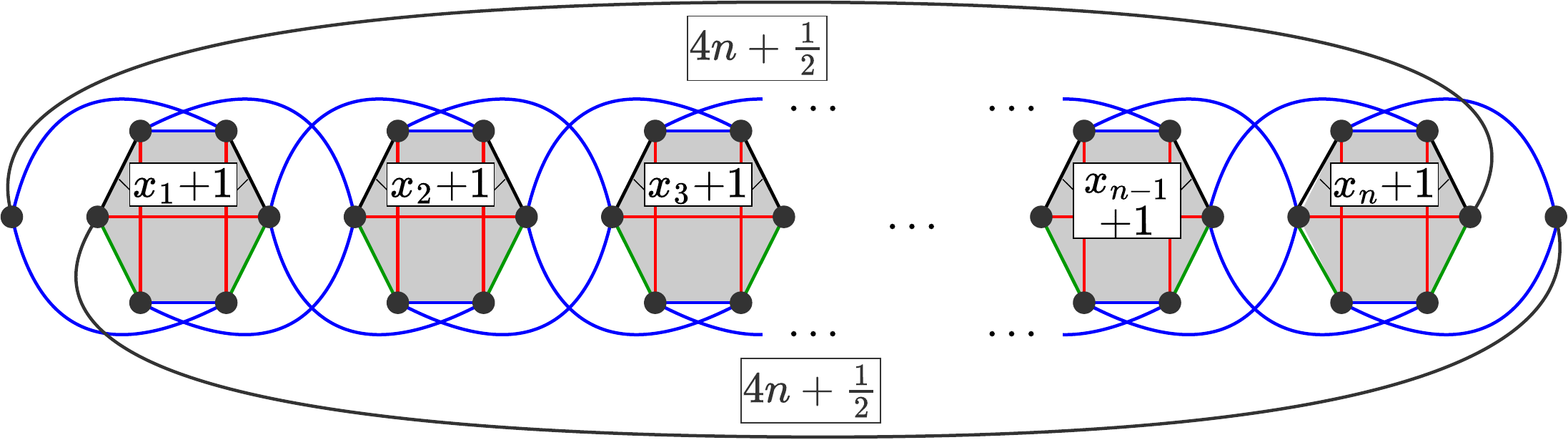}{\label{figBoundedTreewidthHardnessExample} The weighted graph $H$ produced in the reduction from \Partition. As in Figure~\ref{figBoundedTreewidthHardnessGadget}, green edges have weight 1, blue edges have weight 3, and red edges have weight 6, with the weights on black edges labeled on each edge.}

    We claim that there exists a solution $[n] = P_1 \cup P_2$ to the \Partition{} instance if and only if there exists a \MR{} solution $\rho$ that modifies at most $\abs{S_\rho} \leq 2n$ edges, in which case $\rho$ can be chosen to be increase-only.  To prove this, first consider one of the hexagonal gadget pieces (the reader may find it helpful to refer back to Figure~\ref{figBoundedTreewidthHardnessGadget}). Since each $(x_i + 1) < \frac32$, the triangle inequalities are violated in both of the triangles $a_{i - 1}^1b_i^1c_i^1$ and $a_{i - 1}^2b_i^2c_i^2$, as well as both of the trapezoids $a_{i - 1}^2a_i^1b_i^1b_i^2$ and $a_{i - 1}^2a_i^1c_i^1c_i^2$. Since the $2n$ triangles across all $n$ gadgets are disjoint, any valid $S_\rho$ must contain one edge from each of them. Thus, $\abs{S_\rho} = 2n$ if and only if some choice of a single edge from each triangle satisfies all other constraints. Within the $i\tth$ gadget, the only way to satisfy both trapezoids is to include an edge from the top trapezoid in one triangle and the bottom trapezoid in the other triangle. That is, we must take edges $a_i^1c_i^1$ and $a_{i - 1}^2b_i^2$, or instead take edges $a_i^1b_i^1$ and $a_{i - 1}^2c_i^2$. Note that all of these edges are light edges in their respective broken cycles, so either choice will be an increase-only solution.
    
    For a given choice of $S_\rho$ of this form, let $P_1$ be the set of indices $i$ from which $a_i^1c_i^1$ and $a_{i - 1}^2b_i^2$ are chosen, and $P_2$ be the set of indices $i$ from which $a_i^1b_i^1$ and $a_{i - 1}^2c_i^2$ are chosen. To determine whether this solution is valid, we must examine all other potential cycles $c$ that do not intersect $S_\rho$. For a cycle $c$ that does not cross either of the two edges of weight $4n + \frac12$, it is either contained within one of the hexagons, in which case we have already verified that all constraints are satisfied, or it crosses to a neighboring hexagon and back, involving two edges of weigh 3, which cannot possibly violate the triangle inequality because the only other large edge weight is 6. Thus the only remaining broken cycles to consider are those that cross one of the edges of weight $4n + \frac12$ (but not both). Observe that the endpoints of either of these edges are separated by a minimum of $n$ crossings between hexagons (which have weight 3), and $n$ additional edges (which have weight at least 1). Thus, for such a cycle to violate the triangle inequality, it cannot take any other detours, such as following any of the horizontal or vertical edges in Figure~\ref{figBoundedTreewidthHardnessGadget}. After removing edges in $S_\rho$, there are only two potentially broken cycles remaining. The broken cycle from $a_0^1$ to $a_n^1$ has weight
    $$\sum_{i \in P_1} (3 + (x_i + 1)) + \sum_{i \in P_2} (3 + 1) = 4n + \sum_{i \in P_1} x_i,$$
    while the broken cycle from $a_0^2$ to $a_n^2$ has weight
    $$\sum_{i \in P_1} (1 + 3) + \sum_{i \in P_2} ((x_i + 1) + 3) = 4n + \sum_{i \in P_2} x_i.$$
    Both of these are at most the weight of the heavy edge, $4n + \frac12$, if and only if $P_1 \cup P_2$ is a solution to the \Partition{} instance.
\end{proof}

%% file: planar-is-hard.tex
In this section we focus on the restriction when $G$ is planar. In this setting, we have very strong hardness results:

\begin{theorem}\label{thmPlanarHardness}
	\MR{} and \IOMR{} are both $\NPh$, even when
	\begin{itemize}
		\item The input graph $G$ is planar, with maximum degree 6, and
		\item The weights only take on 3 distinct values, namely 1, 4, and 13.
	\end{itemize}
\end{theorem}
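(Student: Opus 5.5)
We reduce from \textsc{Independent Set} on planar graphs of maximum degree three, which is $\NPh$ by~\cite{GareyJohnsonStockmeyer1976}. We go through an intermediate problem, \pOrientation: given a planar graph $G_0$ of maximum degree three and an integer $s$, decide whether the edges of $G_0$ can be oriented so that at most $s$ vertices are \emph{nonsinks}, i.e.\ have at least one outgoing edge. The set of sinks of any orientation is independent, since an edge between two sinks would have to point out of one of them. Conversely, given an independent set $I$, we can orient every edge toward its endpoint in $I$, or arbitrarily if neither endpoint lies in $I$. Then every vertex of $I$ is a sink, possibly together with some isolated vertices. So $G_0$ has an independent set of size at least $|V(G_0)|-s$ if and only if it has an orientation with at most $s$ nonsinks. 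This equivalence is routine, and isolated vertices can be discarded beforehand.

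Next we build the \MR{} instance, following the gadgets sketched in the overview.
\begin{itemize}
\item Each vertex $v$ becomes a vertex gadget: a $5$-cycle with one heavy edge of weight $13$ and four edges of weight $4$. Since $13>16$ is false, I would instead realize the cycle so that its light side sums to exactly $12$, using three edges of weight $4$ and reserving slots for up to three ``terminal'' edges. The exact weights must be tuned so that the cycle is broken by deficit exactly $1$.
\item Each edge $uv$ becomes an edge gadget: a chain of three triangles glued to a designated terminal edge of the $u$-gadget and one of the $v$-gadget. Each triangle has weights of the form $(4,1,\dots)$ from $\{1,4,13\}$, arranged so that it is broken.
\end{itemize}
Planarity follows because we replace vertices and edges of a planar embedding by local planar pieces glued along boundary edges in the cyclic order of the embedding. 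The degree bound of $6$ is a direct count: a terminal vertex receives two from the pentagon plus up to four from the triangles.

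Completeness: given an orientation with $s$ nonsinks, we build a solution of size $2m+s$. For each edge oriented $u\to v$, we repair the terminal edge on the $v$ side together with the shared middle edge of the triangle chain. Every triangle is then hit at a light edge. For a sink, all of its terminal edges are repaired, so its pentagon is already hit. For each nonsink, we add one light pentagon edge. It then remains to check that no other broken cycle survives. Cycles passing through two gadgets pay enough weight-$4$ edges to exceed $13$, and this is a local case check similar in spirit to the pathwidth reduction.

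Soundness is the main obstacle. Given any hitting set $S$ of size at most $2m+s$, we must extract an orientation. I would first argue that each edge gadget needs at least two edges of $S$, because it contains two edge-disjoint broken triangles. Then, by an exchange argument, any solution using exactly two edges in a gadget must use one of the two ``canonical'' patterns, and any gadget using three or more edges can be normalized to a canonical pattern plus an edge charged to an endpoint. The key lemma would be that a pentagon whose terminal edges are not all repaired costs one extra edge inside its own gadget, because its broken cycle avoids all edge-gadget edges except its terminals. Charging each such extra edge to a nonsink gives the bound.

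Finally, all hitting edges in the canonical solutions are light, so both \MR{} and \IOMR{} are covered by Theorem~\ref{thm:fan-characterization}. The concrete gadget weights need verification once fixed.
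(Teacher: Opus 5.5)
There is a genuine gap. Your architecture is the paper's: independent set $\to$ nonsink orientation, one pentagon with a weight-$13$ edge per vertex, a chain of three broken triangles per edge, and target $2m+s$. But the reduction is never actually instantiated, and the part you defer (``the exact weights must be tuned'') is where the argument lives.

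Your vertex gadget is not well defined. A $5$-cycle has four light edges, so ``three edges of weight $4$'' plus up to three terminal slots, with a light side of exactly $12$, cannot all hold. Also, ``deficit exactly $1$'' describes a post-repair state (a nonsink with two raised terminals), not the input. The paper's calibration is as follows:
\begin{itemize}
\item $h_u=13$, $\ell_u=1$, each terminal $e_{uv}$ has weight $1$, and unused terminal slots have weight $4$.
\item Every edge-gadget triangle has weights $(1,1,4)$; the terminals and the two shared edges are the weight-$1$ edges.
\end{itemize}
So each terminal is light both in its end triangle and in the pentagon, and one modification serves both. Raising all three terminals to $4$ gives exactly $1+4+4+4=13$. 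A pentagon with one terminal left at $1$ and the others at most $5$ has $1+1+5+5=12<13$.

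That cap of $5$ is the idea missing from your soundness argument, and your key lemma points at the wrong cycle. In the hitting-set formulation, the original pentagon is hit as soon as any single terminal is in $S$, so it cannot separate sinks from nonsinks. What forces an extra edge at a nonsink $u$ is a different cycle: start from the pentagon and reroute around each repaired terminal through the other two edges of its end triangle (cost $1+4=5$). This cycle is broken ($12<13$), and, contrary to your claim, it uses edge-gadget edges other than terminals. Your normalization must therefore leave those detour edges outside $S$.

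Your ``shared middle edge'' is ambiguous on exactly this point. In the correct pattern for $u\to v$, the second repaired edge is the shared edge on the $u$ side. That edge hits the $u$-side triangle and leaves the detour around $e_{vu}$ free; the other shared edge would leave the $u$-side triangle unhit.

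The paper avoids global hitting-set bookkeeping by working with weights. It rewrites $\rho$ gadget by gadget so that each edge gadget has exactly two modified edges and at most one modified terminal. Any modified terminal satisfies $\rho'\le 5$ by the triangle inequality in its otherwise untouched end triangle. Saved modifications are spent on $\ell_u$ or $h_u$. Only the pentagon inequalities must survive, and then $1+1+5+5<13$ charges every nonsink.

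The same detour cycle refutes your completeness claim that multi-gadget cycles ``pay enough weight-$4$ edges.'' It meets three gadgets but uses only two weight-$4$ edges. It is hit only because a sink has all its terminals in $S$, and a nonsink has $\ell_u$ in $S$. Every cycle through $h_u$ uses $\ell_u$, since their common vertex has degree $2$.

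Finally, the two-edge hitting patterns of an edge gadget are not just two canonical ones. Both shared edges together work, and in \MR{} so does a shared edge with a weight-$4$ edge. What you can prove is only that at most one terminal is modified.
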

Throughout this section, we denote $|V(G)|, |E(G)|$ by $n, m$ respectively.
We reduce from the following problem, which we show is equivalent to a variant of the maximum independent set problem, \IS. An \emph{$s$-nonsink orientation} of an undirected graph is an assignment of directions to the edges such that at most $s$ vertices have an out-edge (i.e., are not sinks).
\begin{problem}[\pOrientation]
	Given a planar graph $G$ of maximum degree 3, and a positive integer $s$, does $G$ admit an $s$-nonsink orientation?
\end{problem}

\begin{lemma}\label{lemPOrientationHard}
	\pOrientation{} is $\NPh$.
\end{lemma}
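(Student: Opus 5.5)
We will reduce from independent set on planar graphs of maximum degree $3$, which is $\NPh$ by \cite{GareyJohnsonStockmeyer1976}. The key claim is that for a graph $G=(V,E)$, $G$ admits an $s$-nonsink orientation if and only if $G$ has an independent set of size at least $n-s$. Since the graph itself is passed unchanged and only the integer parameter is transformed ($k\mapsto n-k$), the reduction is trivially polynomial and preserves planarity and maximum degree $3$.

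For the forward direction, suppose $I$ is an independent set with $|I|\ge n-s$. We orient every edge $uv$ as follows: if exactly one endpoint lies in $I$, say $v\in I$, we orient it $u\to v$; if neither endpoint lies in $I$, we orient it arbitrarily. No edge has both endpoints in $I$, by independence. Every vertex of $I$ then has only incoming edges, so it is a sink. Hence the nonsinks lie in $V\setminus I$, and there are at most $n-|I|\le s$ of them.

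For the converse, take an $s$-nonsink orientation and let $I$ be its set of sinks, so $|I|\ge n-s$. We claim $I$ is independent. If two sinks $u,v$ were adjacent, the edge $uv$ would be oriented away from one of them, say $u\to v$, giving $u$ an out-edge and contradicting that $u$ is a sink. So $I$ is independent of size at least $n-s$.

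The only point needing care is degenerate input. Isolated vertices are sinks under every orientation, and they are also freely includable in any independent set, so they are consistent with the equivalence and need no special handling. We also require $s$ to be positive; if the independent set target is $n$ (so $s=0$), the instance is decidable in polynomial time by checking whether $G$ has any edges, and we can map it to a fixed trivial yes- or no-instance. With these checks the lemma follows from the equivalence above.
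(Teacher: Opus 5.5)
Your proof is correct and follows the paper's argument: the same reduction from independent set on planar maximum-degree-$3$ graphs with $s = n-k$, resting on the observation that the sinks of any orientation form an independent set and that every independent set becomes all-sinks when edges are oriented toward it. Your separate handling of the degenerate case $s=0$ (needed because the problem requires $s$ to be positive) is a detail the paper leaves implicit.
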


\begin{proof}
	Recall that \IS{} is $\NPh$ even on this class of graphs \cite{garey1977rectilinear}. The proof then follows from the following simple observation: A set $S \subseteq V(G)$ is an independent set if and only if there is an orientation of the edges of $G$ such that all vertices in $S$ are sinks. Thus, $G$ has an independent set of size $k$ if and only if it has an edge orientation such that there are at least $k$ sinks, in which case at most $s := n - k$ vertices are \emph{not} sinks.
\end{proof}

\begin{proof}[Proof of Theorem \ref{thmPlanarHardness}]
	Given a instance of \pOrientation{} with graph $G$, we construct a \MR{}/\IOMR{} instance $H$ as follows. For each vertex and edge of $G$, we construct corresponding vertex and edge gadgets in $H$ as illustrated in Figure~\ref{figPlanarHardnessGadget}:
	\ipncm{.8}{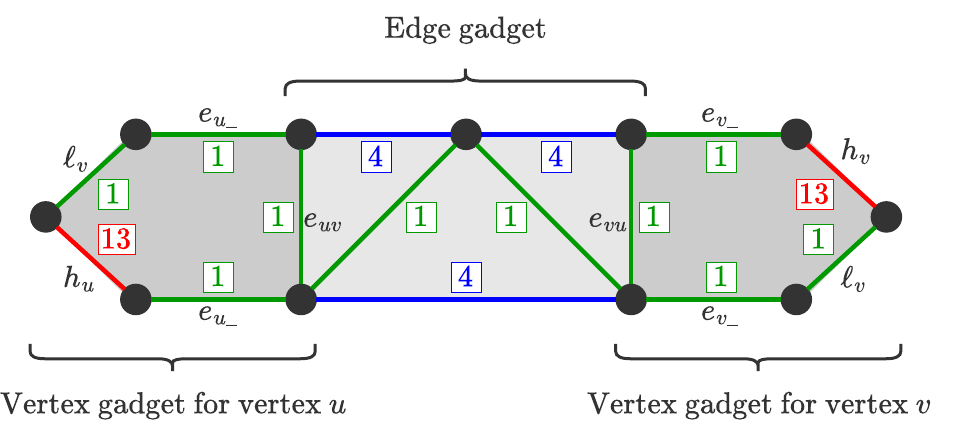}{\label{figPlanarHardnessGadget} The gadgets (i.e., weighted subgraphs of $H$) produced in the reduction in the proof of Theorem \ref{thmPlanarHardness}, for an arbitrary pair of adjacent vertices $u$ and $v$, each of degree 3.}
	\begin{itemize}
		\item A vertex gadget for vertex $u \in V(G)$ consists of a cycle of size 5. Two of the adjacent edges are denoted $\ell_u$ and $h_u$, of respective weights 1 and 13. The other three \emph{terminal} edges are possibly involved in edge gadgets as well. Specifically, for each neighbor $v \in N(u)$, we denote one of the terminals as $e_{uv}$, with weight 1. If there are fewer than three neighbors, there will be some terminals that aren't associated to edges in $G$, which will instead have weight 4.
		\item An edge gadget for edge $uv \in E(G)$ links together terminals $e_{uv}$ and $e_{vu}$ in a chain of 3 triangles, one sharing edge $e_{uv}$, one sharing $e_{vu}$, and a \emph{central} triangle sharing an edge from each of the other two triangles (not $e_{uv}$ or $e_{vu}$). The two terminals and two edges shared between triangles all have weight 1. The other three edges have weight 4.
	\end{itemize}
	An full example of the entire graph $H$ produced by the reduction is shown in Figure~\ref{figPlanarHardnessExample}.
	
	\ipncm{.4}{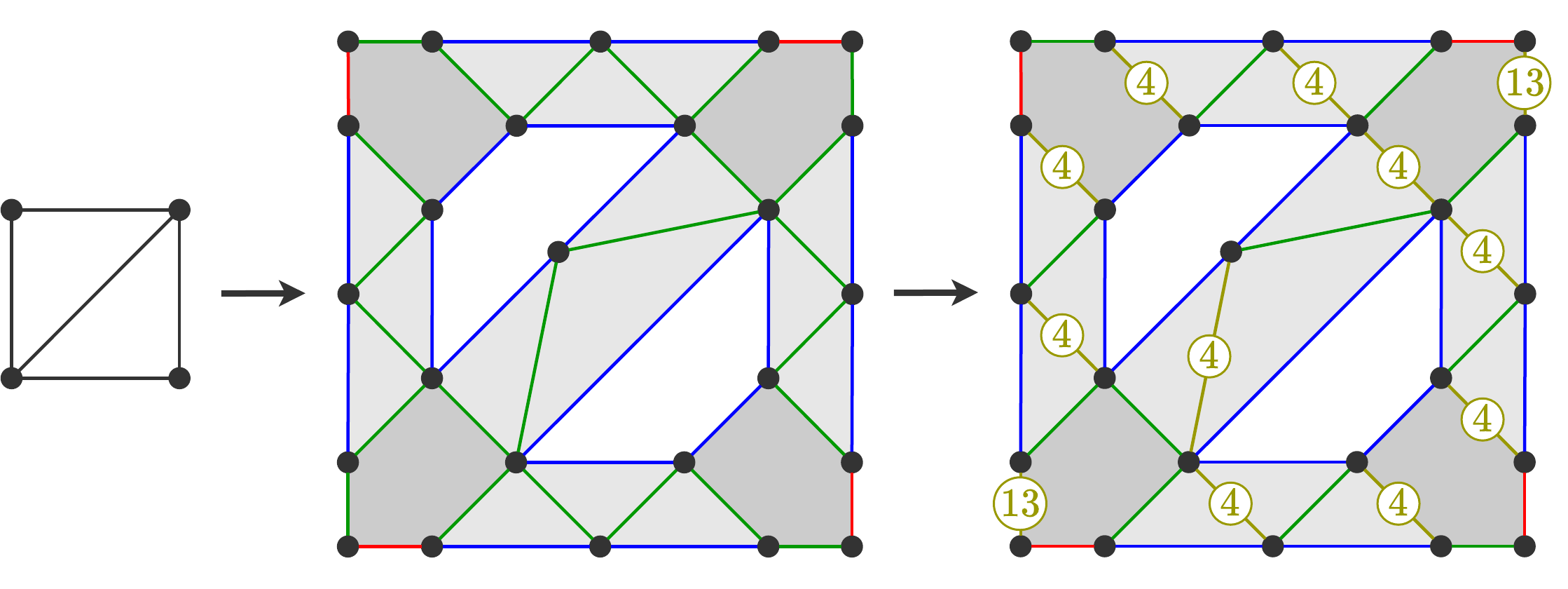}{\label{figPlanarHardnessExample} An example of the weighted graph $H$ (center) produced from a small graph with 4 vertices $G$ (left) according to the reduction from \pOrientation. As in Figure~\ref{figPlanarHardnessGadget}, green edges have weight 1, blue edges have weight 4, and red edges have weight 13. The optimal \MR/\IOMR{} solution (right) modifies $10 + 4 - 2 = 12$ edges, with new distances annotated with the circled numbers. This solution corresponds to the maximal independent set of size 2 in $G$, as constructed in the proof.}
	
	Suppose $G$ has $m$ edges. We will show that $G$ admits an $s$-nonsink orientation if and only if $H$ can be repaired by modifying at most $2m + s$ edges. For the forward direction (also illustrated in Figure~\ref{figPlanarHardnessExample}), fix an $s$-nonsink orientation. We first satisfy all of the broken cycles that are contained entirely within edge gadgets as follows: For an edge that oriented from $u$ to $v$, change the weights of $e_{vu}$ and the common edge of the other two triangles that do not contain $e_{vu}$, setting them both to 4. This turns the edge gadget into 2 disjoint pairs of vertices at distance 1, with all other distances at 4, which is clearly a metric. Next we turn to the vertex gadgets. For each vertex $v$ that is a sink, we have already made each of the three terminals distance 4, so the triangle inequality is satisfied: $13 \leq 1 + 4 + 4 + 4$. For each vertex $v$ that is not a sink, we then increase $\ell_v$, setting it to 13. This clearly makes the vertex gadget cycle into a metric. Thus, after $2m + s$ modifications, we have resolved all broken cycles contained entirely within a vertex or edge gadget. We must finally argue that there are no other potential violations involving multiple gadgets.  First note that there is clearly no path of edges of weight 1 between the endpoints of any edge of weight 4. Thus, all we need to worry about is cycles involving a heavy edge $h_v$ of weight 13  Any cycle that can be contracted to a point via a homotopy through the vertex/edge gadget faces must satisfy the triangle inequality because it can be decomposed as a vector sum of triangles and pentagons within gadgets, which we have just argued all satisfy the triangle inequality themselves. Any cycle that is not contractible crosses at least 3 edge gadgets, so contains three edges of weight 4, and also at least one edge of weight at least 1 (namely the $\ell_v$ next to $h_v$). Hence, the triangle inequality is satisfied.
	
	For the backward direction, we essentially show that this construction is optimal. Formally, let $\rho: E(H) \to \rr_{> 0}$ be a metric repair solution modifying at most $\abs{S_\rho} \leq 2m + s$ edges. Our first step is to edit $\rho$ to obtain a new weighting $\rho': E(H) \to \rr_{> 0}$ with the following properties:
	\begin{enumerate}[(1)]
		\item\label{itmNewSolutionEdges} For each edge $uv \in E(G)$, $S_{\rho'}$ contains exactly two edges in the edge gadget for $uv$. Furthermore, $S_{\rho'} \cap \{e_{uv}, e_{vu}\} \leq 1$ and $\rho'(e_{uv}), \rho'(e_{vu}) \leq 5$. (In other words, one of the terminals is unmodified from $w$ and the other has weight at most 5.)
		\item\label{itmNewSolutionVertices} The triangle inequality is still satisfied within each vertex gadget. (And also within each edge gadget, but we do not need this for the proof.)
		\item\label{itmNewSolutionBetter} The new solution modifies weakly fewer edges: $\abs{S_{\rho'}} \leq \abs{S_{\rho}}$.
	\end{enumerate}
	We construct $\rho'$ iteratively, one edge gadget at a time, ensuring each satisfies Property \ref{itmNewSolutionEdges} while always maintaining Properties (\ref{itmNewSolutionVertices}) and (\ref{itmNewSolutionBetter}). For a given edge gadget corresponding to edge $uv \in E(G)$, there are several cases to consider, depending on how $S_\rho$ intersects the edge gadget:
	
	\begin{itemize}
		\item\textbf{Case 1: There are at most two modified edges in the edge gadget.} In this case, we claim that setting $\rho'(e) = \rho(e)$ for each of the seven edges works; no edits are needed. Observe that, according to the original weights $w$, there are three triangles that must be repaired, and one modification can satisfy at most two of them. Thus, $\rho$ must already have at least two modified edges, so there are exactly two. Furthermore, if one of them is a terminal $e_{uv}$, then the other modified edge must be the edge shared in common with both of the other two triangles. In that case, $\rho(e_{uv}) \leq 5$ by the triangle inequality within the edge gadget triangle, while $e_{vu}$ is unmodified. Thus, Property \ref{itmNewSolutionEdges} is already satisfied, and requires no changes.
	
		\item\textbf{Case 2: There are exactly three modified edges.} Then there are two subcases:
		
		\begin{itemize}
			\item\textbf{Case 2a: At most one of the modified edges is a terminal.} Say that $e_{vu}$ is \emph{not} modified. Then we define $\rho'(e_{uv}) = \rho'(e^*) := 4$, where $e^*$ is the edge shared between the two triangles not containing $e_{uv}$. All the remaining edges remain the same as in $w$. This satisfies Property \ref{itmNewSolutionEdges}, uses one fewer modification, but may violate the constraint in the vertex gadget for $u$. (The constraint for $v$ is still satisfied because $\rho'(e_{vu}) = w(e_{vu}) = \rho(e_{vu})$.) To fix this, we increase either $\ell_u$ or $h_u$ (whichever one has smaller weight under $\rho$) until the triangle inequality is satisfied, preserving Property \ref{itmNewSolutionVertices}. This increases the number of modifications by one, but Property \ref{itmNewSolutionBetter} is still satisfied because we had a surplus of one fewer modification.
			\item\textbf{Case 2b: Both of the terminals are modified.} There is only one other modified edge within the edge gadget, so it must be on the central triangle. Thus it cannot be on both of the other triangles. So suppose it is not on the triangle containing $e_{vu}$. Then we define $\rho'(e_{vu}) = \rho(e_{vu})$ and $\rho'(e^*) := 4$, where $e^*$ is the edge shared between the two triangles not containing $e_{vu}$. All the remaining edges remain the same as in $w$. Observe that this satisfies Property \ref{itmNewSolutionEdges} because none of the edges in the triangle containing $e_{vu}$ have been changed from the valid solution $\rho$, so the triangle inequality implies $\rho'(e_{vu}) = \rho(e_{vu}) \leq 4 + 1 = 5$ As in Case 2a, we can fix the constraint in the vertex gadget for $u$ by using the additional modification we saved, while the constraint for $v$ is still satisfied because $\rho'(e_{vu}) = \rho(e_{vu})$. Thus, properties (\ref{itmNewSolutionVertices}) and (\ref{itmNewSolutionBetter}) are satisfied as before.
		\end{itemize}
		\item\textbf{Case 3: There are four or more modified edges.} Then we can apply the same construction in Case 2a, except that we may need to fix \emph{both} vertex gadgets using the two extra modifications we saved. This obviously satisfies all three properties.
	\end{itemize}
	
	Given $\rho'$, we define an orientation for $G$ that points each edge $uv \in E(G)$ to the vertex whose terminal is modified by $\rho'$, if such an edge exists. The second part of Property \ref{itmNewSolutionEdges} guarantees that this is well defined, as there is at most one such terminal. (If neither have been modified by $\rho'$, the edge can be oriented either way). Property \ref{itmNewSolutionBetter} implies that $\abs{S_{\rho'}} \leq 2m + s$, while the first part of Property \ref{itmNewSolutionEdges} says that exactly $2m$ of these edges are contained within edge gadgets, so at most $s$ vertices of $G$ have vertex gadgets with additional modifications under $\rho'$. We claim that these are the only possible vertices that are not sinks. Suppose toward a contradiction that some vertex $u$ is not a sink, but its vertex gadget does \emph{not} have additional modifications under $\rho'$. There must be some adjacent vertex $v$ with an edge oriented from $u$ to $v$. By the definition of the orientation, this means that $\rho'(t_{uv}) = w(t_{uv}) = 1$. Letting the other two terminal edges be $e_1$ and $e_2$, we know from the third part of Property \ref{itmNewSolutionEdges} that each are weighted at most 5 under $\rho'$ if they are part of an edge gadget, and otherwise weighted 4 by the definition of the reduction, which is also at most 5. Thus, the triangle inequality for the heavy edge $h_u$ is violated:
	$$\rho'(\ell_u) + \rho'(e_{uv}) + \rho'(e_1) + \rho'(e_2) \leq 1 + 1 + 5 + 5 = 12 < 13 = \rho'(h_u)$$
	This contradicts Property \ref{itmNewSolutionVertices}. Hence, we have an $s$-nonsink orientation.
\end{proof}

%% file: grid-is-hard.tex
In this section we show that metric repair is $\NPh$ on grid graphs. We once again use the notation $|V(G)| = n$ and $|E(G)| = m$ where $G$ is the input instance of our reduction. Our starting point is a rectilinear layout of $G$. It is known that every $2$ connected planar graph admits, in linear time, an embedding into an $O(n)\times O(n)$ grid in which every vertex of $G$ is represented as a horizontal path and every edge as a vertical path joining the paths of its endpoints (see~\cite{rosenstiehl1986rectilinear}). We write $H_G$ for this embedding, see \Cref{fig:grid_representation}.

We fix some terminology for grid graphs. We refer to the vertices of a grid graph as \emph{points} and to its edges as \emph{segments}. Points are indexed by $\ZZ\times \ZZ$, and two points are adjacent when exactly one of their coordinates differ by $1$. For a fixed $i$ and integer $\ell$, the points $((i,j+t))_{t=1}^\ell$ induce a \emph{vertical path} of length $\ell$; \emph{horizontal paths} are defined analogously.

This construction alone is not enough: $H_G$ carries far more structure than $G$. The extra grid edges -- lying on no edge or vertex paths -- close cycles with no counterpart in $G$, making it difficult to set weights to the segments of $H_G$ in a way that the broken cycles in $H_G$ are an image of the broken cycles in $G$, with none of the spurious cycles broken. We first reshape $H_G$ to control which cycles can appear, then choose weights.

\begin{figure}[htb]
    \centering
    \begin{subfigure}[c]{0.45\textwidth}
        \centering\includegraphics{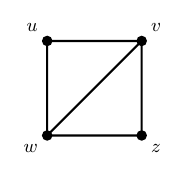}
    \end{subfigure}
    \begin{subfigure}[c]{0.45\textwidth}
        \centering
        \includegraphics{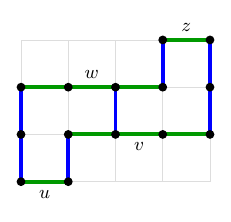}
    \end{subfigure}
    \hfill
    \caption{A planar graph $G$ and one possible rectilinear embedding $H_G$, as in \cite{rosenstiehl1986rectilinear}}.
    \label{fig:grid_representation}
\end{figure}

\subsection{The Construction}
We transform $H_G$ into our final construction $\Gamma$ in two steps, illustrated in \Cref{fig:grid construction}. 

\paragraph{Step 1: Dilation} We first dilate $H_G$ by a factor of $2$, sending each point $(i,j)$ to $(2i-1,2j-1)$ and subdividing every segment: the segment between $(i,j)$ and $(i,j+1)$ becomes the length-$2$ path
\[(2i-1,2j-1),\ (2i,2j-1),\ (2i+1,2j-1),\]
and vertical segments are treated analogously. The purpose of this step is separation: Two edge paths that were adjacent in $H_G$ now lie two columns apart, with a column of new points between them, so no single segment can join to distinct edge paths. If $v \in V(G)$ was represented by a vertex path of length $\ell$, it is now represented by a horizontal path of length $2\ell\eqdef \ell_v$.

\paragraph{Step 2: Blocks.} 
We replace each dilated vertex path by a rectangular subgrid, inflating it vertically and horizontally.For $v\in V(G)$, the \emph{block} $B_v$ is an $(m+1)\times (m+1)\cdot \ell_v$ subgrid: it has $m+1$ rows, and consecutive original points of $v$'s path are now spaced $m+1$ columns apart. We refine the remainder of the grid to match and add the segments needed for the result to be a grid graph. The purpose of this block is robustness: the $(m+1)$ scaling prevents a small solution from using segments from blocks. \Cref{lem:grid-block-segment-hits-atmost-two} and makes this argument precise.\\ 

Note that the resulting graph is of size polynomial in $n,m$, and so the construction can be done in polynomial time.  We denote the resulting grid graph by $\Gamma$, Each $v\in V(G)$ is represented by its \emph{block} $B_v$, and each edge $e\in E(G)$ by a vertical \emph{edge path} $P(e)$. By the dilation of Step 1, distinct edge paths remain separated: if $i_e$ and $i_{e'}$ are the first coordinates of the points in $P(e)$ and $P(e')$, then
\[
|i_e-i_{e'}|\ge 2.
\]

We classify segments of $\Gamma$.
A segment lying inside a block $B_v$ is a \emph{vertex
segment}, and a segment lying in an edge path $P(e)$ is called an
\emph{edge segment}. Every other segment is \emph{auxiliary}. Vertex and edge segment together are \emph{structural}, and a point is structural if it is an endpoint of a structural segment.  Finally, for
$e=uv$ the edge path $P(e)$ meets each of the blocks $B_v$ and $B_u$ in a single point called the \emph{terminal}, $t_{e,v}$ and $t_{e,u}$ respectively. By
construction, every terminal lies on either the top or the bottom
boundary of its block.

The block dimensions are designed to keep the terminals of a block apart. This is the geometric fact that is the crux of the soundness argument (\Cref{lem:grid-block-segment-hits-atmost-two}):
\paragraph{Terminal Separation.} Within any block $B_v$, the first coordinate of two terminals on the same boundary differ by at least $m+1$. The second coordinate of two terminals on opposite boundaries differ by at least $m+1$.

\begin{figure}[htbp]
    \centering
    \begin{subfigure}[c]{0.3\textwidth}
        \centering
        \includegraphics[width=\textwidth]{planar_rectilinear.pdf}
        \label{fig:grid embedding}
    \end{subfigure}
    \hfill
    \begin{subfigure}[c]{0.3\textwidth}
        \centering
        \includegraphics[width=\textwidth]{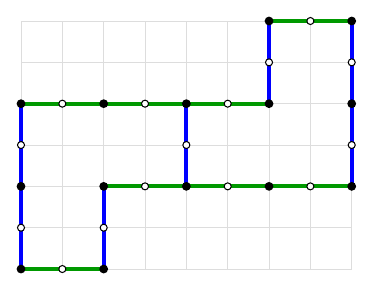}
        \label{fig:factor 2 dilation}
    \end{subfigure}
    \hfill
    \begin{subfigure}[c]{0.3\textwidth}
        \centering
        \includegraphics[width=\textwidth]{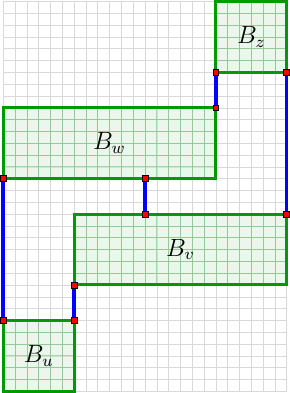}
        \label{fig:block construction}
    \end{subfigure}
    
    \caption{The two steps: On the left is the original rectilinear embedding $H_G$. in the middle is Step $1$, the dilation of $H_G$. On the right is Step $2$, the blocks and refinement of the graph resulting in $\Gamma$. Terminals are depicted in in red (square). The skeleton $\Sigma$ is the graph induced by the colored edges.}
    \label{fig:grid construction}
\end{figure}

\paragraph{Weights.}  Let $N$ be the number of segments in $\Gamma$, and let $\varepsilon = 1/(2N)$. Write $w_{\max} = \max_{e\in E(G)}w(e)$. We define the weight function $w_\Gamma$ on $\Gamma$ as follows.
\begin{itemize}
    \item \textbf{Edge Segments:} On each edge path $P(e)$, fix a distinguished segment $s_e$ and set $w_\Gamma(s_e) = w(e)$. Set the weight of every other segment in $P(e)$ to $\varepsilon$.
    \item \textbf{Vertex Segments:} Set their weights to $\varepsilon$.
    \item \textbf{Auxiliary Segments:} Set their weights to $w_{\max}$. 
\end{itemize}
We denote this weight function by $w_\Gamma$. The only segments of non-negligible weights are the distinguished $s_e$, which hold the original edge weights, and auxiliary edges that are uniformly heavy. The remaining segments weigh $\varepsilon$. In particular, the total $\varepsilon$-mass of $\Gamma$ is at most $N\varepsilon = \frac{1}{2}$.

We define the \emph{Skeleton} of $\Gamma$, denoted $\Sigma$, to be the subgraph on edge and vertex segments. As the weights suggest, broken cycles of $\Gamma$ will turn out to be exactly the images, running along $\Sigma$, of broken cycles of $G$. 
\begin{definition}[Realization]
    For a cycle $C$ in $G$, a \emph{realization} of $C$ in $\Gamma$ is a simple cycle $K$ in $\Gamma$ such that
\begin{enumerate}[label=(\roman*)]
  \item $K$ uses only points of
        $\left(\underset{v\in V(C)}{\bigcup} B_v \right)\cup \left(\underset{e\in E(C)}{\bigcup} P(e)\right)$, and
  \item $K$ traverses $P(e)$ in full for every $e\in E(C)$.
\end{enumerate}
We note that every cycle $C$ has many realizations in $\Gamma$.
\end{definition}

The proof is organized as follows. We first formalize the above claim and show that the broken cycles of $\Gamma$ are precisely the realizations of broken cycles of $G$ (\Cref{lem:grid-broken-cycles}). Completeness of the reduction follows immediately. 

For soundness, we show multiple exchange arguments that allow us to change an arbitrary metric repair solution in $\Gamma$, to one that does not include auxiliary segments (\Cref{lem:grid-mr-solution-structure}) and that does not include vertex segments (\Cref{lem:grid-can-remove-blocks}). The main technical lemma is \Cref{lem:grid-block-segment-hits-atmost-two}, which serves as a building block towards removing vertex segments. We conclude the soundness proof in \Cref{lem:grid-soundness}, which together with the completeness argument (\Cref{lem:grid-completeness}) concludes the reduction.
\begin{claim}[Induced Skeleton]\label{claim: grid-induced-skel}
The skeleton $\Sigma$ is an induced subgraph of $\Gamma$; equivalently, no auxiliary
segment has both of its endpoints structural.
\end{claim}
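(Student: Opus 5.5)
The plan is a contrapositive: take an arbitrary segment $\sigma=xy$ of $\Gamma$ with both endpoints structural, and show that $\sigma$ itself is a vertex or edge segment. Each structural point lies either inside some block $B_v$ (including its boundary, and hence every terminal) or on an edge path $P(e)$ outside all blocks. The argument is a case analysis on where $x$ and $y$ lie, using the coordinates produced by Steps 1 and 2.

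\textbf{Case 1: both endpoints in the same block $B_v$.} The block is a full rectangular subgrid, so every grid segment joining two of its points lies in $B_v$. Hence $\sigma$ is a vertex segment.

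\textbf{Case 2: the endpoints lie in different blocks $B_u \neq B_v$.} Distinct vertex paths of $H_G$ are disjoint horizontal paths. After the dilation of Step 1 they are at distance at least $2$, and Step 2 refines the grid between them and only scales these gaps up. So no unit segment joins two different blocks, and this case cannot occur.

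\textbf{Case 3: $x$ lies on an edge path $P(e)$ outside the blocks.} If $y$ is also on $P(e)$, then $x$ and $y$ share the column $i_e$ and differ by one in the second coordinate. Since $P(e)$ is a contiguous vertical path, $\sigma$ is an edge segment. If $y$ is on a different path $P(e')$, the separation $|i_e-i_{e'}|\ge 2$ rules out a horizontal segment. A vertical segment would force $i_e=i_{e'}$ with the two paths overlapping in that column, which contradicts the fact that edges of $G$ map to disjoint vertical paths. If instead $y$ lies in a block $B_v$ with $x\notin B_v$, then $x$ sits directly above or below the top or bottom boundary of $B_v$, or directly beside its side boundary. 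An edge path can approach a block only along its own column, and it enters the block exactly at its terminal $t_{e,v}$. So in the vertical situation $y=t_{e,v}$ and $\sigma$ is the last segment of $P(e)$. In the horizontal situation, $x$ would have to lie in the refined column just beside the block. By construction that column carries no edge path: edge paths leave a block only vertically, at terminals on its top or bottom boundary, and the dilation leaves an empty column beside every vertex path end.

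The main obstacle is making the last two situations rigorous, since the paper describes the ``refinement'' of Step 2 only informally. I will fix an explicit coordinate description. In the dilated grid every original point has odd coordinates, so structural points off the blocks are exactly the points $(i_e,j)$ with $j$ ranging strictly between the rows of the two endpoint blocks of $e$. Each block is then inserted by stretching rows and columns, and I will check that stretching preserves every gap of at least $2$ between distinct structures. Once this is set up, each adjacency claim above reduces to comparing coordinates.
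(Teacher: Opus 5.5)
Your proposal is correct and follows essentially the same route as the paper: a case analysis on where the two structural endpoints lie, settled by three facts. Blocks are full subgrids, edge paths are contiguous vertical paths that meet blocks only at their terminals, and the dilation keeps distinct structures at distance at least $2$. Your version is, if anything, more thorough than the paper's, since it explicitly handles two endpoints in distinct blocks and a block point adjacent to an edge-path point other than the terminal, both of which the paper's proof passes over.
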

\begin{proof}
Let $a=pq$ be an auxiliary segment and suppose $p,q$ are both structural. Each block is a subgrid of $\Gamma$, so any segment between two points of the same block is itself a block
segment, hence structural; thus $p,q$ do not lie in a common block. Each edge path is an induced path, so any segment between two of its points is one of its own segments; thus
$p,q$ do not lie on a common edge path. A block $B_v$ and an edge path $P(e)$ with $v\in e$ meet only at the terminal $t_{e,v}$, and a segment incident to a terminal is structural; while a block and a non-incident edge path, and two distinct edge paths, are kept at distance $\ge 2$ by the dilation. Hence no single segment joins them. In every case $a$ would have to be structural, a contradiction.
\end{proof}

\begin{lemma}\label{lem:grid-broken-cycles}
    A cycle in $\Gamma$ is broken if and only if it is a realization of a broken cycle in $G$
\end{lemma}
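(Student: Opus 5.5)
I would prove the two directions separately, using the weight structure of $w_\Gamma$: every segment has weight $\varepsilon$, $w_{\max}$, or $w(e)$ for a distinguished $s_e$. The total $\varepsilon$-mass of $\Gamma$ is at most $1/2$. Throughout I assume, as the reduction implicitly does, that the weights of $G$ are positive integers, or at least that all nonzero deficits in $G$ exceed $1/2$ and all weights exceed $1/2$; the same estimates should also go through with only the assumption that strict inequalities in $G$ have slack larger than the $\varepsilon$-mass.

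\textbf{Step 1 (a broken cycle avoids auxiliary segments).} Let $K$ be a broken cycle in $\Gamma$ with heavy segment $h$.
\begin{itemize}
\item If $h$ is auxiliary, then $w_\Gamma(h)=w_{\max}$. Its endpoints are not both structural by \Cref{claim: grid-induced-skel}. I would then argue that $K\setminus h$ contains another segment of weight at least $w_{\max}$, which contradicts $K$ being broken. The reason is that leaving a non-structural point requires an auxiliary segment, and $K$ must return.
\item If $h$ is not auxiliary but $K$ contains an auxiliary segment $a$, then $w_\Gamma(a)=w_{\max}\ge w_\Gamma(h)$. So $a$ already matches the weight of $h$ on its own, and $K$ is not broken.
\end{itemize}
Hence $K\subseteq\Sigma$, and $h$ is a distinguished segment $s_e$, since $\varepsilon$-segments cannot be heavy when the remaining weight is positive.

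\textbf{Step 2 (a skeleton cycle projects to a cycle of $G$).} For a cycle $K\subseteq\Sigma$, I would show that $K$ traverses each edge path either fully or not at all. An edge path is an induced path whose interior points have degree $2$ in $\Sigma$: its only structural neighbors are along the path itself, by the dilation separation. Contracting each block to its vertex $v$ and each fully traversed edge path to the edge $e$ then maps $K$ to a closed walk in $G$. Because $K$ is simple and blocks are connected subgrids, entering and leaving a block at terminals corresponds to passing through $v$. I also need that $K$ visits each block in a single contiguous interval. If it visited a block twice, the projection would be a closed walk rather than a cycle. In that case I would decompose: the heavy $s_e$ lies on a sub-cycle that is itself a realization, and the weight comparison still passes to $G$ as below. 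This decomposition is the step I expect to be the main obstacle, since simplicity of $K$ in $\Gamma$ does not immediately give simplicity of its image.

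\textbf{Step 3 (comparing weights).} For a realization $K$ of a cycle $C$ in $G$,
\[
w_\Gamma(K)=w(C)+\delta, \qquad 0\le\delta\le\tfrac12 ,
\]
and the heavy segment of $K$ is $s_e$ for the heavy edge $e$ of $C$. Consequently the deficit of $K$ equals the deficit of $C$ minus at most $1/2$. Under the integrality assumption, $C$ is broken exactly when its deficit is at least $1$, which holds exactly when the deficit of $K$ is positive. This gives both directions: every realization of a broken cycle of $G$ is broken, and by Steps 1–2 every broken cycle of $\Gamma$ is such a realization.
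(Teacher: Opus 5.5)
\textbf{Steps 1 and 3.} These are essentially the paper's argument. You exclude auxiliary segments using $w_{\max}$ and the induced skeleton; your non-structural-endpoint argument is a harmless variant of the paper's ``at most one auxiliary segment, and its endpoints would be structural.'' Edge paths are traversed in full because their interior points have degree $2$ in $\Sigma$. Finally, $\deficit(K)=\deficit(C)-Y$ with $Y<\tfrac12$, combined with integrality.

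\textbf{The gap in Step 2.} The gap is exactly where you suspected, and the repair you sketch does not prove the statement. The lemma asserts that $K$ \emph{itself} is a realization. Finding a broken cycle $C'$ inside the projected closed walk does not make $K$ a realization of $C'$. Condition (i) forces any cycle realized by $K$ to contain every edge whose path $K$ traverses, since interior points of $P(e)$ lie in no block and no other edge path.

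\textbf{The ``only if'' direction is false here.} Take $G$ with vertices $v,a,b,c,d$ and edges $va,vb,ab,vc,vd,cd,bc$; this graph is planar and $2$-connected. Set $w(ab)=10$ and all other weights $1$, and consider the closed trail $T=v\,a\,b\,v\,c\,d\,v$, with $\deficit(T)=5$. In $\Gamma$, the arc through $P(va),B_a,P(ab),B_b,P(bv)$ and the arc through $P(vc),B_c,P(cd),B_d,P(dv)$ match the four terminals of $B_v$ as $\{a,b\},\{c,d\}$. Either of the two remaining matchings closes these arcs into a single cycle, and at least one of them is non-crossing on the boundary cycle of $B_v$. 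A non-crossing matching can be routed disjointly along the boundary arcs of $B_v$. This yields a simple cycle $K\subseteq\Sigma$ with $\deficit(K)=5-Y>0$, so $K$ is broken. Yet $K$ traverses six edge paths, four of them at $v$, so it realizes no cycle of $G$.

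\textbf{What is actually true.} The paper's own proof has the same hole: ``each $P(e)$ is used at most once, so the walk is a simple cycle'' only yields a closed trail. What is true, and what your decomposition idea actually establishes, is a weaker converse. A broken $K$ lies in $\Sigma$ and projects to a closed trail $T$ with $\deficit(T)=\deficit(K)+Y>0$, hence $\deficit(T)\ge 1$. In an edge-disjoint cycle decomposition of $T$, the cycle $C$ containing the heaviest edge has $\deficit(C)\ge\deficit(T)\ge 1$. So $K$ traverses $P(e)$ in full for every edge $e$ of some broken cycle $C$ of $G$. Alternatively, the lemma becomes true if realizations and brokenness are defined for closed trails of $G$ rather than cycles.

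Together with your Step 3, this weaker form suffices for \Cref{lem:grid-completeness}. However, later uses that assume a broken cycle crosses $B_u$ through exactly two terminals, as in \Cref{lem:grid-can-remove-blocks}, rely on the stronger, false form.
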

\begin{proof}
Let $K$ be a realization of a cycle $C$ in $G$. By definition $K$ uses no auxiliary
segment and traverses $P(e)$ in full for each $e\in C$. Hence its segments are the
distinguished segment $s_e$ of weight $w(e)$ for each $e\in C$, together with a set $Z$ of $\varepsilon$-weight segments---the remaining edge-path segments and the block
segments joining consecutive edge paths. Recall that $N$ is the number of segments in $\Gamma$. Since $|Z|\le N-1 < N$, and all structural edges weigh $\varepsilon = 1/(2N)$ we have $ 0\le |Z|\varepsilon < 1/2$. Let $Y \defeq |Z|\varepsilon$, and let $e_{\max}$ be a heaviest edge of $C$. 
Clearly $s_{e_{\max}}$
is a heaviest segment of $K$. Let $\deficit(C) = w(e_{\max}) - \sum_{e\in C\setminus e_{\max}}$
and $\deficit(K) = w(s_{e_{\max}}) - \sum_{s\in K\setminus s_{e_{\max}}} w_\Gamma(s)$. Then
\[
  \deficit(K) =  w(e_{\max})-\sum_{e\in C\setminus e_{\max}}w(e) - Y
   = \deficit(C)-Y .
\]
Because $G$ has integer weights, $\deficit(C)\in\mathbb{Z}$. Therefore if $C$ is broken
then $\deficit(C)\ge 1$, so $\deficit(K)>\tfrac12>0$ and $K$ is broken. If $C$ is not
broken then $\deficit(C)\le 0$, so $\deficit(K)\le -Y\le 0$ and $K$ is not broken.
Conversely, we show that every broken cycles in $\Gamma$ is a realization of a broken cycle in $G$. Let $K$ be a broken cycle in $\Gamma$. First, $K$ cannot have more than one auxiliary segment: Since auxiliary segments are of maximum weight in $\Gamma$, a cycle that contains two cannot be broken. Moreover, if $a = pq$ is the unique auxiliary segment in $K$, the segments of $K$ incident to $p,q$ other than $a$ are structural. By Claim \ref{claim: grid-induced-skel}, $\Sigma$ is an induced graph of $\Gamma$, so $a$ must be structural -- a contradiction. Therefore, $K$ remains entirely in $\Sigma$. If $K$ remains within a single block, all of its weights are identical and clearly cannot be broken. Therefore $K$ leaves every block it meets. Each interior point of an edge path $P(e)$ has degree $2$ in $\Sigma$, since its only skeleton neighbors are its two neighbors along $P(e)$, so once $K$ enters $P(e)$ at a terminal it must traverse $P(e)$ to the opposite terminal. Thus $K$ traverses in full every edge path it meets. Reading these edge paths in the cyclic order in which $K$ meets them yields a closed walk in $G$; since $K$ is simple and distinct edge paths are segment-disjoint, each $P(e)$ is used at most once, so the walk is a simple cycle $C$ in $G$ with $E(C)=\{e: P(e)\subseteq K\}$. By construction $K$ uses only points of $\bigcup_{v\in V(C)}B_v\cup\bigcup_{e\in E(C)}P(e)$ and traverses each such $P(e)$, so $K$ is a realization of $C$. 
\end{proof}
\subsection{Completeness}
\begin{lemma}[Completeness]\label{lem:grid-completeness}
    Let $M\subset E$ be a metric repair solution in $G$. Then $M_\Gamma \defeq \cbk{s_e\mid e\in M}$ is a metric repair solution in $\Gamma$.
\end{lemma}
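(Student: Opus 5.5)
We use the hitting-set formulation (\Cref{thm:fan-characterization}): $M$ is a metric repair solution in $G$ exactly when $M$ meets every broken cycle of $(G,w)$. The aim is to show that $M_\Gamma$ meets every broken cycle of $(\Gamma,w_\Gamma)$. For the increase-only variant the aim is to show it meets each such cycle at a light segment.

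Let $K$ be a broken cycle of $\Gamma$. By \Cref{lem:grid-broken-cycles}, $K$ is a realization of some broken cycle $C$ of $G$. Since $M$ is a hitting set, there is an edge $e\in M\cap E(C)$. By condition (ii) of the definition of realization, $K$ traverses $P(e)$ in full. In particular it contains the distinguished segment $s_e$, and $s_e\in M_\Gamma$. Hence $K\cap M_\Gamma\neq\emptyset$. Since $K$ was arbitrary, $M_\Gamma$ is a hitting set for $\Cc(\Gamma)$, and by \Cref{thm:fan-characterization} it supports a metric repair of $\Gamma$. Moreover $|M_\Gamma|=|M|$, because the map $e\mapsto s_e$ is injective (distinct edge paths are segment-disjoint).

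For the increase-only statement, assume $M$ is a light hitting set, so the chosen $e$ can be taken to be light in $C$. We then need $s_e$ to be light in $K$. The proof of \Cref{lem:grid-broken-cycles} shows that the heaviest segment of $K$ is $s_{e_{\max}}$, where $e_{\max}$ is the heavy edge of $C$. Every other segment of $K$ is either some $s_{e'}$ with $e'\neq e_{\max}$ or an $\varepsilon$-segment. A broken cycle has a unique heaviest edge, so $e\neq e_{\max}$. Therefore $s_e\neq s_{e_{\max}}$, and $s_e$ is light in $K$.

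The only point needing care is this identification of the heavy segment of $K$ with $s_{e_{\max}}$. If $C$ has a tie for the maximum weight, then $C$ is not broken, so a broken $C$ has a unique heaviest edge. All $\varepsilon$-segments are strictly lighter than any $s_{e'}$, since weights in $G$ are integers at least $1$ and $\varepsilon<1$. So the identification is safe, and no further obstacle is expected.
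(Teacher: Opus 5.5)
Your proof is correct and follows essentially the same route as the paper: every broken cycle of $\Gamma$ is a realization of a broken cycle $C$ of $G$ (\Cref{lem:grid-broken-cycles}), $M$ hits $C$ at some edge $e$, and the realization traverses $P(e)$ in full, so it contains $s_e$. Your additional check that $s_e$ is light in $K$ is also correct; it uses that a broken $C$ has a unique heaviest edge and that $\varepsilon<1\le w(e')$. The paper's proof omits this step and argues only the plain hitting-set version, so your argument also covers the increase-only variant.
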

\begin{proof}
    Since $M$ is a metric repair solution in $G$, it intersects every broken cycle in $G$. Since every realization of a broken cycle traverses its edge paths, $M_\Gamma$ intersects all realizations of all broken cycles. By \Cref{lem:grid-broken-cycles} every broken cycle in $\Gamma$ realizes some broken cycle in $G$, so $M_\Gamma$ hits them all. Clearly $|M| = |M_\Gamma|$, concluding the proof.
\end{proof}
\subsection{Soundness}
Let $S$ be a metric repair solution in $\Gamma$. We need to show that there exists a metric repair solution of at most the same size in $G$. We start by stating an observation used repeatedly throughout multiple arguments of the proof.

\begin{observation}\label{obs:block-separation-argument}
Let $K$ be a broken cycle in $\Gamma$ that realizes a cycle $C$ in $G$, and let $B=B_v$ be
a block met by $K$. Since $K$ is simple, $K\cap B$ is a single arc, joining the two
terminals $p,q$ through which $K$ enters and leaves $B$. For any simple $p$--$q$ path
$\pi\subseteq B$, the cycle $K'$ obtained from $K$ by replacing the arc $K\cap B$ with
$\pi$ is again a realization of $C$, and hence is broken. Moreover, if $\pi$ and the arc
$K\setminus B$ both avoid a set of segments $S$, then $K'\cap S=\varnothing$.
\end{observation}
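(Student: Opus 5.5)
The statement has three parts, which I will verify in order: (a) $K\cap B$ is a single arc between two terminals; (b) splicing in any simple $p$--$q$ path $\pi\subseteq B$ gives a realization $K'$ of $C$, hence a broken cycle; (c) $K'$ avoids $S$ whenever $\pi$ and $K\setminus B$ do.

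\textbf{Structure of $K$.} By \Cref{lem:grid-broken-cycles}, $K$ is a realization of a broken cycle $C$ of $G$. Its proof shows that $K$ lies entirely in $\Sigma$ and traverses in full each edge path it meets. Reading these edge paths in order recovers $C$.

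\textbf{Part (a).} Since $C$ is a simple cycle, $v$ occurs once on it, with exactly two incident edges $e,e'\in E(C)$. Inside $\Sigma$, $B_v$ is attached to the rest of the skeleton only through its terminals. Every visit of $K$ to $B_v$ is therefore entered and left through terminals $t_{f,v}$, and the adjacent edge paths $P(f)$ are traversed by $K$.

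Each such $f$ lies in $E(C)$ and is incident to $v$, so $f\in\{e,e'\}$. Each edge path is used at most once, because $K$ is simple and edge paths are segment-disjoint. Hence $K$ visits $B_v$ exactly once, and $K\cap B$ is a single arc from $p=t_{e,v}$ to $q=t_{e',v}$.

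\textbf{Part (b).} The complementary arc $K\setminus B$ is a $q$--$p$ path. Its only points in $B_v$ are its endpoints $p$ and $q$: any other point of $B_v$ on it would mean a second visit, which part (a) excludes. Now $\pi$ is a simple $p$--$q$ path inside $B_v$, so it meets $K\setminus B$ only at $p,q$, and $K'=\pi\cup(K\setminus B)$ is a simple cycle.

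$K'$ uses only points of blocks of vertices of $C$ and of edge paths of edges of $C$, since it adds only points of $B_v$ with $v\in V(C)$. It still traverses every $P(e)$ in full, $e\in E(C)$, because $\pi$ lies inside the block and $P(e)$ meets $B_v$ only at its terminal. Thus $K'$ is a realization of $C$, and by \Cref{lem:grid-broken-cycles} it is broken.

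\textbf{Part (c).} The segment set of $K'$ is the union of the segments of $\pi$ and of $K\setminus B$. So if both avoid $S$, then $K'\cap S=\varnothing$.

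\textbf{Main obstacle.} The only delicate step is the single-visit claim in part (a). It rests on the fact that the skeleton connects a block to the outside only via terminals of incident edge paths. That follows from \Cref{claim: grid-induced-skel} together with the separation provided by the dilation.
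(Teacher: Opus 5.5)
Your proof is correct and follows the paper's argument. The paper likewise notes that $K\setminus B$ meets $B$ only at $p,q$, so splicing in $\pi$ gives a simple cycle that still traverses every $P(e)$ with $e\in E(C)$ and uses only admissible points; it is therefore a realization of $C$, hence broken, and the last clause follows from $K'\subseteq \pi\cup(K\setminus B)$. Your one addition is part (a): the paper asserts the single-arc property ``since $K$ is simple,'' whereas you derive it from $v$ occurring once on $C$ and each visit to $B_v$ consuming two distinct incident edge paths, which is the actual reason it holds.
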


\begin{proof}
Because $K$ realizes $C$ and meets $B_v$, we have $v\in V(C)$. The arc $K\setminus B$
meets $B$ only at $p,q$, and $\pi\subseteq B$, so $K'$ is a simple cycle. It traverses
every edge path $P(e)$, $e\in E(C)$, in full---these lie on $K\setminus B$, which is
unchanged---and uses only points of $B_v$ and of $K\setminus B$, all admissible for $C$;
hence $K'$ realizes $C$. By \Cref{lem:grid-broken-cycles}, $K'$ is broken. The final clause is immediate, as
$K'\subseteq\pi\cup(K\setminus B)$.
\end{proof}

\begin{lemma}[Minimal Properties of $S$]\label{lem:grid-mr-solution-structure}
Let $S$ be a metric repair solution in $\Gamma$. Then there exists a metric repair solution $X$ in $\Gamma$ with the following properties:
\begin{enumerate}
    \item $|X|\le |S|$
    \item $X$ contains no auxiliary segments.
    \item for each $e\in E(G)$, $|X\cap P(e)| \le 1$.
\end{enumerate}
\end{lemma}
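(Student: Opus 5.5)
Starting from $S$, we perform two local exchanges, each of which keeps $S$ a hitting set of all broken cycles of $\Gamma$ and never increases its size; iterating them until neither applies yields $X$. The key input is \Cref{lem:grid-broken-cycles}: every broken cycle of $\Gamma$ is a realization of a broken cycle of $G$. Such a realization lies entirely in the skeleton $\Sigma$, so it uses no auxiliary segments. It also traverses in full every edge path $P(e)$ it meets.

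\textbf{Removing auxiliary segments.} Let $a\in S$ be auxiliary. By \Cref{lem:grid-broken-cycles}, no broken cycle contains $a$, since broken cycles lie in $\Sigma$ and $a\notin\Sigma$. Hence $S\setminus\{a\}$ still meets every broken cycle. Deleting all auxiliary segments gives a hitting set $S_1\subseteq S$, so $|S_1|\le|S|$, and properties 1 and 2 hold.

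\textbf{At most one segment per edge path.} Fix $e\in E(G)$ with $|S_1\cap P(e)|\ge 2$. Replace $S_1\cap P(e)$ by the single distinguished segment $s_e$; that is, set $S_2=(S_1\setminus P(e))\cup\{s_e\}$, which is no larger than $S_1$. We check that $S_2$ is still a hitting set. Let $K$ be a broken cycle. If $K$ met $S_1$ outside $P(e)$, it still meets $S_2$. Otherwise $K$ met $S_1$ inside $P(e)$, so $K$ meets $P(e)$. By \Cref{lem:grid-broken-cycles}, $K$ is a realization and therefore traverses $P(e)$ in full. In particular it contains $s_e\in S_2$. This replacement introduces no auxiliary segments and does not touch other edge paths, since distinct edge paths are segment-disjoint. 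Applying it to each offending $e$ in turn gives $X$ satisfying all three properties. The same replacement may also be applied when $|S_1\cap P(e)|=1$, which normalizes the unique segment to $s_e$ and will help later.

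\textbf{Expected obstacle.} The only delicate point is the claim that a broken cycle meeting $P(e)$ must contain all of $P(e)$. This needs the degree-$2$ property of interior edge-path points inside $\Sigma$, together with the fact that broken cycles avoid auxiliary segments; both are already established in the proof of \Cref{lem:grid-broken-cycles}. Everything else is bookkeeping. Minimality of solutions is not needed, because each step only removes segments or replaces several segments by one.
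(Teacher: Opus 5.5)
Your proposal is correct and matches the paper's proof: first drop the auxiliary segments, which lie on no broken cycle by \Cref{lem:grid-broken-cycles} (equivalently, pass to $S\cap\Sigma$). Then, for each $e$ with $|X\cap P(e)|\ge 2$, replace $X\cap P(e)$ by $s_e$, using that a broken cycle containing a segment of $P(e)$ traverses $P(e)$ in full. Your remarks on segment-disjointness of edge paths and on the degree-$2$ propagation inside $\Sigma$ make explicit what the paper leaves implicit.
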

\begin{proof}
    Let $S$ be a metric repair solution in $\Gamma$. Recall that $S$ is a metric repair solution iff it is a hitting set for the collection of broken cycles in $\Gamma$. By \Cref{lem:grid-broken-cycles}, no auxiliary segment participates in a broken cycle, hence $X = S\cap \Sigma$ is still a solution. We iteratively make $X$ smaller, and maintain a solution. For each $e\in E$, if $|P(e)\cap X|\ge 2$, then set $X \gets (X\setminus P(e)) \cup \cbk{s_e}$. We claim that after each such change, $X$ is still a metric repair solution for $\Gamma$. Every broken cycle $K$ in $\Gamma$ that intersects $X$ but does not intersect $X\setminus P(e)$ must contain the path $P(e)$, and in particular $s_e$. Then $(X\setminus P(e))\cup \{s_e\}$ intersects $K$ precisely in $s_e$.
\end{proof}
It is left to reduce the solution to include no vertex segments. We do so in the subsequent lemmas. 

\begin{lemma}\label{lem:grid-block-segment-hits-atmost-two}
    Let $S$ be a metric repair solution that contains only structural segments, at most one per edge path. Let $B=B_u$ be a block
    with $B\cap S\ne \emptyset$. Suppose there is a segment $r\in S\cap B$ such that:
    \begin{enumerate}
        \item $r$ hits two broken cycles $K_a$ and $K_b$;
        \item $S\cap K_a=S\cap K_b=\{r\}$;
        \item $K_a,K_b$ are realizations of broken cycles
        $C_a,C_b$ in $G$, and the edges of $C_a$ and $C_b$
        incident to $u$ are all distinct.
    \end{enumerate}
    Then $|S|>m$.
\end{lemma}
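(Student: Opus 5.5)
Assume $|S|\le m$; we will show that $S\cap B$ must then hit two arcs that cannot both meet the single segment $r$. By hypothesis (3), $K_a$ enters and leaves $B$ through terminals $p_a,q_a$, and $K_b$ through $p_b,q_b$. These four terminals are pairwise distinct, since they correspond to four distinct edges of $G$ incident to $u$. By hypothesis (2), the outer arcs $K_a\setminus B$ and $K_b\setminus B$ avoid $S$, because $r\in B$.

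Apply \Cref{obs:block-separation-argument}. If there were a $p_a$--$q_a$ path $\pi\subseteq B$ avoiding $S$, then replacing the arc $K_a\cap B$ by $\pi$ would give a broken cycle disjoint from $S$. That contradicts $S$ being a hitting set. Hence $S\cap B$ separates $p_a$ from $q_a$ inside the subgrid $B$, and likewise separates $p_b$ from $q_b$.

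Next use planar duality in the rectangle $B$. A set of grid segments separating two boundary points $p,q$ of a rectangular grid contains a dual path. This dual path runs through the faces of $B$ and goes from the outer face back to the outer face, on the boundary arc between $p$ and $q$ on one side and the complementary arc on the other. Its number of segments is at least the grid distance it has to span. The bound comes from Terminal Separation:
\begin{itemize}
\item if the dual path must run from the top boundary to the bottom boundary (between the two terminals), it crosses $m+1$ rows and so has at least $m+1$ segments;
\item if it leaves and re-enters through the same side, it must separate two terminals at horizontal distance at least $m+1$, or else enclose a terminal. In the latter case it still has length at least $m+1$, because consecutive terminals on a side are $m+1$ apart and the block height is $m+1$.
\end{itemize}
I expect one of these cases to give $|S\cap B|\ge m+1>m$ immediately. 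So in the remaining case each separating cut is short, and that forces it to be local around a terminal, or around a corner containing no terminal.

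The main obstacle is this short-cut case. Here a cut of size at most $m$ must cut off a small region near the boundary, and that region contains exactly one of $p_a,q_a$. Similarly, a short cut isolates exactly one of $p_b,q_b$. Since the four terminals are distinct and are at mutual distance at least $m+1$, two such short regions cannot contain two terminals. The isolated terminal for $a$ and the isolated terminal for $b$ are therefore different, and the two dual cuts are disjoint: each has diameter at most $m/2$ around its own terminal, while the terminals are $m+1$ apart.

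Now use that $r$ is the unique element of $S$ on both $K_a$ and $K_b$. Take a minimal separating set for each pair and apply \Cref{obs:block-separation-argument} again. Any $p_a$--$q_a$ path in $B$ that meets $S\cap B$ only at $r$ closes up a broken cycle whose sole hit is $r$. So $r$ must lie on every minimal cut of both types, because otherwise rerouting avoids $S$ altogether. Thus $r$ lies in both of the disjoint small regions, which is a contradiction.

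I expect the delicate part to be making the statement ``a short cut is local'' precise. I would do it with the isoperimetric fact that, in a grid rectangle of height $m+1$, a dual path of length at most $m$ from boundary to boundary bounds a region of diameter at most $m$. Its endpoints on the boundary are then within distance $m$ of each other, so the region contains at most one terminal. Since both cuts must contain $r$, their regions are close to each other, which contradicts Terminal Separation. This finishes the proof that $|S|\ge|S\cap B|>m$.
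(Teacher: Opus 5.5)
Your setup is sound: the four terminals are distinct, the outer arcs avoid $S$, and $S\cap B$ separates $p_a$ from $q_a$ and $p_b$ from $q_b$. The final step, however, has a genuine gap. You end with two cuts $\gamma_a,\gamma_b\subseteq S\cap B$, each of size at most $m$, each isolating one terminal, and both containing $r$. You then derive the contradiction from the claim that a short cut is ``local'' (diameter at most $m/2$ about its terminal), so that $\gamma_a$ and $\gamma_b$ are disjoint. That claim is false.

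Put $p_a$ at column $c$ and $p_b$ at column $c+m+1$ of the top row. Consider the dual path crossing the two top segments flanking columns $c,\dots,c+k$ and the $k+1$ vertical segments just below those points. It has $k+3$ segments and isolates only $p_a$. The analogous path for columns $c+k,\dots,c+m+1$ has $m+4-k$ segments and isolates only $p_b$. For $4\le k\le m-3$ both have at most $m$ segments, and both contain the vertical segment joining the top-row point in column $c+k$ to the point below it.

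So ``$r$ lies on both short cuts'' is not contradictory by itself. Your fallback (regions of diameter at most $m$ that are ``close to each other'') gives no contradiction either. Your second bullet is also wrong as stated: three segments suffice to enclose a terminal on the top side.

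The missing ingredient is to track which endpoint of $r$ each terminal is attached to. Your remark that $r$ lies on every separating subset of $S\cap B$ discards exactly this information.

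Write $r=xy$. Since $S\cap K_a=\{r\}$, the two sub-arcs of $K_a\cap B$ on either side of $r$ avoid $S$. So, after naming, $p_a$ lies in the component $Q_1$ of $B\setminus S$ containing $x$, and $q_a$ lies in the component $Q_2$ containing $y$. The separation you proved gives $Q_1\neq Q_2$. The same reasoning for $K_b$ gives, after relabeling, $p_b\in Q_1$ and $q_b\in Q_2$.

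Now a \emph{single} bond inside $S\cap B$ separates $\{p_a,p_b\}$ from $\{q_a,q_b\}$. Both boundary arcs determined by its dual path therefore contain two terminals. Terminal Separation then forces at least $m+1$ segments: if the two terminals are on the same side, the path must span the at least $m+1$ columns between them; if they are on opposite sides, it must cross all $m+1$ rows. This is the paper's argument.

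Within your own framework the repair is equally short. The side of $\gamma_a$ containing only one terminal also contains the endpoint of $r$ attached to that terminal. It therefore also contains the $b$-terminal attached to the same endpoint, which is a contradiction.
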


\begin{proof}
    Suppose by way of contradiction that $|S|\le m$. Let $a_1,a_2$ be
    the terminals through which $K_a$ enters and leaves $B$, and let
    $b_1,b_2$ be the corresponding terminals for $K_b$. Since the edges of $C_a$ and $C_b$ incident to $u$ are all distinct, the four terminals $a_1,a_2,b_1,b_2$ are distinct.

    We first show that $S\cap B$ separates $a_1$ from $a_2$ inside $B$, and likewise $b_1$ from $b_2$.
    Since $S\cap K_a=\{r\}\subseteq B$, the arc $K_a\setminus B$ avoids $S$. If some path $\pi\subseteq B\setminus S$ joined $a_1$ to $a_2$, then by \Cref{obs:block-separation-argument}, replacing the arc $K_a\cap B$ by $\pi$ would produce a broken cycle $K_a'$ realizing $C_a$ with $K_a'\cap S=\varnothing$, contradicting that $S$ is a metric repair solution, therefore $S\cap B$ separates $a_1$ from $a_2$. the argument for $b_1,b_2$ is identical.



    Write $r=xy$, and let $Q_1,Q_2$ be the connected components of $B\setminus S$ containing $x,y$ respectively. Traversing $K_a$ from $a_1$ to $a_2$, label $x,y$ so that $x$ is reached first. The subarc from $x$ to $a_1$ avoids $S$ and therefore $a_1\in Q_1$. Similarly the sub-arc from $y$ to $a_2$ avoids $S$, so $a_2\in Q_2$. Since $S\cap B$ separates $a_1,a_2$, $Q_1\neq Q_2$. Relabeling $b_1,b_2$ if necessary so that $x$ is reached first along $K_b$, we have $b_1\in Q_1$ and $b_2\in Q_2$.

The $m+1$ rows of $B$ are pairwise edge-disjoint horizontal paths and its $(m+1)\ell_u$ columns are pairwise edge-disjoint vertical paths. By cut--cycle duality, a cut separating $\{a_1,b_1\}$ from $\{a_2,b_2\}$ corresponds to a cycle in the dual $B^*$. As established, $S\cap B$ is such a cut, and let its cycle dual be $\gamma$. Recall that each dual segment of $\gamma$ crosses a primal segment of $B$. Since all four terminals lie on the boundary of $B$, $\gamma$ must pass through the outer face of $B^*$. Let $s,t$ be the primal segments $\gamma$ crosses when it leaves and enters the outer face. These dual segments split the boundary of $B$ into two arcs, $\alpha, \overline{\alpha}$,  containing $\{a_1,b_1\}$ and $\{a_2,b_2\}$ respectively, see \Cref{fig:GridhardLemma}.



It remains to see that the separation property forces $\gamma$ to cross at least $m+1$ edges.
Suppose first that $a_1,b_1$ lie on the same boundary side of $B$. Then $\alpha$ runs along this side. $s,t$ touch $\alpha$ at its endpoints, so $s,t$ flank $a_1,b_1$.  By the construction of $B$, $a_1,b_1$ are at least least $m+1$ columns apart. Each dual segment of $\gamma$ crosses at most one column edge, so $\gamma$ spans at least $m+1$ segments in order to traverse from a columns of $s$ to a column of $t$. Thus $|\gamma| \ge m+1$.

Suppose now $a_1,b_1$ lie on opposite boundaries. Then $\alpha$ traverses one vertical side of the bounadry, hence $\gamma$ separates the left and right vertical sides of the boundary. Therefore the dual segments of $\gamma$ must cross each of the $m+1$ horizontal rows of $B$, and consequently $|\gamma| \ge m+1$.
In either case $\gamma$ crosses at least $m+1$ segments, so $|S|\ge|S\cap B|= |\gamma|\ge m+1$, contradicting $|S|\le m$. 
\end{proof}
\begin{figure}[ht]
    \centering
    \includegraphics[scale=0.4]{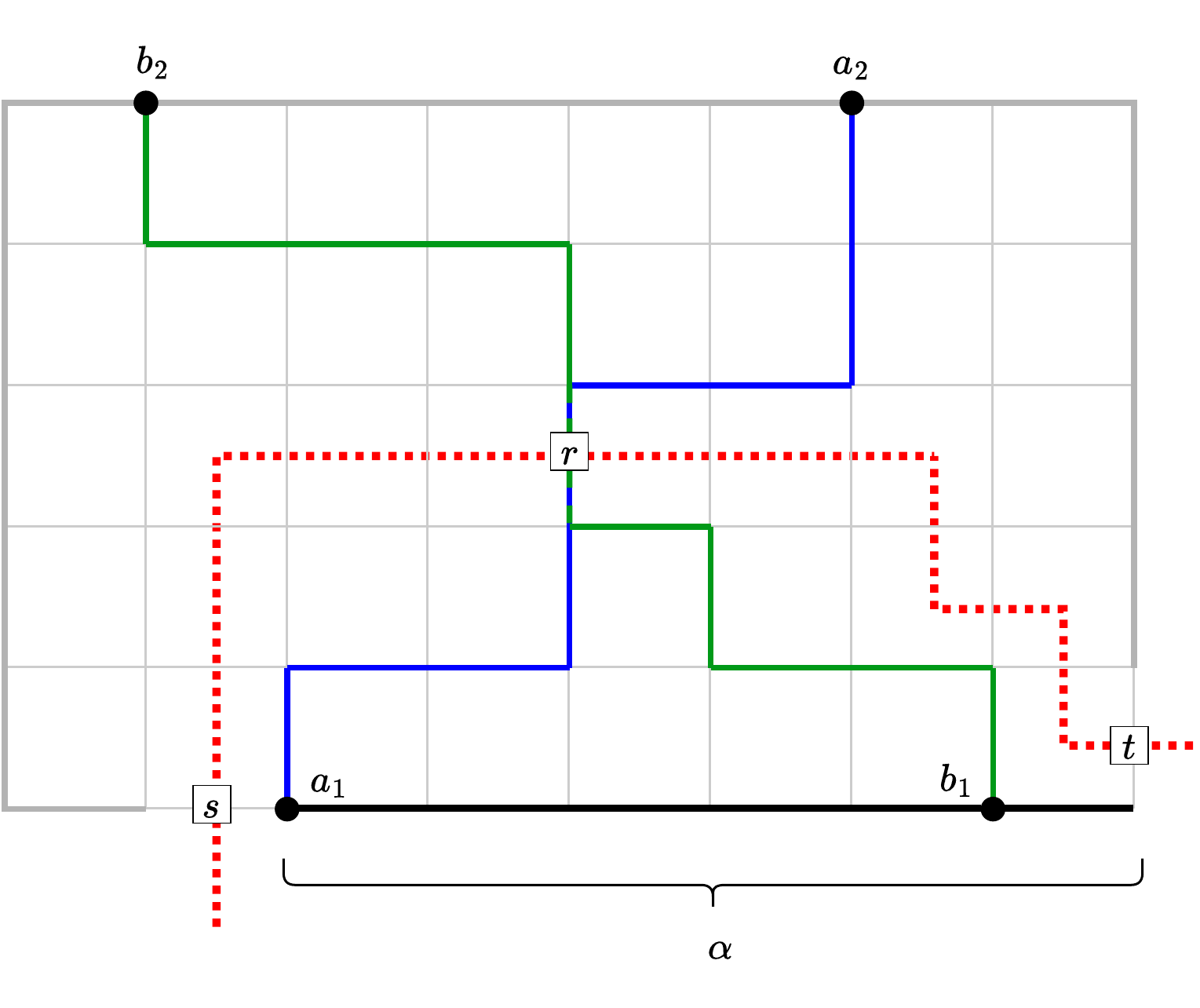}
    \caption{Illustration of \Cref{lem:grid-block-segment-hits-atmost-two}. The paths between terminals are depicted in blue and green, the shared segment is labeled $r$. The segments through which the dual path (dashed red) enters and leaves $B$ are labeled $s,t$, and the arc they define containing $a_1,b_1$ is in black, labeled $\alpha$. $\overline{\alpha}$ is the remainder of the boundary, in bold gray.}
    \label{fig:GridhardLemma}
\end{figure}

We proceed to remove the block segments from $S$. Note that a key property is that if a broken cycle enters the block $B_u$ through the terminal $t_{e,u}$, then it contains the entire edge path $P(e)$.
\begin{lemma}\label{lem:grid-can-remove-blocks}
    Suppose $|S|\le m$.
    Let $r\in S$ be a vertex segment contained in a block $B=B_u$.
    Suppose $\Gamma\setminus (S\setminus \{r\})$ contains broken
    cycles. Then all such broken cycles share at least one edge path
    $P(e)$.
\end{lemma}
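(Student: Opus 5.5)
The plan is to argue by contradiction, with \Cref{lem:grid-block-segment-hits-atmost-two} as the engine. Throughout, $S$ is as in that lemma: by \Cref{lem:grid-mr-solution-structure} we may assume $S$ has only structural segments and at most one segment per edge path. Every broken cycle of $\Gamma\setminus(S\setminus\{r\})$ is a broken cycle of $\Gamma$ avoiding $S\setminus\{r\}$. Since $S$ is a hitting set, each such cycle $K$ satisfies $S\cap K=\{r\}$. By \Cref{lem:grid-broken-cycles}, $K$ realizes a broken cycle $C_K$ of $G$. Because $r\in B_u$ lies on $K$, we have $u\in V(C_K)$, and $K$ enters and leaves $B_u$ through two terminals $t_{e,u},t_{f,u}$, where $e,f$ are the two edges of $C_K$ at $u$. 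Denote this pair by $\{e_K,f_K\}$.

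The first step is to suppose that two such cycles $K_a,K_b$ have disjoint incident pairs, $\{e_{K_a},f_{K_a}\}\cap\{e_{K_b},f_{K_b}\}=\varnothing$. Then all three hypotheses of \Cref{lem:grid-block-segment-hits-atmost-two} hold with this $r$, so $|S|>m$, contradicting $|S|\le m$. Hence the pairs $\{e_K,f_K\}$, over all broken cycles $K$ of $\Gamma\setminus(S\setminus\{r\})$, pairwise intersect. Since every $K$ that meets $B_u$ at the terminal $t_{e,u}$ traverses $P(e)$ in full (the remark preceding the lemma), it suffices to find one edge $e$ incident to $u$ lying in every pair.

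The second step is the combinatorial one. A family of pairwise-intersecting $2$-sets either has a common element (a star) or is exactly a triangle $\{e,f\},\{f,g\},\{e,g\}$. In the star case we are done, since all cycles share $P(e)$ for the common edge $e$.

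The main obstacle is ruling out the triangle case. My first attempt would use that $u$ has degree at most three in the source planar graph, so the three terminals $t_{e,u},t_{f,u},t_{g,u}$ are all the terminals of $B_u$. I would then rerun the duality argument from \Cref{lem:grid-block-segment-hits-atmost-two}. Each of the three cycles forces $S\cap B$ to separate its two terminals inside $B$, as in \Cref{obs:block-separation-argument}. Three pairwise separations among three boundary terminals force $S\cap B$ to cut the boundary into three arcs, one around each terminal. By terminal separation, one of the two dual cycles passes between two terminals at distance at least $m+1$ along a side, or between opposite sides, so it crosses at least $m+1$ segments. This again gives $|S|>m$. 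If that argument has gaps, for example when the triangle case does not actually force the three separations, I would fall back to choosing the shared path as the one containing the endpoint of $r$ reached first. This plays the role of the $Q_1/Q_2$ labeling in the previous lemma and is meant to show that the triangle configuration contradicts $S\cap K=\{r\}$ for some $K$.
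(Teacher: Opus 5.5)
Your first two steps match the paper's proof. Disjoint terminal pairs are excluded by \Cref{lem:grid-block-segment-hits-atmost-two}, and a pairwise-intersecting family of pairs is either a star, which finishes the proof, or contains a triangle $t_1,t_2,t_3$ with cycles $K_{1,2},K_{1,3},K_{2,3}$.

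The gap is in ruling out the triangle. Your duality argument assumes that if three boundary terminals are pairwise separated by $S\cap B$, then $|S\cap B|\ge m+1$. This is false. Deleting the (at most three) block segments incident to $t_1$ and those incident to $t_2$ already separates all three terminals pairwise, using at most six segments, which is $\le m$ once $m\ge 6$. Terminal separation makes a cut expensive only when the cut must keep two far-apart terminals on the same side. In \Cref{lem:grid-block-segment-hits-atmost-two} that grouping comes from $a_1,b_1\in Q_1$ and $a_2,b_2\in Q_2$; pairwise separation provides no such grouping. Separately, you cannot assume $u$ has degree at most three. The grid reduction starts from an arbitrary planar instance, for example the one in \Cref{thmPlanarHardness}, which has maximum degree $6$, so $B_u$ may carry more terminals.

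The missing idea is that the shared segment $r$ rules out the triangle directly, with no counting. Write $r=xy$ and let $Q_{i,j}$ be the in-block arc of $K_{i,j}$. Each $Q_{i,j}$ is a path through $r$ that meets $S$ only in $r$. Hence $H=Q_{1,2}\cup Q_{1,3}\cup Q_{2,3}$ with $r$ removed has at most two components, namely those of $x$ and of $y$. By pigeonhole, two of the three terminals, say $t_1,t_2$, are joined by a path $\pi\subseteq H\setminus\{r\}$ that avoids $S$. Because the pairs form a triangle, $\{t_1,t_2\}$ is the terminal pair of some cycle, here $K_{1,2}$. Applying \Cref{obs:block-separation-argument} to $K_{1,2}$ with $\pi$ yields a broken cycle disjoint from $S$, a contradiction.

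Your fallback, labelling by which endpoint of $r$ is reached first, points toward this. As written, though, it does not state the needed fact: every terminal pair must be split between the $x$-side and the $y$-side, and a triangle of pairs cannot be split this way by two sides.
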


\begin{proof}
    Let $\mathcal K$ be the set of broken cycles in
    $\Gamma\setminus (S\setminus \{r\})$. Since $S$ is a metric
    repair solution, every cycle in $\mathcal K$ must contain the
    segment $r$; otherwise it would also be a broken cycle in
    $\Gamma\setminus S$. Moreover, no cycle in $\mathcal K$ contains
    any segment of $S\setminus\{r\}$.

    If $|\mathcal K|=1$, the unique broken cycle $K$ contains an edge path $P(e)$ and the claim is immediate. Assume therefore
    that $|\mathcal K|\ge 2$. Suppose two cycles $K,K'\in\mathcal K$ used four distinct terminals. By \Cref{lem:grid-broken-cycles}, $K$ realizes a broken cycle $C_K$ in $G$, and therefore enters and leaves $B_u$ at the two terminals of the edges of $C_K$ incident to $u$ (similarly $K'$ realizes $C_{K'}$). Then $S\cap K=S\cap K'=\{r\}$ and the edges of $C_K,C_{K'}$ incident to $u$ are all distinct, so $r$ satisfies the hypotheses of \Cref{lem:grid-block-segment-hits-atmost-two} and $|S|>m$, contradicting $|S|\le m$. Hence the terminal pairs used by $\mathcal K$ are pairwise intersecting.

    We claim that all terminal pairs used by cycles in $\mathcal K$
    share a common terminal. Suppose not. Since the terminal pairs are
    two-element sets and are pairwise intersecting, there must be three
    terminals $t_1,t_2,t_3$ and three cycles
    $K_{1,2},K_{1,3},K_{2,3}\in\Kk$, where $K_{i,j}$ enters
    and leaves $B$ through $t_i$ and $t_j$.

    For each pair $i,j\in\{1,2,3\}$, let $Q_{i,j}$ be the subpath of
    $K_{i,j}\cap B$ connecting $t_i$ to $t_j$. Each path
    $Q_{i,j}$ contains the segment $r$, and no segment of
    $Q_{i,j}\setminus\{r\}$ belongs to $S$. Consider the subgraph
    \[
        H \defeq Q_{1,2}\cup Q_{1,3}\cup Q_{2,3}.
    \]
    Since all three paths contain the segment $r$, the graph
    $H\setminus\{r\}$ has at most two connected components. Hence two
    of the terminals $t_1,t_2,t_3$ lie in the same connected component
    of $H\setminus\{r\}$. Without loss of generality, suppose these
    terminals are $t_1$ and $t_2$. Since $H\cap S = \{r\}$, there exists some $t_1-t_2$ path in $H\setminus\{r\}$ that avoids $S$ entirely. Let $\pi$ be one such path.

    Now apply \Cref{obs:block-separation-argument} to $K_{1,2}$, which realizes the broken cycle
$C_{1,2}$: its in-block arc is $Q_{1,2}$ and its complementary arc $K_{1,2}\setminus B$
avoids $S$ . Replacing its in-block arc $Q_{1,2}$ by $\pi$
yields a cycle that realizes $C_{1,2}$, hence is broken, and avoids $S$. This is a broken
cycle in $\Gamma\setminus S$, contradicting that $S$ is a metric repair solution.

Therefore all terminal pairs share a common terminal $t$. As $t=t_{e,u}=P(e)\cap B$ for
the edge $e$ incident to $u$ with $P(e)\cap B=t$, every $K\in\mathcal K$ enters or leaves
$B$ through $t$ and hence contains the whole edge path $P(e)$. Thus all broken cycles in
$\Gamma\setminus(S\setminus\{r\})$ share $P(e)$.

\end{proof}

\begin{corollary}\label{cor:grid-solution-only-edges}
    If $\Gamma$ has a metric repair solution of size at most $m$, then there exists a metric repair solution of size at most $m$ that only uses edge segments. 
\end{corollary}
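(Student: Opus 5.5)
We argue by an iterative exchange that never increases $|S|$ and strictly decreases the number of vertex segments. Begin with a metric repair solution $S$ of size at most $m$. Apply \Cref{lem:grid-mr-solution-structure} to replace it by a solution $X$ with $|X|\le|S|\le m$, no auxiliary segments, and at most one segment per edge path. This puts $X$ in the setting of \Cref{lem:grid-block-segment-hits-atmost-two} and \Cref{lem:grid-can-remove-blocks}.

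If $X$ contains no vertex segment we are done. Otherwise choose a vertex segment $r\in X$ inside some block $B_u$. There are two cases.

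\begin{itemize}
\item If $\Gamma\setminus(X\setminus\{r\})$ has no broken cycles, then $X\setminus\{r\}$ is already a solution. We drop $r$, which strictly decreases the size.
\item Otherwise \Cref{lem:grid-can-remove-blocks} applies, since $|X|\le m$. All broken cycles of $\Gamma\setminus(X\setminus\{r\})$ share a common edge path $P(e)$, and each of them traverses $P(e)$ in full, so each contains the distinguished segment $s_e$. We set $X'\defeq (X\setminus\{r\})\cup\{s_e\}$.
\end{itemize}

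In the second case we must check that $X'$ is a solution. Every broken cycle of $\Gamma$ either meets $X\setminus\{r\}\subseteq X'$, or it lies in $\Gamma\setminus(X\setminus\{r\})$ and therefore contains $s_e\in X'$. Moreover $|X'|\le|X|\le m$, and if $s_e$ was already in $X$ the size even drops.

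Before each further step we re-apply \Cref{lem:grid-mr-solution-structure} to restore the structural hypotheses, in particular at most one segment per edge path. This step only discards auxiliary segments and replaces multiple edge-path segments by a single distinguished segment. It never adds vertex segments and never increases the size.

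Each round strictly decreases the number of vertex segments while keeping the size at most $m$ and preserving all hypotheses. After at most $|X|$ rounds no vertex segments remain. Combined with the absence of auxiliary segments, the resulting solution consists only of edge segments and has size at most $m$, as required.

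The main point to check is that the hypotheses of \Cref{lem:grid-can-remove-blocks} still hold at every iteration, in particular $|X|\le m$ and the structure required by \Cref{lem:grid-block-segment-hits-atmost-two}. The re-normalization step guarantees this.
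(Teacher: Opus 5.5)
Your proposal is correct and follows essentially the same exchange argument as the paper: discard auxiliary segments via \Cref{lem:grid-mr-solution-structure}, then repeatedly either drop a vertex segment $r$ or swap it for a segment of the common edge path $P(e)$ supplied by \Cref{lem:grid-can-remove-blocks}. Your re-normalization before each round is harmless but not needed. In the second case $X\setminus\{r\}$ cannot already contain a segment of $P(e)$, because every remaining broken cycle traverses $P(e)$ in full, so the swap already keeps at most one segment per edge path.
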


\begin{proof}
    Let $S$ be a metric repair solution for $\Gamma$ of size at most
    $m$. By \Cref{lem:grid-mr-solution-structure}, we may assume that
    $S$ contains no auxiliary segments. Thus every segment of $S$ is
    either a vertex segment or an edge segment.

    We show that if $S$ contains a vertex segment, then we can replace
    one such segment by an edge segment and still obtain a metric repair
    solution of size at most $m$. Repeating this operation eventually
    produces a solution consisting only of edge segments.

    Let $r\in S$ be a vertex segment, and let $B$ be the block
    containing $r$. Consider
    \[
        S^- := S\setminus\{r\}.
    \]
    If $\Gamma\setminus S^-$ contains no broken cycle, then $S^-$ is
    already a metric repair solution. In this case we may simply remove
    $r$, obtaining a solution of smaller size.

    Otherwise, $\Gamma\setminus S^-$ contains broken cycles, and every such broken cycle must contain the segment $r$. By
    \Cref{lem:grid-can-remove-blocks}, all these broken cycles share an
    edge path $P(e)$. Choose any edge segment $s$ of $P(e)$, and set
    \[
        S' := S^-\cup\{s\}.
    \]
    Then $|S'|\le |S|\le m$. Moreover, every broken cycle in
    $\Gamma\setminus S^-$ contains the edge path $P(e)$, and hence
    contains the segment $s$. Therefore no broken cycle remains in
    $\Gamma\setminus S'$, so $S'$ is a metric repair solution.
    Repeating this operation for all vertex segments in $S$ concludes the proof.
\end{proof}

\begin{lemma}[Soundness]\label{lem:grid-soundness}
    Let $S$ be a metric repair solution on $\Gamma$. Then there exists
    a metric repair solution $M$ on $G$ with $|M|\le |S|$.
\end{lemma}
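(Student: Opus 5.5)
We split on the size of $S$. If $|S|\ge m$, we take $M=E(G)$. Deleting every edge of $G$ leaves no cycles at all, so $M$ hits every broken cycle of $G$, and $|M|=m\le |S|$.

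So assume $|S|\le m$. By \Cref{cor:grid-solution-only-edges} there is a metric repair solution $S'$ with $|S'|\le m$ that uses only edge segments. Its proof passes through \Cref{lem:grid-mr-solution-structure}, which strips auxiliary segments and never increases size; each later vertex-segment replacement either deletes a segment or swaps one segment for one, so in fact $|S'|\le |S|$. Before projecting, we apply the collapsing step of \Cref{lem:grid-mr-solution-structure} once more, replacing multiple segments on the same $P(e)$ by $s_e$. This keeps $|S'|\le|S|$ and leaves at most one segment per edge path.

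We then define $M=\{e\in E(G): S'\cap P(e)\neq\varnothing\}$. Since $S'$ consists only of edge segments, each lying in exactly one $P(e)$, we get $|M|\le |S'|\le |S|$.

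It remains to check that $M$ hits every broken cycle $C$ of $G$. Suppose some broken $C$ avoided $M$. We build a realization $K$ of $C$ as follows. Traverse each $P(e)$, $e\in E(C)$, in full. Inside each block $B_v$, $v\in V(C)$, join the two terminals of the edges of $C$ at $v$ by any simple path within $B_v$; blocks are connected subgrids, so such a path exists. The blocks are distinct for distinct vertices and the edge paths are disjoint from one another, so $K$ is a simple cycle and hence a realization of $C$. By \Cref{lem:grid-broken-cycles}, $K$ is broken. Now $K$ uses only block segments and segments of $P(e)$ with $e\in E(C)$, and none of these lie in $S'$: block segments are not edge segments, and these $P(e)$ avoid $S'$ because $e\notin M$. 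So $K$ is a broken cycle missed by $S'$, a contradiction.

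The main obstacle is the bookkeeping of sizes in the reduction to edge-only solutions. \Cref{cor:grid-solution-only-edges} is phrased as a bound of "at most $m$" rather than "at most $|S|$", so we must observe from its proof that every step is non-increasing in size, and that the case $|S|>m$ is handled by the trivial solution. A secondary point to verify is that the block paths used in constructing $K$ are internally disjoint from the edge paths, which holds because a block meets an edge path only at its terminal.
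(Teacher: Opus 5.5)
Your proof is correct and follows the paper's argument: the same trivial case for large $|S|$ (the paper uses $|S|>m$), the reduction to an edge-segment-only solution via \Cref{cor:grid-solution-only-edges}, and the same projection $M=\{e: P(e)\cap S\neq\varnothing\}$, checked against a realization of each broken cycle of $G$. You are more explicit than the paper at two points it leaves implicit: you note that the replacement steps never increase $|S|$, and you construct a realization that avoids $S'$ instead of asserting that a hit segment of an arbitrary realization must lie on some $P(e)$ with $e\in E(C)$. Your extra collapsing step is harmless but unnecessary.
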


\begin{proof}
    Suppose $|S| > m$. Any metric repair solution in $G$ has size at most $|E(G)| = m  < |S| $, so the claim holds trivially. Suppose then that $|S|\le m$.
    By the replacement argument of \Cref{cor:grid-solution-only-edges},
    we may assume that $S$ consists only
    of edge segments. Define
    \[
        M \defeq \{e\in E(G) : P(e)\cap S\neq \emptyset\}.
    \]
    Since distinct edge paths $P(e)$ are segment-disjoint, every edge
    $e\in M$ accounts for at least one distinct segment of $S$. Hence
    \[
        |M|\le |S|.
    \]

    We claim that $M$ is a metric repair solution on $G$. Let $C$ be
    a broken cycle in $G$, and let $K$ be a realization of $C$ in
    $\Gamma$. By the construction of realizations, $K$ is a broken
    cycle in $\Gamma$. Since $S$ is a metric repair solution on
    $\Gamma$, there exists some $s\in K\cap S$. Since $S$ consists only of edge segments, $s\in P(e)$ for some edge $e$ and therefore $e\in M$. Moreover, because
    $P(e)$ is contained in the realization $K$, the edge $e$
    belongs to the original cycle $C$. Therefore
    \[
        M\cap C\neq \emptyset.
    \]
    Since every broken cycle $C$ in $G$ is hit by $M$, $M$ is a
    metric repair solution on $G$.
\end{proof}

%% file: conclusion.tex
\section{Conclusion and Open Problems}
\label{sec:conclusion}

Our results locate the tractability boundary of metric repair along two structural axes: planarity offers no relief, while bounded treewidth helps only in
combination with bounded (unary) weights. Several natural questions
remain open on both sides of this boundary.

The most immediate gap concerns small treewidth with unrestricted
weights. Our algorithm for series-parallel graphs runs in time
$O(W^4 m)$ and is therefore only pseudopolynomial, while our hardness
construction in Section~\ref{sec:bdd-tw-hard} shows weak NP-hardness
already at pathwidth $6$. This leaves open whether
\IOMR{} admits a genuinely polynomial-time
algorithm, one whose running time is polynomial in the binary encoding
of the weights, on graphs of treewidth $2$, or even on outerplanar
graphs. A positive answer would require replacing our demand profiles,
whose number is inherently polynomial in $W$, with a representation
whose size is polynomial in the input encoding; a negative answer would
be a striking instance of a problem that is weakly NP-hard even at
treewidth $2$. In the same vein, we do not know where between treewidth
$2$ and pathwidth $6$ the weak hardness truly begins: our
\Partition{} gadgets require width $6$, and it is plausible that a
tighter construction, or conversely a stronger algorithmic technique,
could close this gap.

On the hardness side, our planar reduction uses only three distinct
edge weights ($1$, $4$, and $13$), and it is natural to ask whether
this can be pushed to two. Bi-valued metric repair is an appealing
minimal setting: with a single weight value the problem is trivial, so
two values is the first case in which broken cycles can exist at all,
and the structure of broken cycles reduces to the exact same problem as \textsc{Length-Bounded Multicut}. Resolving the
complexity of bi-valued \MR{} on planar graphs would sharpen our
understanding of exactly how much weight structure is needed to encode hardness. 

More broadly, our results invite a systematic look at problems whose
defining combinatorial objects are determined jointly by the weights
and the topology of the input graph. Metric repair is a hitting-set
problem, but the family being hit, the broken cycles, is not a
structural feature of the graph alone: which cycles are broken, and
which of their edges are light, depends entirely on the interaction
between the weight function and the cycle space. This places the
problem outside the reach of the standard structural meta-theorems.
Courcelle's theorem and its relatives handle properties definable from
the graph structure (possibly with weights entering only through an
objective to be optimized), while purely numeric problems such as
\Partition{} are governed by the arithmetic of the weights with
essentially no graph structure at all. Metric repair sits genuinely in
between and our results quantify this.
